\def\compactify{\itemsep=0pt \topsep=0pt \partopsep=0pt \parsep=0pt}
\newtheorem{theorem}{Theorem}
\newtheorem{lemma}[theorem]{Lemma}
\newtheorem{corollary}[theorem]{Corollary}
\newtheorem{claim}[theorem]{Claim}
\theoremstyle{definition}
\theoremstyle{remark}
\newtheorem{remark}[theorem]{Remark}
\newcommand{\I}{\mathbbm{1}}
\newcommand{\calE}{{\cal E}}
\newcommand{\calC}{{\cal C}}
\newcommand{\Exp}{\mathbb{E}}
\newcommand{\OpenFrame}{\rule{0pt}{12pt} \hrule height 0.8pt \rule{0pt}{1pt} \hrule height 0.4pt \rule{0pt}{6pt}}
\newcommand{\CloseFrame}{\vskip -8pt\rule{0pt}{1pt}\hrule height 0.4pt \rule{0pt}{1pt} \hrule height 0.8pt \rule{0pt}{12pt}}
\newcommand{\given}{\;\mid\;}
\newcommand{\fpos}{f^{+}}
\newcommand{\fneg}{f^{-}}
\newcommand{\fno}{f^{\circ}}
\newcommand{\ecost}{e.cost}
\newcommand{\elp}{e.lp}
\DeclareMathOperator{\lp}{lp}
\newcommand{\ppp}{\ensuremath{(+,+,+)}}
\newcommand{\ppm}{\ensuremath{(+,+,-)}}
\newcommand{\pmm}{\ensuremath{(+,-,-)}}
\newcommand{\mmm}{\ensuremath{(-,-,-)}}
\newcommand{\mmn}{\ensuremath{(-,-,\varnothing)}}
\newcommand{\pmn}{\ensuremath{(+,-,\varnothing)}}
\newcommand{\ppn}{\ensuremath{(+,+,\varnothing)}}
\newcommand{\fpk}{f^+_3}
\newcommand{\fmk}{f^-_3}
\newcommand{\fnk}{f^\circ_3}
\newcommand{\fp}{f^+}
\newcommand{\fm}{f^-}
\newcommand{\Cppp}{\mathcal{C}_{\ppp}}
\newcommand{\Cppm}{\mathcal{C}_{\ppm}}
\begin{document}

\title{Near Optimal LP Rounding Algorithm for Correlation Clustering\\on Complete and
Complete $k$-partite Graphs}
\author{Shuchi Chawla \\ University of Wisconsin-Madison \and Konstantin Makarychev\\ Microsoft Research\and Tselil Schramm \\ UC Berkeley \and Grigory Yaroslavtsev \\ University of Pennsylvania}
\date{\textbf{\small{}}}

\maketitle

\begin{abstract}
We give new rounding schemes for the standard linear programming relaxation of
the correlation clustering problem, achieving approximation factors almost
matching the integrality gaps:
\begin{itemize}
\item For complete graphs our approximation is $2.06 - \varepsilon$, which
	almost matches the previously known integrality gap of $2$.
\item For complete $k$-partite graphs our approximation is $3$. We also show a
	matching integrality gap.
\item For complete graphs with edge weights satisfying triangle inequalities
	and probability constraints, our approximation is $1.5$, and we show an
	integrality gap of $1.2$.
\end{itemize}
Our results improve a long line of work on approximation algorithms for
correlation clustering in complete graphs, previously culminating in a ratio of
$2.5$ for the complete case by Ailon, Charikar and Newman (JACM'08).  In the
weighted complete case satisfying triangle inequalities and probability
constraints, the same authors give a $2$-approximation; for the bipartite case,
Ailon, Avigdor-Elgrabli, Liberty and van Zuylen give a $4$-approximation
(SICOMP'12).
\end{abstract}

\setcounter{page}{0}
\thispagestyle{empty} %%suppress the first page number (which is "0")
\pagebreak

\tableofcontents

\pagebreak

\section{Introduction}
We study the \textit{correlation clustering} problem -- given inconsistent
pairwise similarity/dissimilarity information over a set of objects, our goal
is to partition the vertices into an \textit{arbitrary} number of clusters
that match this information as closely as possible.
The task of clustering is made interesting by the fact that the similarity
information is inherently noisy. For example, we may be asked to cluster $u$
and $v$ together, and $v$ and $w$ together, but $u$ and $w$ separately. In this
case there is no clustering that matches the data exactly. The optimal
clustering is the one that differs from the given constraints at the fewest
possible number of pairs, and it may use anywhere between one and $n$ clusters.
In some contexts this problem is also known as \textit{cluster editing}: given a graph
between objects where every pair deemed similar is connected by an edge, add or
remove the fewest number of edges so as to convert the graph into a collection
of disjoint cliques.

Correlation clustering is quite different from other common clustering
objectives in that the given data is qualitative (similar versus
dissimilar pairs) rather than quantitative (e.g. objects embedded in a
metric space). As such, it applies to many different problems that
arise in machine learning, biology, data mining and other areas. From
a learning perspective, correlation clustering is essentially an
agnostic learning problem: the goal is to fit a classifier from a
certain concept class (namely all clusterings) as best as possible to
noisy examples (namely the pairwise similarity information). The
correlation clustering objective has been successfully employed for a
number of learning problems, for example: coreference
resolution~\cite{CR01, CR02, MW03},
where the goal is to determine which references in a news article
refer to the same object; cross-lingual link detection~\cite{VZ07},
where the goal is to find news articles in different languages that
report on the same event; email clustering by topic or relevance; and
image segmentation~\cite{Wirth10}. In biology, the problem of clustering gene
expression patterns can be cast into the framework of correlation
clustering~\cite{DSY99, A04}. Another application, arising in data
mining, is that of aggregating inconsistent clusterings taken from
different sources~\cite{Fil03}. In this setting, the cost of an aggregate
clustering is the sum over pairs of objects of the fraction of
clusterings that it differs from. This special case of correlation
clustering is known as the \textit{consensus clustering} problem.

In many of the above applications, we have access to a binary classifier that
takes in pairs of objects and returns a ``similar'' or ``dissimilar'' label. We
can interpret this information in the form of a complete graph with edges
labeled ``$+$'' (denoting  similarity) and ``$-$'' (denoting dissimilarity).
The correlation clustering problem can then be restated as one of producing a
clustering such that most of the ``$+$'' edges are within clusters and most
``$-$'' edges cross different clusters. In some cases, it may not be possible
to compare all pairs of objects, leaving ``missing'' edges so that the underlying
graph is not complete. However, correlation clustering on general
graphs is equivalent to the multicut problem~\cite{DEFI06}, and obtaining any
constant factor approximation is Unique-Games hard~\cite{CKKRS06}.\footnote{The
best known algorithms due to Charikar, Guruswami, Wirth~\cite{CGW05} and  
Demaine, Emanuel, Fiat, Immorlica~\cite{DEFI06}
give an $O(\log n)$ approximation.}
Still, it is possible to get approximations for some practical cases with
missing edges; for example, when the underlying graph is a complete $k$-partite graph.
Ailon, Avigdor-Elgrabli, Liberty, and van
Zuylen~\cite{AALZ12} give several applications of complete bipartite
correlation clustering.

Since its introduction a decade ago by Bansal, Blum, and Chawla~\cite{BBC04},
correlation clustering has gained a lot of prominence within the theory and
learning communities (see, e.g., the survey by Wirth~\cite{Wirth10}, and
references therein) and has become one of the textbook examples in the design
of approximation algorithms (Williamson and Shmoys~\cite{WS11}
consider the correlation clustering problem with the maximization objective).
Bansal et al. gave the first constant factor approximation algorithm for the
problem on complete graphs. The factor
has since then been improved several times, culminating in a factor of~$2.5$
for complete graphs due to Ailon, Charikar, and Newman~\cite{ACN08}, which
relies on a natural LP formulation of the problem. On the other hand, the
problem is known to be APX-hard~\cite{DEFI06}, and the best known integrality
gap of the LP is $2$~\cite{CGW05}, leaving a 20\% margin for improvement.

\subsection{Our Results}
In this paper, we nearly close the gap between
the approximation ratio and the integrality gap for complete graphs and complete
$k$-partite graphs:
For the correlation clustering problem on complete graphs, we obtain a
$(2.06 - \varepsilon)$-approximation for some fixed $\varepsilon$, nearly matching an integrality gap of $2$~\cite{CGW05}.
For the correlation clustering problem on complete $k$-partite graphs,
we obtain a $3$-approximation and exhibit an integrality gap instance with
a gap of $3$.  The previously best known algorithm for the bipartite variant
of the problem due to Ailon et al.~\cite{AALZ12} gives
a $4$-approximation.

\begin{theorem}\label{thm:main-intro}
There is a deterministic polynomial-time algorithm for the Correlation
Clustering Problem that gives a $(2.06-\varepsilon)$-approximation for complete
graphs, where $\varepsilon$ is some fixed constant smaller than $0.01$, and a
$3$-approximation for complete $k$-partite graphs.
\end{theorem}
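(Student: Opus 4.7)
The plan is to round the standard LP relaxation in which variables $x_{uv} \in [0,1]$ represent fractional distances satisfying the triangle inequality and the objective is $\sum_{(u,v)\in E^+} x_{uv} + \sum_{(u,v)\in E^-} (1-x_{uv})$. After solving the LP, I would apply a pivot-based rounding scheme in the spirit of Ailon--Charikar--Newman: repeatedly pick a pivot $u$ uniformly at random from the remaining vertices, form a cluster around $u$ by including each remaining vertex $v$ with probability $f(x_{uv})$ for a carefully chosen function $f : [0,1] \to [0,1]$, and recurse. The new ingredient driving the improvement beyond the $2.5$ barrier would be a preprocessing step that inspects the LP solution locally around each vertex and decides whether to treat that vertex by the randomized pivot (in which case one selects $f$ tailored to the typical triangle type) or to carve out its LP ``ball'' as a deterministic cluster (or force it to be a singleton) when the LP neighborhood looks essentially integral.

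The analysis reduces to a charging argument over triples. Any miscast pair $(u,v)$ can be attributed to the pivot step in which one of $u$, $v$, or some third vertex $w$ was chosen as pivot, so the expected rounding cost decomposes as a sum over triangles $\{u,v,w\}$. It suffices then to show that, for every triangle type — $\ppp$, $\ppm$, $\pmm$, $\mmm$ — the ratio between the expected cost contributed by that triangle and its LP cost is at most $\alpha = 2.06 - \varepsilon$. The main obstacle is choosing $f$ (and the preprocessing rule) so that the worst case over all triangle types is minimized; the $\ppm$ triangles with one small and one large LP distance are the tightest and are what forces $\alpha \geq 2$. I would set up $f$ as a piecewise-defined function depending on a few thresholds, write the per-triangle bound as a constrained optimization, and tune the thresholds by a finite numerical or symbolic calculation to push $\alpha$ below $2.06$.

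For the complete $k$-partite variant, the LP is the same but without any $+/-$ label on pairs in the same part; the rounding must still produce a partition, and those ``phantom'' pairs contribute $\min(x_{uv}, 1-x_{uv})$-type terms implicitly through the triangle inequalities. The adaptation is to modify the pivot step so that the cluster around $u$ excludes vertices in the same part of $u$ (since their in-cluster status is fixed not by the LP objective but by the structure of the rounding) and to reanalyze each triangle type in the partite setting — now classified by the combination of $+$, $-$, and same-part edges, e.g.\ $\ppn$, $\pmn$, $\mmn$. The hardest configurations are those with a same-part pair and two $-$ edges, and I expect the analysis to yield a tight bound of $3$ matching the integrality gap instance constructed elsewhere in the paper.

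Finally, because the objective and the cost of each pivot step decompose over vertex pairs and the total number of random choices is polynomial, the randomized procedure can be derandomized by the method of conditional expectations, yielding the claimed deterministic polynomial-time algorithm. The principal technical difficulty throughout is the triangle case analysis: the constant $2.06 - \varepsilon$ emerges from an optimization over the family of rounding functions together with the preprocessing threshold, and verifying the partite bound requires checking that the extremal triangles in the analysis match the ones realizing the $3$ gap.
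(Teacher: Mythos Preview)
Your high-level framework---LP rounding via random pivots, a triple-based charging argument, and derandomization by conditional expectations---matches the paper. But the specific mechanism you propose for breaking the $2.5$ barrier is not the one that works, and your $k$-partite modification fails outright.

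For the complete case, the paper's key idea is \emph{not} a preprocessing step that carves out near-integral balls. It is to use \emph{different} rounding functions for positive and negative edges: when $w$ is the pivot, the cut probability is $f^+(x_{uw})$ if $(u,w)\in E^+$ and $f^-(x_{uw})$ if $(u,w)\in E^-$. With a single function $f$ applied uniformly, the triangle analysis cannot simultaneously satisfy the constraints coming from $\pmm$ triangles (which force $f$ to be large, cf.\ Corollary~\ref{cor:fmbound}) and from $\ppp$ triangles (which force $f$ to be small, cf.\ Lemma~\ref{lem:fp_upper}); this tension is precisely why Ailon--Charikar--Newman stall at $2.5$. The paper takes $f^-(x)=x$ and a piecewise-quadratic $f^+$ with thresholds $a=0.19$, $b=0.5095$, and the worst triangles are the $\ppm$ triangles with lengths $(x,x,2x)$, not ``one small and one large'' as you suggest. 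Your preprocessing idea is vague and, absent the $f^+/f^-$ split, there is no indication it would close the gap.

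For the $k$-partite case, your proposal to have the pivot's cluster \emph{exclude} vertices in the pivot's own part amounts to setting the cut probability on neutral (same-part) pairs to $1$. This fails on $\ppn$ triangles with a short neutral edge: if $a,b<\tfrac13$ and $c$ is small, then $ALG=2$ while $LP=a+b$ can be arbitrarily small, so the ratio blows up. The paper instead introduces a third rounding function $f^\circ(x)=\min(\tfrac32 x,1)$ for neutral pairs, and the most delicate case in the analysis is $\pmn$, not $\mmn$.
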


\begin{table}[t]
	\begin{center}
		\begin{tabular}{|l || c | c| c| c|}
			\hline
			\multicolumn{5}{|c|}{\bf Approximation Algorithms for Correlation Clustering} \\
			\hline
			& \textsc{Previous Factor}
			& \textsc{Our Factor}
			& \textsc{Integrality Gap}
			& \textsc{Limitation}

			\\

			\hline
			Complete
			& $\approx 10^4$~\cite{BBC04}, 4~\cite{CGW05}, 2.5~\cite{ACN08, ZHJW07}
			& $\approx 2.06$, Thm~\ref{thm:main-intro}
			& 2~\cite{CGW05}
			& 2.025, Thm~\ref{thm:limitation-complete}

			\\
			\hline
            Triangle Inequality
			& 3~\cite{GMT07}, 2~\cite{ACN08,ZHJW07}
			& 1.5, Sec~\ref{sec:triangle-inequalities}
			& 1.2, Thm~\ref{thm:integrality-gap-triangle-inequalities}
			& 1.5, Sec~\ref{sec:triangle-inequalities}

			\\
			\hline
			Bipartite
			& 11 ~\cite{A04},4~\cite{AALZ12}
			& 3, Thm~\ref{thm:main-intro}
			& 3, Thm~\ref{thm:integrality-gap-bipartite}
			& ---
			\\
			\hline
			$K$-partite
			& ---
			& 3, Thm~\ref{thm:main-intro}
			& 3, Thm~\ref{thm:integrality-gap-bipartite}
			& ---
		\\
			\hline
		\end{tabular}
	\end{center}
	\vskip -10 pt
	\caption{Previous and our approximation factors, integrality gaps and limitations of our approach.}
	\label{table:results}
	\vskip -10 pt
\end{table}

We also study a weighted variant of the problem. In this variant, each edge has
a positive weight $\lambda^+_{uv}$ and a negative weight $\lambda^-_{uv}$, the
goal is to minimize the total weight of violated edges (for details see
Section~\ref{sec:weighted}).  We show that Weighted Correlated Clustering is
equivalent to the unweighted Correlated Clustering on complete graphs if
$\lambda^+_{uv} + \lambda^-_{uv} = 1$ for every $(u,v)\in E$.  Gionis, Mannila
and Tsaparas~\cite{GMT07} introduce a natural special case of this problem
in which the negative weights satisfy the triangle inequality
(for every $u,v,w$ it holds that $\lambda^-_{uv} \le \lambda^-_{uw} +
\lambda^-_{wv}$).  For this problem -- Weighted Correlation Clustering with
Triangle Inequalities -- we give a $1.5$-approximation algorithm, improving on
the $2$-approximation of Ailon et al.~\cite{ACN08}  (see Theorem~\ref{thm:weighted-triang-ineq} in
Section~\ref{sec:weighted}). Our proof of the last
result is computer assisted.

The main technical contribution of our paper is an approach towards obtaining
a tight rounding scheme given an LP solution. At a high level our algorithm
is similar to that of Ailon et al.~\cite{ACN08}, but we perform the actual rounding decisions
in a novel way, using carefully designed functions of the LP solution to get
rounding probabilities for edges in the graph, allowing us to obtain a
near-optimal approximation ratio.
We emphasize that although we employ a lengthy and complicated analysis
to prove that our rounding scheme achieves a $(2.06 - \varepsilon)$-approximation,
our algorithm itself is very simple and runs in time $O(n^2)$ given the LP solution.
The linear programming relaxation that we use has been studied very extensively and heuristic approaches have been developed for solving it~\cite{DSW10}.

 We demonstrate our technique for correlation clustering in complete
graphs, in complete $k$-partite graphs, and in the special case of weighted edges
satisfying triangle inequality constraints.
In each case, we obtain significant improvements over the previously best known
results, and nearly or exactly match the integrality gap of the LP.
When we are unable to match the integrality gap, we prove a lower bound on the
ratio that may be achieved by any functions within our rounding scheme.  Our
results and a comparison with the previous work are summarized in
Table~\ref{table:results}.

\subsection{Our Methodology}

As is the case for many graph partitioning problems, the correlation clustering
objective can be captured in the form of a linear program over variables that
encode lengths of edges. A long edge signifies that its endpoints should be
placed in different clusters, and a short edge signifies that its endpoints
should be in the same cluster. For consistency, edge lengths must satisfy the
triangle inequality.

A natural approach to rounding this relaxation is to interpret each edge's
length, $x_{uv}$, as the probability with which it should be cut. The challenge
is to ensure consistency. For example, consider a triangle with two positive
edges and one negative edge where the negative edge has LP length
$\tfrac{1}{2}$. If we first cut the negative edge with probability $\tfrac{1}{2}$, then
in order to return a consistent clustering we are forced to cut one of the positive edges. In
this way, an independent decision to cut or not cut one edge may force a
decision on a different edge, resulting in ``collateral damage.'' Ailon,
Charikar, and Newman~\cite{ACN08} give a simple rounding
algorithm and a charging scheme that cleanly bounds the cost of this collateral
damage. Their algorithm picks a random vertex $w$ in the graph and rounds every
edge $(w,u)$ incident on this ``pivot'' with probability equal to the length of
the edge; vertices $u$ corresponding to the edges $(w,u)$ that are not cut by
this procedure form $w$'s cluster; this cluster is then removed from the graph,
and the algorithm recurses on the remaining graph. This approach gives the best
previously known approximation ratio for the correlation clustering problem on
complete graphs, a factor of $2.5$.

Our main technical contribution is a more subtle treatment of the probability
of cutting an edge:  rather than cutting edge $(u,v)$ with probability
$x_{uv}$, we cut $(u,v)$ with probability given by some function $f(x_{uv})$
(this idea was previously used by Ailon in his algorithm for ranking
aggregation~\cite{ailon2010}).
In a departure from all of the other LP-rounding algorithms for correlation
clustering, we use different rounding functions for positive and
negative edges.
Though it may at first be surprising that the latter distinction can be
helpful, we remark that positive and negative constraints do not behave
symmetrically. For example, in a triangle with two positive edges and one
negative edge, the negative edge forces an inconsistency: any clustering of
this triangle must violate at least one constraint. On the other hand, in a
triangle with two negative edges and one positive edge, there is a valid
clustering that does not violate any constraints. Thus, we see that the
positive and negative edges behave differently, and we prove that rounding
positive and negative edges of the same LP length with different probabilities
gives a correspondingly better approximation ratio.

In some cases, this distinction leads to results that run counter to our
intuition. One might expect that negative edges should be cut with higher
probability than positive edges. However, this is not always the
case.  It turns out that it helps to cut long positive edges with probability $1$,
because we can charge them to the LP. On the other hand, it pays to be careful
about cutting long negative edges, because this might cause too much collateral
damage to other edges.

Our methodology for selecting the rounding functions $f^+$ and $f^-$ is interesting in its
own right. By regarding the cost of the algorithm on each kind of triangle as a
polynomial in $x_{uv}$, $x_{vw}$, $x_{uw}$ and in $f^{\pm}(x_{uv})$, $f^{\pm}(x_{uv})$,
$f^{\pm}(x_{uv})$,  then characterizing these multivariate polynomials, we are able to
obtain analytic upper and lower bounds on $f^+$ and $\fm$. While
this does not force our choices of $\fp$ or $\fm$, it suggests natural candidate functions
that can then be further analyzed. This same worst-case polynomial
identification and bounding approach yields lower bounds on the best possible
approximation ratio that can be achieved by a similar algorithm.

\subsection{Other Related Work}

As mentioned above, there has been a series of works giving constant
factor approximations for correlation clustering in complete
graphs~\cite{BBC04, CGW05, ACN08}.  A modified version of the Ailon et
al.~\cite{ACN08} $3$-approximation algorithm can be used as a basis for parallel
algorithms~\cite{CDK14}. Van Zuylen et al.~\cite{ZHJW07} showed that the
$2.5$-approximation of Ailon et al. can be derandomized without any loss in
approximation factor. Correlation clustering on complete bipartite graphs was
first studied by Amit~\cite{A04}, who presents an $11$-approximation. This was
subsequently improved to a $4$-approximation by Ailon et al.~\cite{AALZ12}.
For complete graphs with weights satisfying triangle inequalities a
3-approximation was obtained by Gionis et al.~\cite{GMT07} and a
2-approximation by Ailon et al.~\cite{ACN08}. These prior works are summarized
in Table~\ref{table:results}.  In general graphs, the problem can be
approximated to within a factor of $O(\log n)$, and because it is equivalent to
the multicut problem, this is suspected to be the best possible~\cite{CGW05,
DEFI06}.

Bansal et al. \cite{BBC04} also studied an alternative version of the problem
in which the objective is to approximately maximize the number of edges that
the clustering gets correct: that is, the number of ``$+$'' edges inside
clusters and ``$-$'' edges going across clusters. They noted that this version
can be trivially approximated to within a factor of $2$ in arbitrary weighted
graphs, and presented a polynomial time approximation scheme for the version on
complete graphs. Subsequently, Swamy~\cite{Swamy04} and Charikar, Guruswami and
Wirth~\cite{CGW05} developed an improved SDP-based approximations for this
``MaxAgree'' problem on arbitrary weighted graphs. MaxAgree is known to be hard
to approximate within a factor of $80/79$ for both the unweighted (complete
graph) and weighted versions~\cite{CGW05, Tan08}.

A number of other variants of the correlation clustering problem have
been studied. Giotis and Guruswami~\cite{Giotis06} and Karpinski et
al.~\cite{KS09} studied the variant where the solution is stipulated
to contain only a few (constant number of) clusters, and under that
constraint presented polynomial time approximation schemes. Mathieu,
Sankur, and Schudy~\cite{MSS10} studied the online version of the
problem and give an $O(n)$-competitive algorithm, which is also the
best possible within constant factors. Mathieu and Schudy~\cite{MS10}
introduced a semi-random model, in which every
graph is generated as follows: start with an arbitrary planted partitioning
of the graph into clusters, set labels consistent with
the planted partitioning on all edges, then independently pick every edge into a random subset
of \textit{corrupted} edges with probability $p$,
and let the adversary arbitrarily change labels on the corrupted edges.
Mathieu and Schudy~\cite{MS10} showed how to get $1+o(1)$ approximation under very mild assumptions on $p$.
Their result was extended to other semi-random models in~\cite{MMV, Yudong}.
The problem was studied in another interesting stochastic model in~\cite{AL09}.

\subsection{Organization}

%%\enlargethispage{2pt}
%%\vskip -2pt

In \prettyref{sec:alg}, we formally describe and generalize the algorithm and
analytical framework of Ailon et al.~\cite{ACN08} to accomodate our algorithm and analysis, and lay
the groundwork for our analysis with observations about algorithms within this
framework.
In \prettyref{sec:choosing_f}, we give the analysis that leads to our choices
of rounding functions. \prettyref{sec:pic_proofs} gives an overview of the analysis
of the triples for complete correlation clustering, as well as pictoral proofs of the main
theorem. In \prettyref{sec:gaps}, we introduce new integrality gaps for the $k$-partite
case and the weighted case with triangle inequalities; in \prettyref{sec:lbd} we
prove a lower bound of $2.025$ on the approximation ratio of any algorithm
within our framework.
\prettyref{sec:weighted} contains a summary of our results for the weighted cases of the problem.
In \prettyref{sec:derand}, we give a derandomization of our algorithm.
\prettyref{sec:kpart} and \prettyref{app:complete} contain the analytical
proofs of the $3$-approximation for $k$-partite correlation clustering and the
$(2.06-\varepsilon)$-approximation for the complete case respectively.

\section{Approximation Algorithm}\label{sec:alg}
In this section, we present an approximation algorithm for the Correlation
Clustering Problem that works both for complete graphs and complete $k$-partite
graphs. For the weighted case of the problem, we refer the reader to Section~\ref{sec:weighted}.
We denote the set of positive edges by $E^+$ and the set of negative edges by $E^-$.
The algorithm is based on the approach of Ailon et al.~\cite{ACN08}. It iteratively finds
clusters and removes them from the graph. Once all vertices are clustered, the
algorithm outputs all found clusters and terminates.

Initially, the algorithm marks all vertices as active. At step $t$, it picks a
random pivot $w$ among active vertices, and then adds each active vertex $u$ to
$S_t$ with probability $(1 - p_{uw})$ independently of other vertices. Then,
the algorithm removes the cluster $S_t$ from the graph and marks all vertices in
$S_t$ as inactive.  The probability $(1- p_{uw})$ depends on the LP solution
and the type of the pair $(u,w)$: we set $p_{uw}=\fpos(x_{uw})$, if $(u,w)$ is
a positive edge; $p_{uw} = \fneg(x_{uw})$, if $(u,w)$ is a negative edge; and
$p_{uw} = \fno(x_{uw})$, if there is no edge between $u$ and $w$. Here,
$x_{uv}$ is the LP variable corresponding to the pair $(u,v)$ (we describe the
LP in a moment) and $\fpos$, $\fneg$, and $\fno$ are special functions which we
will define later. Below we give pseudo-code for the algorithm.

\OpenFrame

\noindent \textbf{Input:} Graph $G$, $LP$ solution $\{x_{uv}\}_{u,v\in V}$.

\noindent \textbf{Output:} Partitioning of vertices into disjoint sets.

\begin{itemize}\compactify
\item Let $V_0=V$ be the set of active vertices; let $t=0$.
\item while ($V_t \neq \varnothing$)
\begin{itemize}\compactify
\item Pick a pivot $w_t\in V_t$ uniformly at random.
\item For each vertex $u\in V_t$, set $p_{uw}$:
\begin{equation}\label{eq:def-p}
p_{uw}=
\begin{cases}
\fpos (x_{uw}),&\text{if } (u,v) \text{ is a positive edge;}\\
\fneg (x_{uw}),&\text{if } (u,v) \text{ is a negative edge;}\\
\fno (x_{uw}),&\text{if there is no edge between $u$ and $v$.}
\end{cases}
\end{equation}
\item For each vertex $u\in V_t$, add $u$ to $S_t$ with probability $(1 - p_{uw})$ independently of all other vertices.
\item Remove $S_t$ from $V_t$ i.e., let $V_{t+1}=V_t\setminus S_t$. Let $t=t+1$.
\end{itemize}
\item Output $S_0,\dots, S_{T}$, where $T$ is the index of the last iteration of the algorithm.
\end{itemize}
\CloseFrame

This algorithm is probabilistic. We show how to derandomize it in \prettyref{sec:derand}.
The main new ingredient of our algorithm is a procedure for picking the
probabilities $p_{uw}$.  To compute these probabilities, we use the standard LP
relaxation introduced by~\cite{CGW05}. We first formulate an integer program for
Correlation Clustering. For every pair of vertices $u$ and $v$, we have a
variable $x_{uv}\in\{0,1\}$ that is equal to the distance between $u$ and $v$
in the ``multicut metric'': $x_{uv}=0$ if $u$ and $v$ are in the same cluster;
and $x_{uv}=1$ if $u$ and $v$ are in different clusters. Variables $x_{uv}$
satisfy the triangle inequality constraints (\ref{lp:triang}). They are also
symmetric, i.e. $x_{uv}=x_{vu}$. Instead of writing the constraint
$x_{uv}=x_{vu}$, we have only one variable for each edge $(u,v)$.
We refer to this variable as $x_{uv}$ or $x_{vu}$. The IP objective
is to minimize the number of violated constraints. We write it as follows:
$\min \sum_{(u,v)\in E^+}  x_{uv} +  \sum_{(u,v)\in E^-} (1 - x_{uv})$.
Note that a term $x_{uv}$ in the first sum equals 1 if and only if the
corresponding positive edge $(u,v)$ is cut; and a term $(1-x_{uv})$ in the
second sum equals 1 if and only if the corresponding negative edge $(u,v)$ is
contracted. Thus, this integer program is exactly equivalent to the Correlation
Clustering Problem.
We relax the integrality constraints $x_{uv}\in \{0, 1\}$ and obtain the
following LP.
%%\OpenFrame
\begin{align}
\min \sum_{(u,v)\in E^+} &x_{uv} + \sum_{(u,v)\in E^-}(1-x_{uv})\label{lp:obj}\\
x_{uv}+x_{vw}&\geq x_{uw}&\text{for all } u,v,w\in V\label{lp:triang}\\
x_{uu}&=0&\text{for all } u\in V\\
x_{uv}&\in [0,1] &\text{for all } u,v\in V
\end{align}
%%\CloseFrame

The approximation ratio of the algorithm depends on the set of functions
$\{\fpos, \fneg, \fno\}$ we use for rounding. We explain how we pick
these functions in \prettyref{sec:choosing_f}. In
Section~\ref{sec:kpart} and Appendix~\ref{app:complete}, we analyze the specific
rounding functions that give improved approximation guarantees for complete
graphs and complete $k$-partite graphs. We prove the following theorems.

\begin{theorem}[See \prettyref{app:complete}]\label{thm:complete}
For complete graphs, the approximation algorithm with rounding functions
\[
f^+(x) = \begin{cases} 0, &\text{if } x < a \\
\Bigl(\frac{x-a}{b-a}\Bigr)^2,&\text{if } x\in [a,b]\\
1 &\text{if } x \ge b\end{cases},
\qquad
f^-(x) = x,
\]
gives a $(2.06 - \varepsilon)$-approximation for $a=0.19$ and
$b=0.5095$, and a constant $\varepsilon$ with $0<\varepsilon<0.01$.
\end{theorem}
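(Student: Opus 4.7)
The plan is to invoke the pivot-based charging framework developed in Section~\ref{sec:alg}, which (in the standard ACN style) writes the expected cost of the algorithm as a sum over ordered triples $(u,v,w)$: whenever $w$ is chosen as pivot while $u,v$ are still active, the triple contributes a local ALG cost that depends only on the signs of the three edges and on $x_{uv},x_{vw},x_{uw}$ through the functions $\fp,\fm$. The LP objective admits a compatible decomposition into a sum of triple-wise LP costs. Therefore it suffices to establish, for every triangle type and every LP configuration $(x_{uv},x_{vw},x_{uw})$ satisfying the triangle inequalities (\ref{lp:triang}), a local inequality of the form
\[
\mathrm{ALG}(\triangle) \;\leq\; (2.06-\varepsilon)\cdot \mathrm{LP}(\triangle).
\]

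For complete graphs there are only four triangle types up to symmetry: \ppp, \ppm, \pmm, \mmm. For each type I would write $\mathrm{ALG}(\triangle)$ and $\mathrm{LP}(\triangle)$ explicitly using the identity $\fm(x)=x$ and the given piecewise form of $\fp$. Substituting yields a piecewise polynomial of total degree at most three in $(x_{uv},x_{vw},x_{uw})$ on each of the subregions determined by placing each edge length into one of the three pieces of $\fp$'s domain, namely $[0,a)$, $[a,b]$, $(b,1]$. After exploiting the symmetry of the positive edges within each triangle type, there remain a bounded (but substantial) collection of subregions per triangle type, each a polytope carved out of $[0,1]^3$ by the relevant triangle inequalities and the interval cutoffs at $a$ and $b$.

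On each such subregion I would certify the local inequality by polynomial optimization. Since $\mathrm{ALG}(\triangle)-\alpha\cdot\mathrm{LP}(\triangle)$ has low degree, its maximizers over the subregion lie either at a finite set of critical points of a small polynomial system or on one of the lower-dimensional faces (triangle-inequality faces or the interval-boundary faces at $a,b$); on those faces the problem drops in dimension and is treated recursively by the same method. The pictorial/worst-case triple identification indicated in Section~\ref{sec:choosing_f} and Section~\ref{sec:pic_proofs} is essentially what drives the choice $a=0.19$, $b=0.5095$: these constants are tuned so that several of the worst subregions balance at a ratio just below $2.06$, while no other subregion exceeds it, leaving a small positive slack $\varepsilon<0.01$.

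The main obstacle is precisely this case explosion together with the tight numerical margin. Because the target ratio $2.06-\varepsilon$ is close to the integrality gap $2$, the subcases have to be tuned almost perfectly: even a small error in any one polynomial inequality would destroy the improvement over the prior $2.5$ bound. I therefore expect the verification to be computer-assisted, using interval arithmetic or resultant/Sturm-style symbolic certification to rigorously check each of the per-subregion polynomial inequalities. Once every subregion of every triangle type is certified, summing the triple-wise bounds via the framework of Section~\ref{sec:alg} yields the global $(2.06-\varepsilon)$-approximation guarantee claimed in the theorem.
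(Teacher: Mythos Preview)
Your high-level plan is right and matches the paper's: invoke Lemma~\ref{lem:alpha} to reduce to the per-triple inequality $ALG(uvw)\le\alpha\,LP(uvw)$, then verify it for the four triangle types \ppp, \ppm, \pmm, \mmm. Where you diverge is in the verification strategy. You propose generic polynomial optimization over each subregion (interior critical points plus recursive boundary analysis) and anticipate computer assistance. The paper instead proves a structural reduction first (Lemma~\ref{lem:tight-triangle}): because $\fp$ is monotone piecewise convex and $\fm$ is monotone concave, after the change of variable $x_{uv}=g(p_{uv})$ the difference $\alpha\,LP-ALG$ is concave in each $p_{uv}$, so its minimum over the metric polytope is always attained either where a triangle inequality is tight or at the finitely many corner configurations with edge lengths in $\{0,a,b,1\}$. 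This single observation eliminates all interior critical points and drops every case to two free variables. The paper then exploits further symmetry---for instance, for \ppm\ with $z=x+y$ the worst case has the two positive edges of equal length (Corollary~\ref{cor:equal_sides}), and similarly for \ppp\ (Corollary~\ref{cor:ppp_worst})---to reduce each remaining case to a univariate polynomial whose real roots are located explicitly. The result is a fully analytical proof with no interval arithmetic. Your route would succeed, but it is heavier and less explanatory; the convexity lemma is precisely what keeps the analysis hand-checkable.

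Two small corrections. First, the local polynomials have total degree up to four, not three, since $\fp$ is quadratic on $[a,b]$ and the expressions contain products $\fp(x)\fp(y)$; the paper works with these degree-four polynomials explicitly (Lemmas~\ref{lem:c_positive} and~\ref{lem:neg_c}). Second, for \mmm\ and \pmm\ the paper obtains the stronger bounds $ALG\le LP$ and $ALG\le 2\,LP$ directly by an elementary nonnegative factorization, so only \ppp\ and \ppm\ require the reduction machinery at all.
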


\begin{theorem}[See \prettyref{sec:kpart}]\label{thm:kpart}
For complete $k$-partite graphs, the approximation algorithm with rounding functions
\[
f^+_3(x) = \begin{cases} 0, &\text{if } x < \frac{1}{3} \\
 1 & \text{if }x \ge \frac{1}{3} \end{cases},\qquad
f^-_3(x) = x,\qquad
\fnk(x) = \begin{cases} \frac{3}{2}x &\text{if } x \le \frac{2}{3} \\
1 &\text{if } x > \frac{2}{3} \end{cases},
\]
gives a 3-approximation.
\end{theorem}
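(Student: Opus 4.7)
The plan is to apply the triple-by-triple analysis framework introduced in Section~\ref{sec:alg} (which generalizes the approach of Ailon et al.~\cite{ACN08}), which reduces bounding the expected cost of the algorithm against the LP cost to verifying a local inequality on every unordered triple of vertices. For each triple $T=\{u,v,w\}$, one computes the expected algorithmic cost contribution from $T$, conditional on a vertex of $T$ being the first pivot among $T$ (at which moment the pivot is uniform over the three vertices of $T$), and compares it against the LP cost associated with $T$. It then suffices to show that for every triple type that can arise in a complete $k$-partite graph, the ratio of algorithmic to LP cost is at most $3$.

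First, I would enumerate the triple types. Since a non-edge indicates that two vertices lie in the same part and ``same part'' is transitive on three vertices, triples of the form $(+,\varnothing,\varnothing)$ and $(-,\varnothing,\varnothing)$ cannot occur. The remaining types are $\ppp$, $\ppm$, $\pmm$, $\mmm$, $\ppn$, $\pmn$, $\mmn$, and $(\varnothing,\varnothing,\varnothing)$. The last type contains no edges, so it contributes zero algorithmic cost (the algorithm's objective only counts violated edges).

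Second, for each remaining type I would derive closed-form expressions for both sides. On the algorithmic side, with pivot $w$ fixed, the cluster membership of $u$ and $v$ is determined by independent Bernoulli trials with success probabilities $1-p_{uw}$ and $1-p_{vw}$, where each $p$ is given by the appropriate $\fpk$, $\fmk$, or $\fnk$. These functions are deliberately simple: $\fpk$ is a threshold at $1/3$, $\fmk$ is the identity, and $\fnk$ is piecewise linear of slope $3/2$ capped at $1$ at $x=2/3$, so the algorithmic cost on $T$ is a piecewise polynomial in $(x_{uv},x_{vw},x_{uw})$. On the LP side, the triple cost is $\sum_{e \in T \cap E^+}x_e + \sum_{e \in T \cap E^-}(1-x_e)$; non-edges contribute nothing to the LP but still constrain adjacent edge lengths through the triangle inequalities.

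Third, the verification is a case analysis over the seven nontrivial triple types and, within each, over the three choices of pivot and over the regions of the LP polytope determined by which piece of each rounding function applies. For the all-edge triples $\ppp, \ppm, \pmm, \mmm$, the analysis proceeds in the spirit of the standard ACN-style reasoning, modified to account for the threshold structure of $\fpk$ (so positive edges with $x \ge 1/3$ are cut with certainty and must be charged to at most $3x$). I expect the main obstacle to be the triples $\ppn$ and $\pmn$: there the non-edge contributes zero to the LP but still drives algorithmic cost through $\fnk$, so that cost must be amortized entirely against the $+$ and $-$ edges of the triple. The specific thresholds $1/3$ (for $\fpk$) and $2/3$ (for $\fnk$), together with the triangle inequalities $x_{uw}+x_{vw} \ge x_{uv}$ and $|x_{uw}-x_{vw}| \le x_{uv}$, are precisely what is needed to make the ratio tight at $3$; I expect the extremal case to be one where a pivot decision forces an edge to be cut with certainty while the associated LP cost is approximately $1/3$, realizing the $3$-approximation.
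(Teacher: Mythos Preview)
Your high-level plan—invoke Lemma~\ref{lem:alpha} and then verify the per-triple inequality by a case analysis over the seven triangle types that arise in a complete $k$-partite graph—is exactly the route the paper takes in Section~\ref{sec:kpart}, and your enumeration of types and your intuition about where the bound is tight are both correct.

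There is, however, a concrete gap in how you set up the comparison. The per-triple LP quantity in Lemma~\ref{lem:alpha} is \emph{not} the raw LP contribution $\sum_{e\in T\cap E^+}x_e+\sum_{e\in T\cap E^-}(1-x_e)$; it is $LP(uvw)$ as defined in~\eqref{eq:elp} and~\eqref{def:LPuvw}, where each edge's LP cost is multiplied by the probability $(1-p_{uw}p_{vw})$ that the edge is removed when the \emph{opposite} vertex pivots. These removal factors are precisely what make $\sum_t LP_t = LP$ telescope in the proof of Lemma~\ref{lem:alpha}; if you drop them, the per-triple inequality you propose to verify is against a strictly larger quantity, hence strictly weaker, and it no longer feeds back into the lemma. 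To see that this matters quantitatively and is not just a bookkeeping issue, take a $\pmn$-triangle with $a\ge 1/3$ and $c\ge 2/3$: then $\fpk(a)=\fnk(c)=1$, so the correct $LP(uvw)$ equals $a(1-b)$ while your raw LP cost is $a+(1-b)$; the required inequality $ALG=(1-b)\le 3\cdot LP(uvw)=3a(1-b)$ is tight exactly at $a=1/3$, whereas against the raw cost it is slack everywhere and the tightness of the $3$-approximation is invisible. The case analysis in Section~\ref{sec:kpart} carries the removal-probability factors through every sub-case; your proof must do the same.
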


Note that the integrality gap for complete graphs is $2$~\cite{CGW05}, and the
integrality gap for complete $k$-partite graphs is $3$ (see
Section~\ref{sec:gaps}).  So the algorithm for complete $k$-partite graphs
optimally rounds the LP; the algorithm for complete graphs nearly optimally
rounds the LP.

\subsection{Analysis}
In this section, we prove a general statement -- Lemma~\ref{lem:alpha} -- that
asserts that the approximation ratio of the algorithm is at most $\alpha$ if a
certain condition (depending on $\alpha$) holds for every triple of vertices
$u,v,w\in V$.  We shall assume that $f(0) = 0$ for each $f\in \{\fpos, \fneg,
\fno\}$, and, particularly, that the algorithm always puts the pivot $w_t$ in
the set $S_t$.

Consider step $t$ of the algorithm. At this step, the algorithm finds and
removes set $S_t$ from the graph. Observe, that if $u\in S_t$ or $v\in S_t$,
then the constraint $(u,v)$ is either violated or satisfied right after step
$t$. Specifically, if $(u,v)$ is a positive edge, then the constraint $( u,v)$
is violated if exactly one of the vertices -- $u$ or $v$ -- is in $S_t$. If
$(u,v)$ is a negative constraint, then $(u,v)$ is violated if both $u$ and $v$
are in $S_t$. Denote the number of violated constraints at step $t$ by $ALG_t$.
Then,
$$ALG_t = \sum_{\substack{(u,v)\in E^+\\u,v\in V_t}}
(\I(u\in S_t;v\notin S_t) +\I(u\notin S_t;v\in S_t)) +
\sum_{\substack{(u,v)\in E^-\\u,v\in V_t}}\I(u\in S_t;v\in S_t).
$$
Here $\I(\calE)$ denotes the indicator function of the event $\calE$.  We want
to charge the cost of constraints violated at step $t$ to the LP cost of edges
removed at step $t$. The LP cost of edges removed at this step equals
$$LP_t = \sum_{\substack{(u,v)\in E^+\\u,v\in V_t}} \I (u\in S_t\text{ or }v\in S_t) x_{uv} +
\sum_{\substack{(u,v)\in E^-\\u,v\in V_t}} \I (u\in S_t\text{ or }v\in S_t) (1 - x_{uv}).$$
Note, that $\sum_{t=0}^T LP_t = LP$, since every edge is removed from the graph
exactly once (compare the expression above with the objective
function~(\ref{lp:obj})).  If we show that $\Exp[ALG_t] \leq \alpha \Exp
[LP_t]$ for all $t$, then we will immediately get an upper bound on the
expected total cost of the clustering:
$$\Exp [ALG] = \Exp \Big[\sum_{t=0}^{T}  ALG_t\Big]\leq
\alpha \Exp \Big[\sum_{t=0}^{T} LP_t\Big] = \alpha LP.$$
Here, we use that $X_s = \sum_{t=0}^{s} \big(\alpha LP_t - ALG_t\big)$ is a
submartingale (i.e., $\Exp [X_{s+1}\given X_s]\geq X_s$) , and $T$ is a stopping time.

Let $\ecost_w(u,v)$ be the conditional probability of violating the constraint
$(u,v)$ given that the pivot $w_t$ is $w$ (assuming $u,v,w\in V_t$, and $u\neq
v$). We say that $\ecost_w(u,v)$ is the expected cost of the constraint $(u,v)$
given pivot $w$. Similarly, let $\elp_w(u,v)$ be the conditional probability of
removing the edge $(u,v)$ given the pivot is $w$ multiplied by the LP cost of
the edge $(u,v)$. (The LP cost equals $x_{uv}$ for positive edges; and
$(1-x_{uv})$ for negative edges.) That is,
\begin{align}
\ecost_w(u,v) &=
\begin{cases}
p_{uw}(1-p_{vw}) + (1-p_{uw})p_{vw},&\text{if }(u,v)\in E^+;\\
(1-p_{uw})(1-p_{vw}),&\text{if }(u,v)\in E^-;\\
0,&\text{if }(u,v)\notin E;
\end{cases}
\label{eq:ecost}\\
\elp_w(u,v) &=
\begin{cases}
(1-p_{uw}p_{vw})x_{uv},&\text{if }(u,v)\in E^+;\\
(1-p_{uw}p_{vw})(1-x_{uv}),&\text{if }(u,v)\in E^-;\\
0,&\text{if }(u,v)\notin E.
\end{cases}
\label{eq:elp}
\end{align}
The expressions above do not depend on the set of active vertices $V_t$.
The cut probabilities $p_{uw}$ and $p_{vw}$ are defined by the algorithm
(see equation~(\ref{eq:def-p})). Note
that $\ecost_u(u,v)$, $\ecost_v(u,v)$, $\elp_u(u,v)$, and $\elp_v(u,v)$ are
well defined. We also formally define $\ecost_w(u,u)$ and $\elp_w(u,u)$ using
formulas~(\ref{eq:ecost}) and~(\ref{eq:elp}). In the analysis of the algorithm
for complete graphs, we assume that each vertex $u$ has a positive self-loop,
thus $\ecost_w(u,u) = 2(1-p_{uw})p_{uw}$, $\elp_w(u,u) = 0$.
%% Note that this assumption does not change the LP value, since $x_{uu}=0$.

We now write $\Exp[ALG_t]$ and $\Exp[LP_t]$ in terms of $\ecost$ and $\elp$:
\begin{align*}
\Exp[ALG_t\given V_t] &= \sum_{\substack{(u,v)\in E\\u,v\in V_t}}
\Big(\frac{1}{|V_t|}\sum_{w\in V_t} \ecost_w(u,v)\Big)
= \frac{1}{2|V_t|}\sum_{u,v,w \in V_t:u\neq v}\ecost_w(u,v);\\
\Exp[LP_t\given V_t] &= \sum_{\substack{(u,v)\in E\\u,v\in V_t\\}}
\Big(\frac{1}{|V_t|}\sum_{w\in V_t} \elp_w(u,v)\Big)
= \frac{1}{2|V_t|}\sum_{u,v,w \in V_t:u\neq v}\elp_w(u,v).
\end{align*}
We divided the expressions on the right hand side by 2, because, in the sum, we
count every $\ecost_w(u,v)$ and $\elp_w(u,v)$ twice (e.g., the first sum
contains the terms $\ecost_w(u,v)$ and $\ecost_w(v,u)$).  We now add terms
$\ecost_w(u,u)$ to the first sum and terms $\elp_w(u,u)$ to the second sum.
Then, we group all terms containing $u$, $v$, and $w$ together. Note that
$\ecost_w(u,u)\geq 0$ and $\elp_w(u,u)=0$. We get
\begin{align}
\Exp[ALG_t\given V_t] &\leq \frac{1}{6|V_t|}\sum_{u,v,w \in V_t}
\underbrace{\ecost_w(u,v) + \ecost_v(w,u) + \ecost_u(v,w)}_{ALG(uvw)};\label{def:ALGuvw}\\
\Exp[LP_t\given V_t] &= \frac{1}{6|V_t|}\sum_{u,v,w \in V_t}
\underbrace{\elp_w(u,v) + \elp_v(w,u) + \elp_u(v,w)}_{LP(uvw)}\label{def:LPuvw}.
\end{align}
We denote each term in the first sum by $ALG(uvw)$ and each term in the second
sum by $LP(uvw)$. Observe, that if $ALG(uvw)\leq \alpha LP(uvw)$
for all $u,v,w\in V$, then
$\Exp[ALG_t] \leq \alpha \Exp[LP_t]$, and, hence,
$\Exp[ALG] \leq \alpha LP$. We thus obtain the following lemma.

\begin{lemma}\label{lem:alpha}
Fix a set of functions $\{\fpos, \fneg, \fno\}$ with $\fpos(0) = \fneg(0) =
\fno(0)=0$.  If $ALG(uvw)\leq \alpha LP(uvw)$ for every $u,v,w\in V$ (see
(\ref{eq:ecost}), (\ref{eq:elp}), (\ref{def:ALGuvw}), and (\ref{def:LPuvw}) for
definitions), then the expected number of violated constraints at step $t$ is
bounded by $\alpha$ times the expected LP volume removed at step $t$:
$$\Exp[ALG_t] \leq \alpha \Exp[LP_t].$$
Consequently, the expected cost of the clustering returned by the algorithm is
upper bounded by $\alpha LP$.
\end{lemma}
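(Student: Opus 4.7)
The plan is to follow the decomposition already laid out in the exposition preceding the lemma and verify the two claims in sequence: first the per-step inequality $\Exp[ALG_t] \leq \alpha \Exp[LP_t]$ conditional on $V_t$, and then the global bound $\Exp[ALG] \leq \alpha LP$ via a stopping-time martingale argument. Since the main combinatorial bookkeeping is already encapsulated in the definitions of $\ecost_w(u,v)$, $\elp_w(u,v)$, $ALG(uvw)$, and $LP(uvw)$, the proof amounts to carefully assembling these pieces.

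First, I would fix a step $t$ and condition on $V_t$. The pivot $w_t$ is uniform on $V_t$ and the $S_t$-membership indicators are independent across active vertices once $w_t$ is revealed, so $\Exp[ALG_t \given V_t] = \frac{1}{|V_t|}\sum_{w\in V_t} \sum_{(u,v)\in E,\; u,v\in V_t}\ecost_w(u,v)$, and identically for $\elp$. Rewriting the sum over unordered edges as a sum over ordered pairs with a factor of $1/2$ gives the two displayed identities on the first coordinate. Next I would symmetrize over the three indices $u,v,w$: adding the self-loop terms $\ecost_w(u,u)\ge 0$ only weakens the $ALG$ bound, while the self-loop terms $\elp_w(u,u)=0$ leave the $LP$ expression unchanged; the factor $1/6$ then appears because each ordered triple is produced by three cyclic relabelings of the pivot, and the three resulting summands are exactly the components of $ALG(uvw)$ and $LP(uvw)$ in \eqref{def:ALGuvw}-\eqref{def:LPuvw}. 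Applying the hypothesis $ALG(uvw)\le \alpha\,LP(uvw)$ pointwise yields $\Exp[ALG_t\given V_t] \le \alpha\,\Exp[LP_t\given V_t]$, and taking expectations over $V_t$ proves the per-step claim.

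For the global statement, let $\calF_s$ denote the sigma-algebra generated by the first $s$ iterations, so that $V_{s+1}$ is $\calF_s$-measurable. Define $X_s := \sum_{t=0}^{s}\bigl(\alpha\,LP_t - ALG_t\bigr)$. The per-step bound, applied conditionally on $\calF_s$, gives $\Exp[\alpha\,LP_{s+1}-ALG_{s+1}\given \calF_s]\ge 0$, so $(X_s)$ is a submartingale. The random index $T$ of the last iteration is a stopping time with $T\le |V|$ almost surely, so the optional stopping theorem yields $\Exp[X_T] \ge \Exp[X_0]\ge 0$. Since every edge is deleted from the graph in exactly one iteration, $\sum_{t=0}^T LP_t = LP$ (a deterministic quantity) and $\sum_{t=0}^T ALG_t = ALG$, giving the conclusion $\Exp[ALG]\le \alpha\,LP$.

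The main subtlety worth highlighting is the symmetrization step: one must check that the inclusion of the self-loop terms $\ecost_w(u,u)$ only introduces a nonnegative slack (so the inequality direction is preserved) and that every ordered triple $(u,v,w)$ of distinct vertices contributes exactly once to each of the three terms in $ALG(uvw)$. Beyond that, the argument is essentially algebraic, and the stopping-time wrap-up is standard; no estimate is harder than verifying that $\Exp[X_{s+1}-X_s\given \calF_s]\ge 0$, which is immediate from the per-step inequality already established.
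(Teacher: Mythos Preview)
Your proposal is correct and follows essentially the same approach as the paper: the paper establishes the lemma in the exposition immediately preceding it, computing $\Exp[ALG_t\mid V_t]$ and $\Exp[LP_t\mid V_t]$ in terms of $\ecost$ and $\elp$, symmetrizing over triples with the nonnegative self-loop slack, applying the triple-wise hypothesis, and then invoking the same submartingale/stopping-time argument with $X_s=\sum_{t=0}^s(\alpha\,LP_t-ALG_t)$. Your write-up is in fact slightly more explicit than the paper's on the optional-stopping step (noting $T\le |V|$ a.s.), but the route is identical.
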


\subsection{Triple-Based Analysis}

\label{sec:performing_analysis}
To finish analysis we need to show that $ALG(uvw)\leq \alpha LP(uvw)$ for every
triple of vertices $u,v,w\in V$. We separately analyze our choice of functions
$f$ for complete graphs and complete $k$-partite graphs in
Section~\ref{app:complete} and in Section~\ref{sec:kpart} respectively.  We
show that functions from Theorem~\ref{thm:complete} satisfy the conditions of
Lemma~\ref{lem:alpha} with $\alpha = 2.06 $ for complete graphs; and functions
from Theorem~\ref{thm:kpart} satisfy the conditions of Lemma~\ref{lem:alpha}
with $\alpha = 3$ for complete $k$-partite graphs. To show that $ALG(uvw) \leq
\alpha LP(uvw)$ for every triangle $uvw$, we consider all triangles $uvw$
with LP values satisfying triangle inequalities with all possible types of
edges: positive, negative, and ``missing'' or ``neutral'' edges. For brevity,
we refer to triangles as $(s_{uv},s_{vw},s_{uw})$ where each $s$ is one of the
symbols ``$+$'', ``$-$'' or ``$\varnothing$''.  For example, a
$(+,-,\varnothing)$-triangle is a triangle having two edges: a positive edge
and a negative edge; the third edge is missing.

The analysis of the functions $f$ requires considering many cases. We
show that $ALG(uvw)\leq \alpha LP(uvw)$ for all
$(s_1,s_2,s_3)$-triangles with edge lengths $(x,y,z)$ satisfying
triangle inequalities that may possibly appear in each type of graph
(we use that there are no neutral edges in a complete graph, and that
no triangle in a complete $k$-partite graph contains exactly two
neutral edges).  In other words, we fix types of edges for every
triangle and then consider a function of edge lengths
$x_{uv}$,$x_{vw}$, $x_{uw}$, and cut probabilities $p_{uv}$, $p_{vw}$,
$p_{uw}$ formally defined using algebraic expressions
(\ref{eq:ecost}), (\ref{eq:elp}), (\ref{def:ALGuvw}), and
(\ref{def:LPuvw}): $$\calC(x_{uv}, x_{vw}, x_{uw}, p_{uv},
p_{vw}, p_{uw}) = \alpha LP(uvw) - ALG(uvw).$$ Note, that this
function is defined even for those triangles that are not present in
our graph. We show that
$$\calC(x_{uv}, x_{vw}, x_{uw}, f^{uv}(x_{uv}), f^{vw}(x_{vw}), f^{uw}(x_{uw}))
\geq 0,$$
for all $x_{uv}, x_{vw}, x_{uw}\in [0,1]$
satisfying the triangle
inequality. Here, $f^{uv}$, $f^{vw}$ and $f^{uw}$ are rounding functions for
the edges $(u,v)$, $(v,w)$ and $(u,w)$ respectively. We first prove that for
many rounding functions $f$, and, particularly, for rounding functions we use
in this paper, it is sufficient to verify the inequality $\calC\geq 0$ only for
those $x_{uv}$, $x_{vw}$, and $x_{uw}$ for which the triangle inequality is
tight, and a few corner cases.

\begin{lemma}\label{lem:tight-triangle}
Suppose that $\fpos$ is a monotonically non-decreasing piecewise convex
function; $\fneg$ is a monotonically non-decreasing piecewise concave function;
and $\fno$ is a monotonically non-decreasing function. Let $A^+$ be the set of
endpoints of convex pieces of $\fpos$, and $A^-$ be the set of endpoints of the
concave pieces of $f^-$. (Note that $\{0,1\}\subset A^+$ and $\{0,1\}\subset
A^-$.) If the conditions of Lemma~\ref{lem:alpha} are violated for some
$(s_{uv}, s_{vw}, s_{uw})$-triangle with edge lengths $(x^*_{uv}, x^*_{vw},
x^*_{uw})$ i.e., $\alpha LP (uvw)- ALG(uvw) < 0$, then there exists possibly
another triangle $(x_{uv}, x_{vw}, x_{uw})$ (satisfying triangle inequalities)
for which $\alpha LP (uvw)- ALG(uvw) < 0$ such that either
\begin{enumerate}
\item the triangle inequality is tight for $(x_{uv}, x_{vw}, x_{uw})$; or
\item the lengths of all positive edges of the triangle belong to $A^+$;
the lengths of all negative edges of the triangle belong to $A^-$; the lengths
of all neutral edges belong to $\{0,1\}$.
\end{enumerate}
\end{lemma}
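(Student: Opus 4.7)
My plan is to exploit two structural facts about the function
\[
\calC(x, p) := \alpha\, LP(uvw) - ALG(uvw),
\]
treated as a function of six \emph{independent} variables $x_{uv}, x_{vw}, x_{uw}$ and $p_{uv}, p_{vw}, p_{uw}$ through the formulas (\ref{eq:ecost})--(\ref{def:LPuvw}). First, $\calC$ is multilinear in the three $p$-variables, because each $\ecost_w(u,v)$ and each $\elp_w(u,v)$ is bilinear in the pair $(p_{uw}, p_{vw})$ and involves no other $p$. Second, $\calC$ is affine in each $x_i$ with no $x$-cross-terms (each $\elp_w(u,v)$ depends linearly on $x_{uv}$ only, and $ALG$ has no $x$-dependence). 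Crucially, $x_{uv}$ and $p_{uv}$ never appear in the same monomial of $\calC$, because $\elp_w(u,v)$ uses $p_{uw}, p_{vw}$ but not $p_{uv}$; the same holds for the other pairs. Hence, with the other four variables held fixed, $\calC$ decomposes as $\alpha_0 + \alpha_1 x_{uv} + \alpha_2 p_{uv}$.

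The idea is to start from the violating triangle $(x^*, f(x^*))$ and process the three edges one at a time, each time performing two updates that keep $\calC < 0$. For edge $i \in \{uv, vw, uw\}$, in the \emph{$p$-step} I hold the other variables fixed; $\calC$ is linear in $p_i$, so its minimizer over $[0, 1]$ is some $\tilde p_i \in \{0, 1\}$. At least one choice preserves $\calC < 0$, since the current value is a convex combination of the two endpoint values. In the \emph{$x$-step}, since $f^i$ is monotone non-decreasing, the preimage $J_i := (f^i)^{-1}(\tilde p_i)$ is a nonempty interval; on $J_i$ intersected with the range of $x_i$ allowed by the triangle inequalities (with the currently fixed $x_{j}, x_k$), $\calC$ is linear in $x_i$, so its minimizer lies at an endpoint. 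I move $x_i$ to that endpoint. It is either (i) a value where a triangle inequality becomes tight, or (ii) an endpoint of $J_i$, which for positive edges lies in $A^+$ and for negative edges in $A^-$ (these intervals' endpoints are breakpoints of the piecewise convex/concave structure, and $\{0,1\} \subset A^+ \cap A^-$ is given).

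If at any step the triangle inequality becomes tight, I halt and output the current triangle, establishing case~1. Otherwise, after three rounds each coordinate sits at an appropriate breakpoint, establishing case~2. Consistency $p_i = f^i(x_i)$ is restored at the end of each round by construction ($f^i(x_i) = \tilde p_i = p_i$), and the other coordinates $x_j, p_j$ ($j \ne i$) have not been touched, so $p = f(x)$ holds after each round. Since every update weakly decreases $\calC$, the final triangle still satisfies $\alpha LP - ALG < 0$.

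The hardest part will be the neutral-edge case: since $f^\circ$ is only assumed monotone (not piecewise convex or concave), an endpoint of $J_i = (f^\circ)^{-1}(\tilde p_i)$ can be $\sup\{x : f^\circ(x) = 0\}$ or $\inf\{x : f^\circ(x) = 1\}$, which need not belong to $\{0, 1\}$ as case~2 demands. I would bridge this gap by arguing that one can continue pushing $x_i$ from such an intermediate endpoint toward $\{0,1\}$ along the curve $p_i = f^\circ(x_i)$; either $\calC$ stays negative all the way to $x_i \in \{0,1\}$ (giving case~2), or by continuity a triangle inequality must become tight before $\calC$ crosses zero, yielding case~1. Verifying that one of these alternatives always applies -- and carefully tracking the update order so that the iteration for one edge does not disturb the already-placed coordinates for the others -- is where the bulk of the technical care is required.
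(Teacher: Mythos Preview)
Your structural observations are correct and match the paper's: $\calC$ is multilinear in the $p$-variables, affine in each $x_i$, and no monomial contains both $x_i$ and $p_i$. However, the two-step ``$p$-step then $x$-step'' procedure has a genuine gap, and in fact never uses the piecewise convexity/concavity hypothesis on $\fpos,\fneg$ --- which is the whole point of the lemma.

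The problem is this: after you push $p_i$ to $\tilde p_i\in\{0,1\}$, the current value of $x_i$ is \emph{not} in $J_i=(f^i)^{-1}(\tilde p_i)$ in general, so your ``minimize $\calC$ over $J_i$'' step involves first transporting $x_i$ into $J_i$, and that move can increase $\calC$ past zero. Concretely, take a positive edge with $\fpos(x)=x^2$, so $\calC=\alpha_0+\alpha_1 x_i+\alpha_2 p_i$ with $\alpha_1\ge 0$. With $\alpha_0=-0.4$, $\alpha_1=1$, $\alpha_2=-0.5$, $x_i^*=0.5$ one has $\calC(x_i^*,(x_i^*)^2)=-0.025<0$; since $\alpha_2<0$ your $p$-step sets $\tilde p_i=1$, so $J_i=\{1\}$, and moving $x_i$ there gives $\calC(1,1)=0.1>0$. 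Worse, $J_i$ may not intersect the triangle-inequality range at all (e.g.\ for $\fneg(x)=x$, $J_i$ is a single point $\{0\}$ or $\{1\}$). Note that in the example the correct endpoint is $(0,0)$, where $\calC=-0.4<0$; your greedy $p$-step simply went the wrong way.

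The paper instead keeps the coupling $p_i=f^i(x_i)$ throughout and parametrizes by $p_i$, writing $x_i=(f^i)^{-1}(p_i)$. For a positive edge the coefficient of $x_i$ in $\calC$ is $\alpha(1-p_{jw}p_{kw})\ge 0$ and $(\fpos)^{-1}$ is locally concave (inverse of an increasing convex piece), so $\calC$ along the curve is a nonnegative multiple of a concave function plus a linear function of $p_i$ --- hence concave. For a negative edge the $x_i$-coefficient is $\le 0$ and $(\fneg)^{-1}$ is locally convex, again giving concavity. A concave function cannot have an interior minimum, so at a global minimizer either a triangle inequality is tight or $x_i$ sits at a breakpoint in $A^\pm$ (or $\{0,1\}$ for neutral edges, where the $x_i$-coefficient is $0$ and the argument degenerates to linearity). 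This is the missing idea; your scheme can be repaired by replacing the two-step update with this single ``move along the curve'' argument.
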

\begin{proof}
We show that the function $\calC(x_{uv}, x_{vw}, x_{uw}, f^{uv}(x_{uv}),
f^{vw}(x_{vw}), f^{uw}(x_{uw}))$ has the global minimum over the region
satisfying triangle inequalities at a point $(x_{uv}, x_{vw}, x_{uw})$
satisfying (1) or (2). By slightly perturbing functions $f$, we may assume that
these functions are strictly increasing.%
\footnote{Formally, we consider a sequence of strictly monotone functions
uniformly converging to our rounding functions.}
Suppose that $\calC$ has a minimum at point $x$. Consider one of the edges,
say, $x_{uv}$ whose length does not lie in the corresponding set $A^+$, $A^-$
or $\{0,1\}$. Let $p_{uv} = f^{uv}(x_{uv})$; and let
$g^{uv}(p_{uv})=(f^{uv})^{-1}(p_{uv})$. Note that the function $\calC$ is a
multilinear polynomial of $x$'s and $p$'s; in particular, it is linear in
$p_{uv}$ and $x_{uv}$. Moreover, it does not have monomials containing both
$x_{uv}$ and $p_{uv}$. This follows from the formal definition of $\calC$. Let
us fix all variables except for $x_{uv}$ and $p_{uv}$. We now express $x_{uv}$
as a function of $p_{uv}$: $x_{uv}= g^{uv} (p_{uv})$. The
function $g^{uv}$ is locally concave if $(u,v)$ is a positive edge; and it is
locally convex if $(u,v)$ is a negative edge. Here we use that $x_{uv}$ is not
in $A^+$ or $A^-$.  Observe that the only term containing $x_{uv}$ in $\calC$
comes from the expression for $LP(uvw)$. Thus, the coefficient of $x_{uv}$ is
positive if $(u,v)$ is a positive edge; and it is negative if $(u,v)$ is a
negative edge. The function $\calC$ does not depend on $x_{uv}$ if $(u,v)$ is a
neutral edge (of course, $\calC$ may depend on $p_{uv}$). So the function
$\calC(g^{uv}(p_{uv}), x_{vw}, x_{uw}, p_{uw}, f^{vw}(x_{vw}), f^{uw}(x_{uw}))$
is a concave function of $p_{uw}$ (when $x_{vw}$ and $x_{uw}$ are fixed).
Therefore, if we slightly decrease or increase $p_{uw}$ the value of $\calC$
will decrease. But we assumed that $\calC$ has the global minimum at $x$.
Hence, $x$ lies on the boundary of the region constrained by the triangle
inequality.  This concludes the proof.
\end{proof}

We give the analysis for the complete case in \prettyref{app:complete} and the
considerably simpler analysis for the $k$-partite case in
\prettyref{sec:kpart}.

\section{Choosing the Rounding Functions}\label{sec:choosing_f}
We now discuss how we came up with the rounding functions $f$ for complete graphs.
We need to find functions $f^+$ and $f^-$ that satisfy the conditions
of Lemma~\ref{lem:alpha}, i.e., such that for all triangles $uvw$,
$\alpha LP(uvw) - ALG(uvw)\geq 0$. (For brevity, we shall drop the argument $uvw$ of
the $ALG$ and $LP$ functions from now on.) Lemma~\ref{lem:tight-triangle} suggests that
we only need to care about triangles for which the triangle inequality is
tight. We focus on some special families of such triangles and obtain lower and
upper bounds of $f^+$ and $f^-$. Then, we find functions satisfying these constraints.
Later, we formally prove that the functions we found indeed give $\alpha = 2.06$
approximation. We sketch the proof in Section~\ref{sec:pic_proofs} and give a detailed
proof in \prettyref{app:complete}.

We first consider a $\pmm$-triangle with edge lengths $(0,x,x)$.
\begin{lemma}
For a $\pmm$-triangle with edge lengths $(0,x,x)$,
\[
	\frac{ALG}{LP} = \frac{1-\fm(x)^2}{1-x}.
\]
\end{lemma}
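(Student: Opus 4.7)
The plan is to prove this lemma by direct calculation using the definitions of $ALG(uvw)$ and $LP(uvw)$ from~(\ref{def:ALGuvw}) and~(\ref{def:LPuvw}), specialized to the given triangle. First I would label the edges: let $(u,v)$ be the positive edge with $x_{uv}=0$, and let $(v,w)$ and $(u,w)$ be the two negative edges of length $x$. Since we assume $\fp(0)=0$, we get $p_{uv}=0$, while $p_{vw}=p_{uw}=\fm(x)$.

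Next I would plug these cut probabilities into the formulas~(\ref{eq:ecost}) for the three $\ecost$ terms. The contribution from the positive edge, $\ecost_w(u,v) = p_{uw}(1-p_{vw}) + (1-p_{uw})p_{vw}$, simplifies to $2\fm(x)(1-\fm(x))$. Each of the two negative-edge contributions, $\ecost_v(w,u)$ and $\ecost_u(v,w)$, involves one factor $(1-p_{uv})=1$ and so equals $1-\fm(x)$. Summing yields
\begin{equation*}
ALG = 2\fm(x)(1-\fm(x)) + 2(1-\fm(x)) = 2(1-\fm(x))(1+\fm(x)) = 2(1-\fm(x)^2).
\end{equation*}

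For $LP$, I would apply~(\ref{eq:elp}) analogously. The positive edge contributes $\elp_w(u,v)=(1-p_{uw}p_{vw})\cdot x_{uv}=0$ since $x_{uv}=0$. Each negative-edge term has the form $(1-p_{uv}\cdot p_{\cdot w})(1-x)=(1-x)$ because $p_{uv}=0$. Hence $LP = 2(1-x)$, and dividing gives the claimed identity $ALG/LP = (1-\fm(x)^2)/(1-x)$.

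There is no real obstacle here — the computation is entirely mechanical once one observes that the vanishing of $x_{uv}$ and $p_{uv}$ trivializes several cross-terms. The main thing to be careful about is bookkeeping: making sure each ordered pair in the definitions of $ALG(uvw)$ and $LP(uvw)$ is assigned the correct edge type so that the right formula from~(\ref{eq:ecost})/(\ref{eq:elp}) is used, and that the pivot in each term is chosen correctly (the pivot vertex is the one not appearing in the edge being evaluated).
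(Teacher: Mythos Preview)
Your proof is correct and follows exactly the approach the paper takes: the paper's own proof simply says ``We simply calculate $ALG$ and $LP$ using \prettyref{eq:ecost} and \prettyref{eq:elp},'' and you have carried out that calculation explicitly and correctly.
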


\begin{proof}
We simply calculate $ALG$ and $LP$ using \prettyref{eq:ecost} and
\prettyref{eq:elp}.
\end{proof}

\begin{corollary}\label{cor:fmbound}
Any function $\fm$ that achieves an $\alpha$-approximation on all \pmm-triangles satisfies
	\[
		\fm(x) \ge \sqrt{ 1- \alpha(1-x)}
	\]
for all $x\in[0,1]$.
\end{corollary}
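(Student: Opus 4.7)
The plan is to obtain the stated lower bound as a direct algebraic consequence of the preceding lemma, by imposing the $\alpha$-approximation requirement on the specific one-parameter family of \pmm-triangles with edge lengths $(0,x,x)$. By the lemma, the ratio $ALG/LP$ on such a triangle is exactly $(1-\fm(x)^2)/(1-x)$, so any $\fm$ that achieves an $\alpha$-approximation on all \pmm-triangles must in particular satisfy
\[
\frac{1-\fm(x)^2}{1-x} \le \alpha \qquad \text{for every } x \in [0,1).
\]

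First I would multiply through by the positive quantity $1-x$ and rearrange to isolate $\fm(x)^2$, yielding $\fm(x)^2 \ge 1-\alpha(1-x)$. Since $\fm$ is used as a cut probability in the algorithm, it takes values in $[0,1]$ and in particular is non-negative, so taking square roots (when the right-hand side is non-negative) preserves the inequality and produces the claimed bound $\fm(x) \ge \sqrt{1-\alpha(1-x)}$. When $1 - \alpha(1-x) \le 0$ the claimed bound holds trivially, since the right-hand side is non-positive while $\fm(x) \ge 0$.

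The only small subtlety is the boundary point $x=1$, where the lemma's expression $(1-\fm(x)^2)/(1-x)$ is formally of the form $0/0$ and the approximation inequality needs to be interpreted in a limiting sense. I would handle this by continuity: the stated bound, evaluated at $x=1$, reads $\fm(1) \ge 1$, which together with $\fm \le 1$ forces $\fm(1)=1$, consistent with the intuition that a negative edge of LP length exactly $1$ should never prevent its endpoints from joining the pivot's cluster. Apart from this harmless boundary case, the derivation is purely algebraic and presents no obstacle.
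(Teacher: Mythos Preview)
Your proposal is correct and follows exactly the intended route: the paper states this corollary without proof, as it is an immediate algebraic rearrangement of the preceding lemma's identity $ALG/LP = (1-\fm(x)^2)/(1-x)$ under the constraint $ALG \le \alpha\, LP$. Your handling of the degenerate cases is more careful than the paper's (which simply omits them); note only that your parenthetical intuition at $x=1$ is phrased backwards---$\fm(1)=1$ means the endpoint is \emph{always} separated from the pivot, not that it joins---but this does not affect the argument.
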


\begin{claim}
	The function $\fm(x) = x$ does not violate the conditions of \prettyref{cor:fmbound} for $\alpha = 2.06$.
\end{claim}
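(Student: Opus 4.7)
The plan is a direct algebraic verification of the inequality from \prettyref{cor:fmbound} with $\alpha = 2.06$ and $\fm(x) = x$. We must show that $x \geq \sqrt{1 - 2.06(1-x)}$ for every $x \in [0,1]$.

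First I would dispose of the easy range. The quantity under the square root, $1 - 2.06(1-x)$, is non-positive exactly when $x \leq 1 - 1/2.06$, and in that range the right-hand side of the inequality from \prettyref{cor:fmbound} is (at worst) zero, so the bound $x \geq \sqrt{1-\alpha(1-x)}$ holds trivially. This reduces the problem to the interval $x \in [1 - 1/2.06,\, 1]$, where both sides are non-negative and we may square without reversing the inequality.

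On that interval, squaring both sides reduces the claim to the quadratic inequality $x^2 - 2.06\,x + 1.06 \geq 0$. The main step is to observe that this quadratic factors nicely: its discriminant is $2.06^2 - 4\cdot 1.06 = 0.0036$, whose square root is $0.06$, so the roots are $x = 1$ and $x = 1.06$, and
\begin{equation*}
x^2 - 2.06\,x + 1.06 = (x-1)(x-1.06).
\end{equation*}
For every $x \in [0,1]$ both factors are non-positive, so the product is non-negative, completing the verification.

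I do not anticipate any real obstacle here; the only mildly delicate point is the boundary case $x = 1$, where the quadratic vanishes and we need equality rather than strict inequality in \prettyref{cor:fmbound} — which is exactly what the corollary allows. Everything else is a one-line computation of the discriminant and sign analysis of a factored quadratic on $[0,1]$.
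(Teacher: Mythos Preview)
Your verification is correct: the reduction to the quadratic $x^2 - 2.06x + 1.06 = (x-1)(x-1.06)\ge 0$ on $[0,1]$ is exactly right, and you handle the range where the radicand is non-positive appropriately.

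The paper, by contrast, does not give an analytical proof of this claim at all --- it simply asserts the claim and displays a plot of the admissible region for $\fm$ (the ``blue region''), leaving the reader to see visually that the line $\fm(x)=x$ lies inside it. Your argument is therefore strictly more than what the paper provides: it replaces a picture with a two-line computation, and in particular makes explicit that equality is attained only at $x=1$, which the plot alone does not make obvious.
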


Fixing $\alpha = 2.06$, the function we select is restricted to the blue region below.
\begin{center}
	\includegraphics[width=0.4\textwidth,natwidth=239,natheight=166]{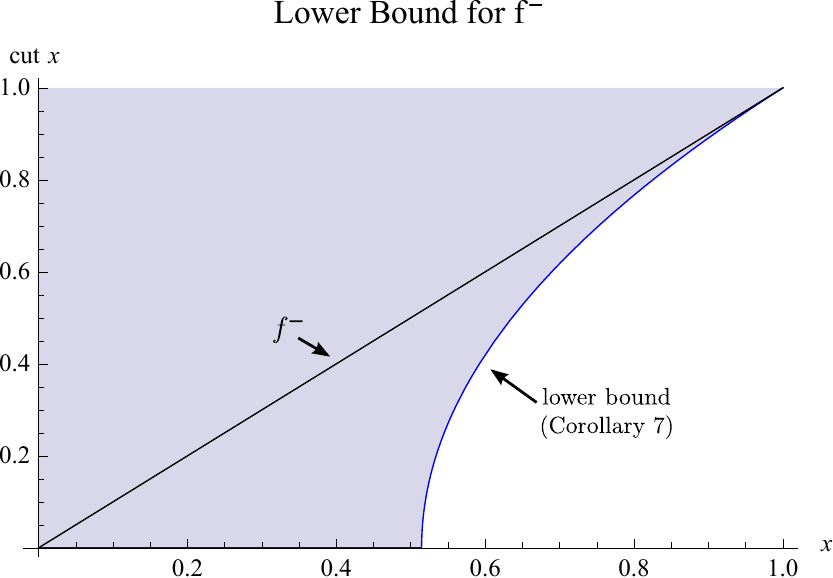}
\end{center}

Thus we take $\fm(x) = x$, as this choice is an easy candidate for the analysis.
Now, we bound $\fp$ using the tight case for the linear rounding, a
\ppm-triangle with edge lengths $(x,x,2x)$.

\begin{lemma}
	\label{lem:fp_lower}
	Any function $\fp$ that achieves an $\alpha$-approximation
        ratio on all \ppm-triangles has
\[
	\fp(x) \ge \frac{8 x-4 \alpha x^2-\sqrt{\left(4 \alpha x^2 - 8x \right)^2-4 (1-\alpha+4 x) (1+\alpha-2 \alpha x)}}{2 (1+\alpha-2 \alpha x)},
\]
for $x \in [0,1/2]$, if $\fm(x) = x$.
\end{lemma}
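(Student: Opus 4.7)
My plan is to evaluate $ALG(uvw)$ and $LP(uvw)$ directly on the extremal $(+,+,-)$-triangle with edge lengths $(x,x,2x)$, and then read off the lower bound on $\fp(x)$ by solving the resulting quadratic inequality in $p := \fp(x)$.

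First, I would fix labels so that $(u,v)$ and $(v,w)$ are positive with length $x$, while $(u,w)$ is negative with length $2x$ (which saturates the triangle inequality). Under the assumption $\fm(t)=t$, the cut probabilities are $p_{uv}=p_{vw}=p$ and $p_{uw}=\fm(2x)=2x$. Substituting these into \prettyref{eq:ecost} and \prettyref{eq:elp} and summing the three terms as in (\ref{def:ALGuvw}) and (\ref{def:LPuvw}), a routine expansion yields
\begin{align*}
ALG(uvw) &= 2\bigl[\,p(1-2x)+(1-p)(2x)\,\bigr] + (1-p)^2 \;=\; p^2 - 8xp + 4x + 1,\\
LP(uvw)  &= 2x(1-2xp) + (1-2x)(1-p^2).
\end{align*}

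Second, I would rearrange the desired inequality $\alpha LP(uvw)-ALG(uvw)\ge 0$ and collect by powers of $p$. A short computation converts it to the quadratic inequality
$$(1+\alpha-2\alpha x)\,p^2 + (4\alpha x^2 - 8x)\,p + (1-\alpha+4x) \;\leq\; 0.$$
For $x\in[0,1/2]$ the leading coefficient $1+\alpha-2\alpha x = 1+\alpha(1-2x)$ is strictly positive, so the set of admissible $p$ is the closed interval between the two real roots of the quadratic (the lemma is vacuous on any subinterval of $x$ where the discriminant is negative, since then no $p$ achieves the $\alpha$-approximation). In particular, $p$ must be at least the smaller root, and the usual quadratic formula produces exactly the expression claimed in the lemma.

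The main obstacle is purely bookkeeping in the expansion of $ALG$ and $LP$; once these are in hand, the derivation is automatic. The only point that deserves a brief justification is the positivity of the leading coefficient on $[0,1/2]$, which is what guarantees that the relevant bound is the \emph{smaller} root. As a byproduct, the same quadratic inequality yields an upper bound $p\le p_+$; this is not stated in \prettyref{lem:fp_lower} but will be useful later for narrowing down the feasible band for $\fp$ when the rounding function is actually chosen.
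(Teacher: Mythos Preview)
Your proposal is correct and follows essentially the same approach as the paper: compute $\alpha LP - ALG$ on the tight $(+,+,-)$-triangle $(x,x,2x)$ via \eqref{eq:ecost} and \eqref{eq:elp}, obtain the quadratic in $\fp(x)$, and read off the lower root. Your added remark that the leading coefficient $1+\alpha-2\alpha x$ is positive on $[0,1/2]$ is a useful clarification the paper leaves implicit.
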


\begin{proof}
	Again, we calculate $\alpha LP - ALG$ using \prettyref{eq:ecost} and
	\prettyref{eq:elp}. This yields a quadratic function in $\fp(x)$:
	\begin{align*}
		\alpha LP - ALG &=-1+\alpha -4 x-4 x (-2+\alpha x) \fp(x) -(1+\alpha-2 \alpha x) \fp(x)^2.
	\end{align*}
	Solving for $\alpha LP - ALG \ge 0$ in terms of $\fp(x)$, we get our result.
\end{proof}

This lower bound on $\fp$ is necessary for approximating \ppm-triangles well,
but choosing a large $\fp$ has consequences for the approximation ratio of
$\ppp$-triangles. We use a \ppp-triangle with edge lengths $(x,x,0)$ to obtain
an upper bound on $\fp$.

\begin{lemma}
	\label{lem:fp_upper}
Any function $\fp$ that achieves an $\alpha$-approximation
ratio on all \ppp-triangles satisfies for all $x\in [0,1]$,
\[
	\fp(x) \le 1 - \sqrt{1 - \alpha x}.
\]
\end{lemma}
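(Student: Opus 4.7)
The plan is to mirror exactly the approach used for Lemma~\ref{lem:fp_lower}: identify a single worst-case \ppp-triangle for which the constraint on $\fp$ becomes tight, write $\alpha LP - ALG$ as a quadratic in $\fp(x)$, and then invert the quadratic to isolate the admissible range of $\fp(x)$.

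First, I would take the \ppp-triangle with edge lengths $(x,x,0)$, which is legitimate because it satisfies the triangle inequality with equality and has all edges in $E^+$. The zero-length edge is the key feature: the rounding function evaluated at that edge is $\fp(0)=0$, so most of the cross-terms in the expressions for $\ecost$ and $\elp$ simplify considerably. Specifically, setting $u,v,w$ so that $x_{uw}=0$ and $x_{uv}=x_{vw}=x$, one directly computes from \prettyref{eq:elp} that $\elp_v(w,u)=0$ while $\elp_w(u,v)=\elp_u(v,w)=x$, giving $LP=2x$. Analogously, \prettyref{eq:ecost} gives $\ecost_w(u,v)=\ecost_u(v,w)=\fp(x)$ (since exactly one endpoint of each such edge sees the pivot through the zero-length edge) and $\ecost_v(w,u)=2\fp(x)(1-\fp(x))$, so $ALG=4\fp(x)-2\fp(x)^2$.

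Next I would rearrange $ALG \le \alpha LP$ into the clean quadratic form
\[
(\fp(x)-1)^{2} \;\ge\; 1-\alpha x,
\]
and finally read off the solution. Since $\fp(x)\in[0,1]$, the branch $\fp(x)\ge 1+\sqrt{1-\alpha x}$ is infeasible whenever $\alpha x\le 1$, so the only admissible branch is
\[
\fp(x)\le 1-\sqrt{1-\alpha x},
\]
which is exactly the desired bound. When $\alpha x>1$ the right-hand side in the stated inequality is complex and the bound is vacuous (equivalently, $(\fp(x)-1)^2\ge 0\ge 1-\alpha x$ is automatic), so the lemma still holds in that range.

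There is essentially no obstacle of depth here; the only subtlety is the branch choice for the quadratic and the verification that the $(x,x,0)$ triangle is indeed a worst case (or rather, that it is sufficient to obtain a necessary condition on $\fp$). That last point comes for free because the lemma asserts a necessary condition: the existence of one triangle on which the $\alpha$-approximation must hold forces the stated inequality on $\fp$, without any claim that this particular triangle is globally tight.
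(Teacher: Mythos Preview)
Your proposal is correct and follows essentially the same route as the paper: both take the \ppp-triangle with edge lengths $(x,x,0)$, compute $\alpha LP - ALG = 2(\alpha x - 2\fp(x) + \fp(x)^2)$, and solve the resulting quadratic in $\fp(x)$. Your additional remarks on the branch choice and the vacuous case $\alpha x > 1$ are sound and slightly more explicit than the paper's terse ``solving for $\alpha LP - ALG \ge 0$ in terms of $\fp(x)$, we get our result.''
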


\begin{proof}
	We compute $\alpha LP - ALG$ using \prettyref{eq:ecost} and
	\prettyref{eq:elp} as before. This yields a different quadratic function in $\fp(x)$:
	$$\alpha LP - ALG = 2(\alpha x - 2\fp(x) + \fp(x)^2).$$
	Solving for $\alpha LP - ALG \ge 0$ in terms of $\fp(x)$, we get our result.
\end{proof}

The bounds from \prettyref{lem:fp_upper} and \prettyref{lem:fp_lower} give a
restricted region in which $\fp(x)$ may be for $x \in [0,\tfrac{1}{2}]$ and $x
\in [0,\tfrac{1}{\alpha}]$ to get an $\alpha$-approximation. We chose $\fp$
so that it would violate neither constraint, and also be easy to analyze.

\begin{claim}
\iffalse
	Fix $\fm(x) = x$. Then
	\[
		\fp(x) =
		\begin{cases}
			0 & x < a \\
			\tfrac{(x-a)^2}{(b-a)^2} & x \in [a,b]\\
			1 & x > b
		\end{cases}
	\]
	does not violate
\fi
Functions $f^+$ and $f^-$ from \prettyref{thm:complete} do not violate the conditions in \prettyref{lem:fp_upper} or \prettyref{lem:fp_lower} for an $\alpha = 2.06$ approximation when $a = 0.19, b = 0.5095$.
\end{claim}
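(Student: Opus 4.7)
The plan is to verify the two pointwise inequalities of Lemmas~\ref{lem:fp_upper} and~\ref{lem:fp_lower} at $\alpha = 2.06$ separately on each of the three intervals on which $f^+$ has a different analytic form: $[0,a]$, $[a,b]$, and $[b,1]$, with $a = 0.19$ and $b = 0.5095$. On the outer intervals the verifications reduce to trivial inequalities; the work is concentrated on $[a,b]$, where both bounds are comparisons of a quadratic in $x$ to an algebraic function in $x$.

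For the upper bound $f^+(x) \le 1 - \sqrt{1 - \alpha x}$ of Lemma~\ref{lem:fp_upper}: on $[0,a]$ the left-hand side is $0$ and the right-hand side is nonnegative since $\alpha a = 2.06 \cdot 0.19 < 1$. On $[b,1]$ the constraint is vacuous: reading back the proof of Lemma~\ref{lem:fp_upper}, the upper bound comes from the quadratic $p^2 - 2p + \alpha x \ge 0$, which is automatic when the discriminant is nonpositive, i.e., $\alpha x \ge 1$. Since $b = 0.5095 > 1/2.06$, this holds throughout $[b,1]$, so $f^+(x) = 1$ is permitted. On the middle interval $[a,b]$ one reduces to checking that $h(x) := 1 - \sqrt{1 - \alpha x} - ((x-a)/(b-a))^2 \ge 0$. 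Squaring the equivalent inequality $1 - ((x-a)/(b-a))^2 \ge \sqrt{1 - \alpha x}$ (valid since $\alpha x < 1$ and the left side is in $[0,1]$ on $[a,b]$) produces a polynomial in $x$ of degree $4$ whose nonnegativity on $[a,b]$ can be certified by evaluating at the endpoints and by Sturm's theorem (or, concretely, by checking sign changes of the polynomial and its derivative at rational sample points).

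For the lower bound of Lemma~\ref{lem:fp_lower}, denote the prescribed lower bound by $L(x)$. On $[0,a]$ we must show $L(x) \le 0 = f^+(x)$: this in turn reduces to showing that the numerator of $L$ is nonpositive, which, after rationalizing, becomes a polynomial inequality in $x$ that one checks is nonpositive for $x \in [0,a]$ (the choice $a = 0.19$ is calibrated so that $L(a) = 0$). On $[b, \tfrac{1}{2}]$ the condition $1 \ge L(x)$ is immediate, since $L(x)$ is the smaller root of a quadratic with constant term $1 - \alpha < 0$, hence $L(x) \le 1$ on its domain. On the middle interval $[a,b]$ we need $((x-a)/(b-a))^2 \ge L(x)$. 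Rearranging the inequality defining $L(x)$ in the proof of Lemma~\ref{lem:fp_lower}, this is equivalent to $-1 + \alpha - 4x - 4x(-2 + \alpha x)\bigl(\tfrac{x-a}{b-a}\bigr)^2 - (1 + \alpha - 2\alpha x)\bigl(\tfrac{x-a}{b-a}\bigr)^4 \ge 0$, which is a quartic in $x$ whose nonnegativity on $[a,b]$ is again verified by a sign analysis after clearing denominators.

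The main obstacle is the verification on $[a,b]$: the upper bound from Lemma~\ref{lem:fp_upper} and the lower bound from Lemma~\ref{lem:fp_lower} leave only a narrow feasible region, and the parameters $a = 0.19$ and $b = 0.5095$ were selected precisely so that the quadratic $((x-a)/(b-a))^2$ threads through this region. Accordingly, the proof is necessarily computer-assisted, but conceptually it is just a finite sequence of univariate polynomial nonnegativity certificates on explicit rational intervals, each of which can be mechanically checked.
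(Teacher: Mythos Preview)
The paper itself does not supply a proof of this claim: it is asserted, illustrated by a picture of the feasible region with $f^+$ threaded through it, and the parameters $a,b$ are said to have been ``chosen computationally within these analytic bounds.'' So your proposal is already more detailed than the paper's treatment, and the overall strategy --- split into the three analytic pieces of $f^+$, reduce each inequality to a univariate polynomial sign check, and verify mechanically --- is exactly the right way to make the claim rigorous.

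That said, a few small inaccuracies should be cleaned up. First, on the upper bound over $[a,b]$: you write ``valid since $\alpha x<1$,'' but $1/\alpha \approx 0.4854 < b = 0.5095$, so $\alpha x \ge 1$ on $(1/\alpha,b]$. The fix is immediate: the same ``discriminant nonpositive'' argument you used for $[b,1]$ already covers $[1/\alpha,b]$, so the squaring argument only needs to run on $[a,1/\alpha]$. Second, on the lower bound, the interval $[b,\tfrac12]$ is empty (since $b>\tfrac12$), so that case disappears and the check reduces to $[0,a]$ and $[a,\tfrac12]$. Third, your assertion that $L(a)=0$ is not correct: plugging $p=0$ into the quadratic $-1+\alpha-4x-4x(-2+\alpha x)p-(1+\alpha-2\alpha x)p^2$ at $x=a$ gives $\alpha-1-4a = 0.3 > 0$, so $L(a)<0$ strictly; in fact $L(x)\le 0$ holds for all $x\le (\alpha-1)/4 = 0.265$, comfortably covering $[0,a]$. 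Finally, the polynomial you write down for the lower bound on $[a,b]$ has degree $5$ in $x$ (the term $(1+\alpha-2\alpha x)\bigl(\tfrac{x-a}{b-a}\bigr)^4$ is degree $5$), not degree $4$; this does not affect the method, only the bookkeeping.
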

\begin{center}
	\includegraphics[width=0.4\textwidth,natwidth=239,natheight=168]{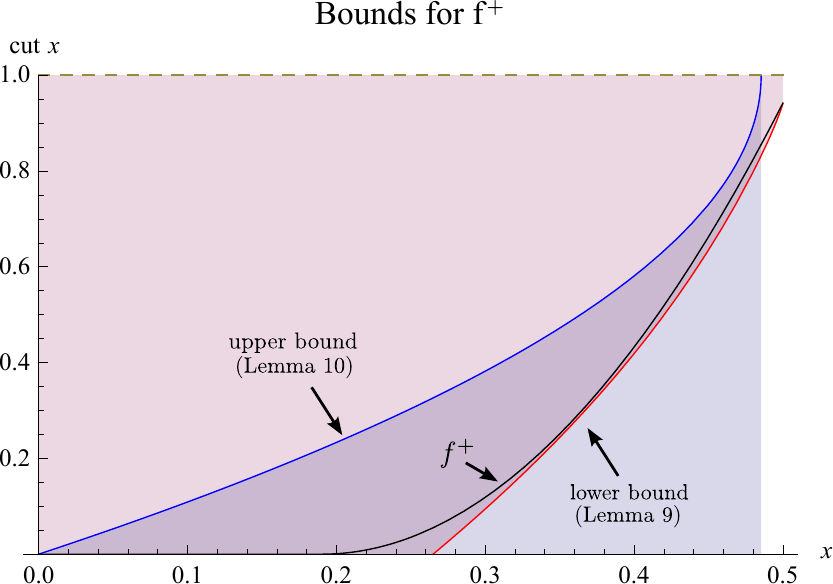}
\end{center}

The parameters
$a,b$ were chosen computationally within these analytic bounds so as to
minimize $\alpha$.

\iffalse
\begin{remark}
It turns out that the bound from the \ppp-triangle is the looser one, since a
\ppp-triangle with lenghts $(x,x,0)$ is not the worst case, and indeed the
chosen values put $\fp$ closer to the bound from the \ppm-triangle, which was
given by a tight case.
\todo{Tselil: add pointer to the complete analysis about this}
\end{remark}
\fi

\section{Pictorial Proofs} \label{sec:pic_proofs}
Here we give a \emph{pictorial proof} of our main result, \prettyref{thm:complete}. This
proof serves as an illustration for an analytical proof we present in
\prettyref{app:complete}.

To prove \prettyref{thm:complete}, we use the framework presented in \prettyref{sec:performing_analysis}, bounding the approximation ratio of each triangle for every set of LP weights permitted by the constraints. \prettyref{lem:tight-triangle} allows us to consider only triangles for which
the triangle inequality is tight, that is, triangles of the form $(x,y,x+y)$.
\prettyref{fig:ppp_ratio}, \prettyref{fig:ppm_ratio}, \prettyref{fig:pmm_ratio}, and \prettyref{fig:mmm_ratio} are plots of the polynomials $2.06 \cdot LP(uvw) - ALG(uvw)$ when the triangle inequality is tight; the fact that each of these polynomials is positive in the range of possible LP weights proves \prettyref{thm:complete}.

In \prettyref{app:complete}, we provide an analytical proof of
\prettyref{thm:complete}. For each case, we show
that this difference polynomial is positive for all possible LP weights. In the first two cases,
we take partial derivatives to find the worst triangle lengths in terms of a single
variable, then bound the roots of the polynomials; in the latter two cases we
are able to provide a factorization for the polynomial that is positive
term-by-term. For the complete argument, see \prettyref{app:complete}.

\begin{figure}[h!]
	\begin{center}
	\includegraphics[width=0.4\textwidth,natwidth=288,natheight=240]{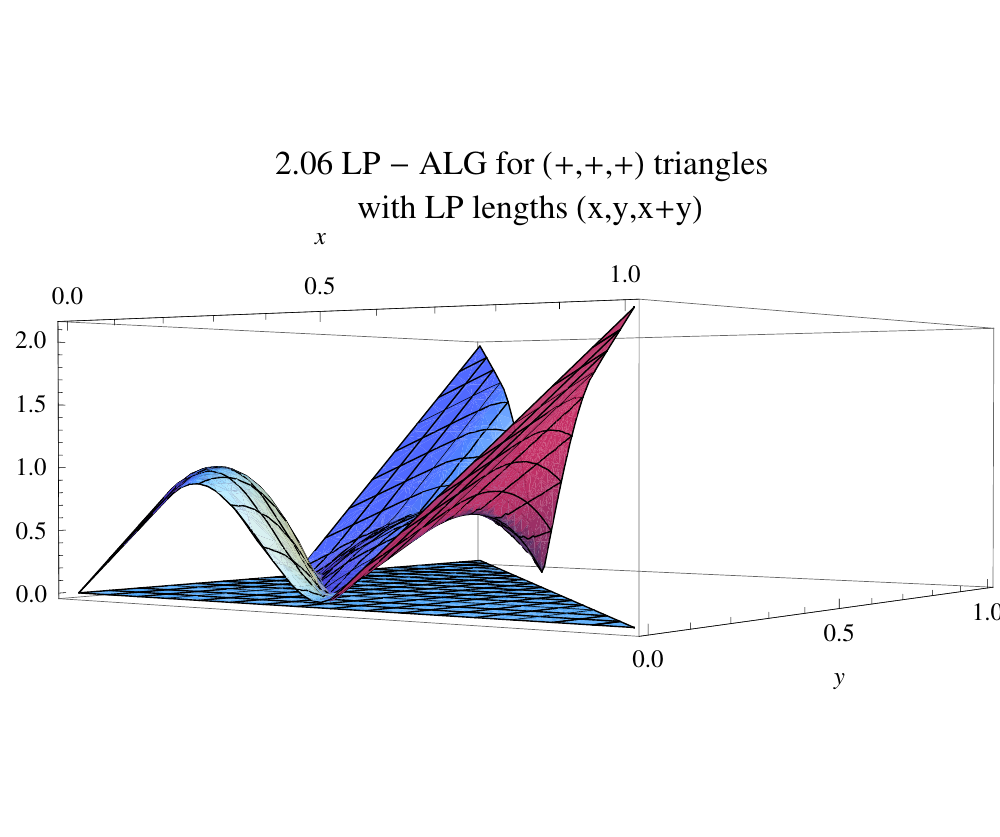}\hspace{1cm}
	\includegraphics[width=0.4\textwidth,natwidth=288,natheight=240]{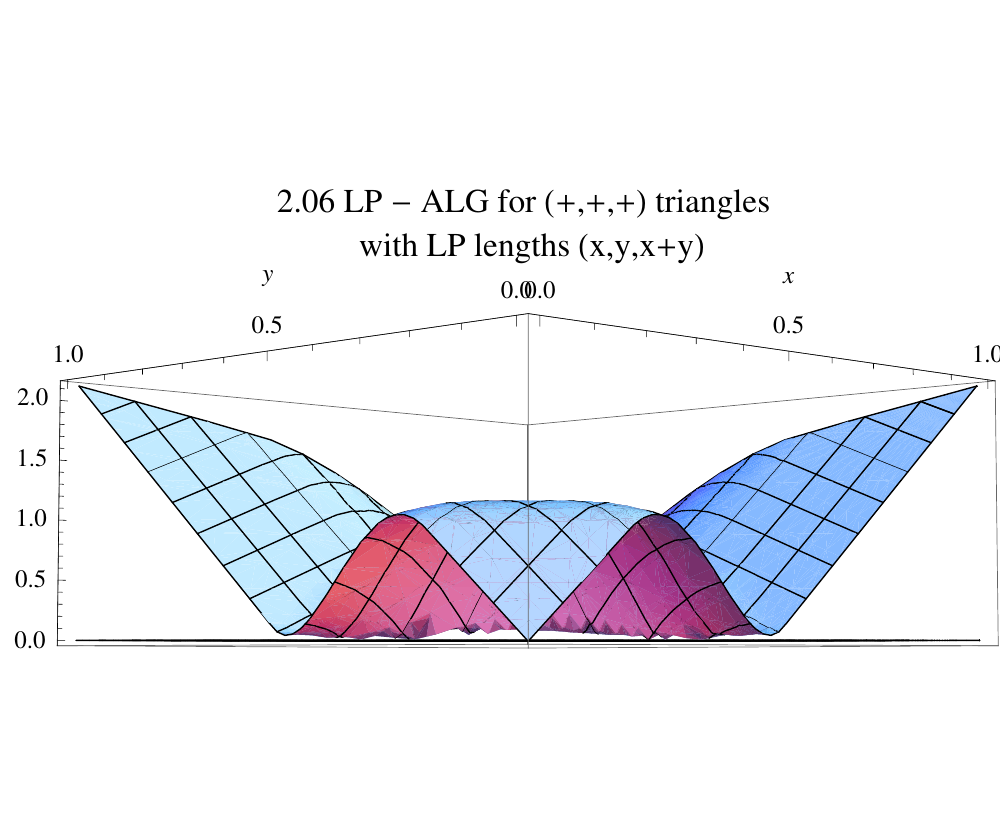}
	\caption{The difference between $\alpha \cdot LP(uvw) - ALG(uvw)$ for \ppp-triangles with tight triangle inequality constraints.}
	\label{fig:ppp_ratio}
\end{center}
\end{figure}
\begin{figure}[h!]
	\centering
	\includegraphics[width=0.4\textwidth,natwidth=288,natheight=240]{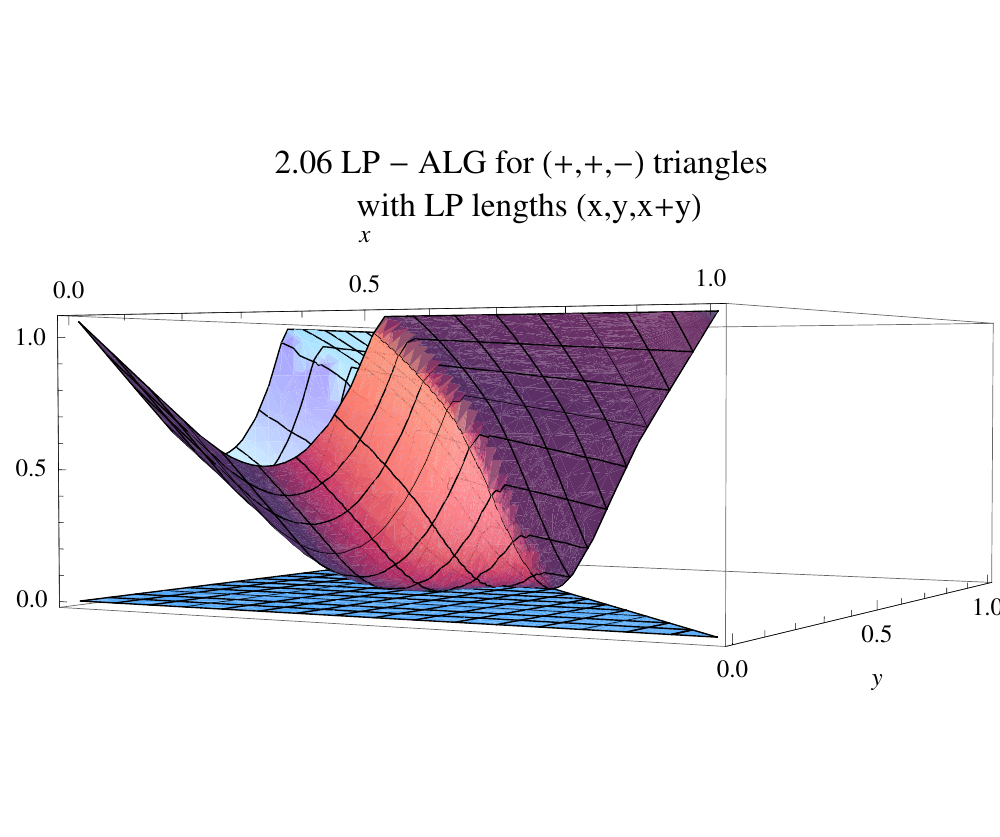} \hspace{1cm}
	\includegraphics[width=0.4\textwidth,natwidth=288,natheight=240]{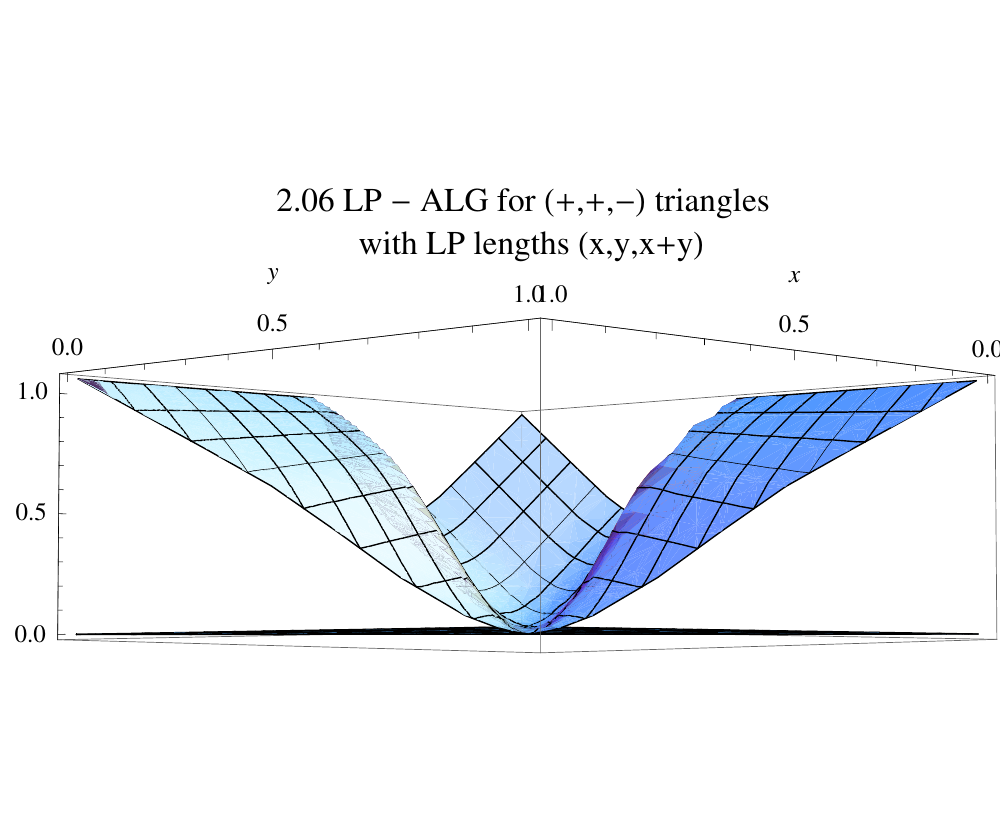}
	\caption{The difference between $\alpha\cdot LP(uvw) - ALG(uvw)$ for \ppm-triangles with tight triangle inequality constraints.}
	\label{fig:ppm_ratio}
\end{figure}
\begin{figure}[ht!]
	\centering
		\begin{minipage}{0.45\textwidth}
	\includegraphics[width=\textwidth,natwidth=288,natheight=240]{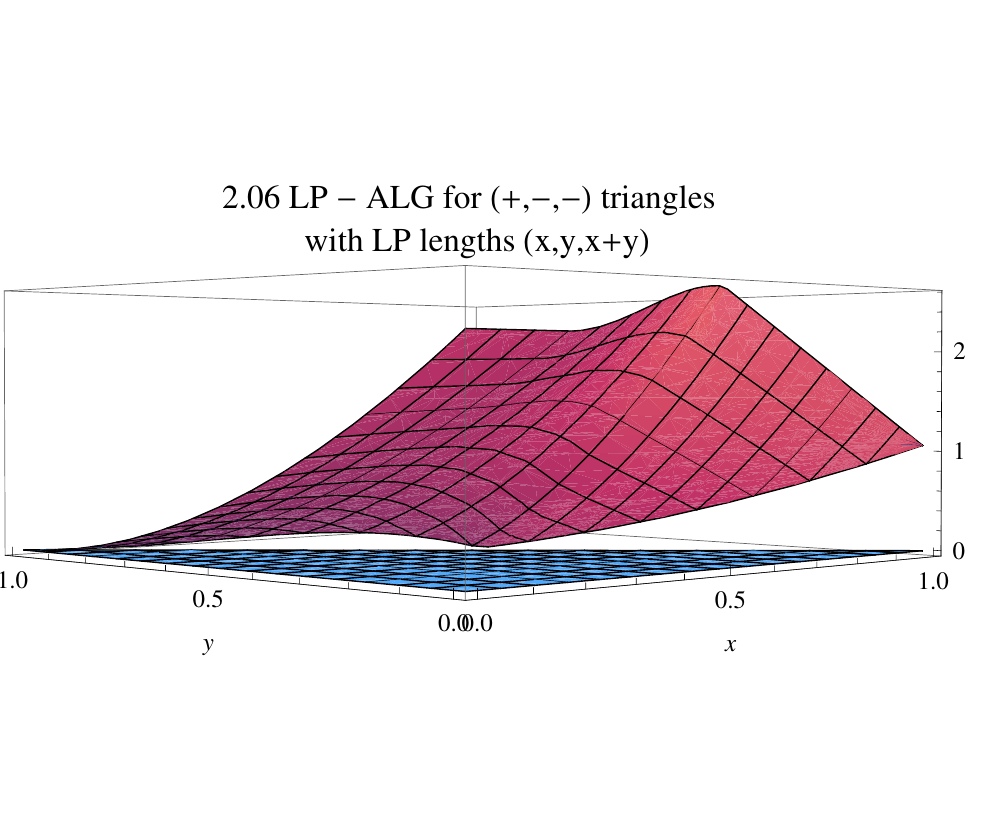}
	\begin{flushleft}
	\captionof{figure}{The difference between $\alpha\cdot LP(uvw) - ALG(uvw)$ for
	\pmm-triangles with tight triangle inequality constraints.}
	\label{fig:pmm_ratio}
	\end{flushleft}
	\end{minipage}\qquad
	\begin{minipage}{0.4\textwidth}
	\centering
	\includegraphics[width=\textwidth,natwidth=288,natheight=240]{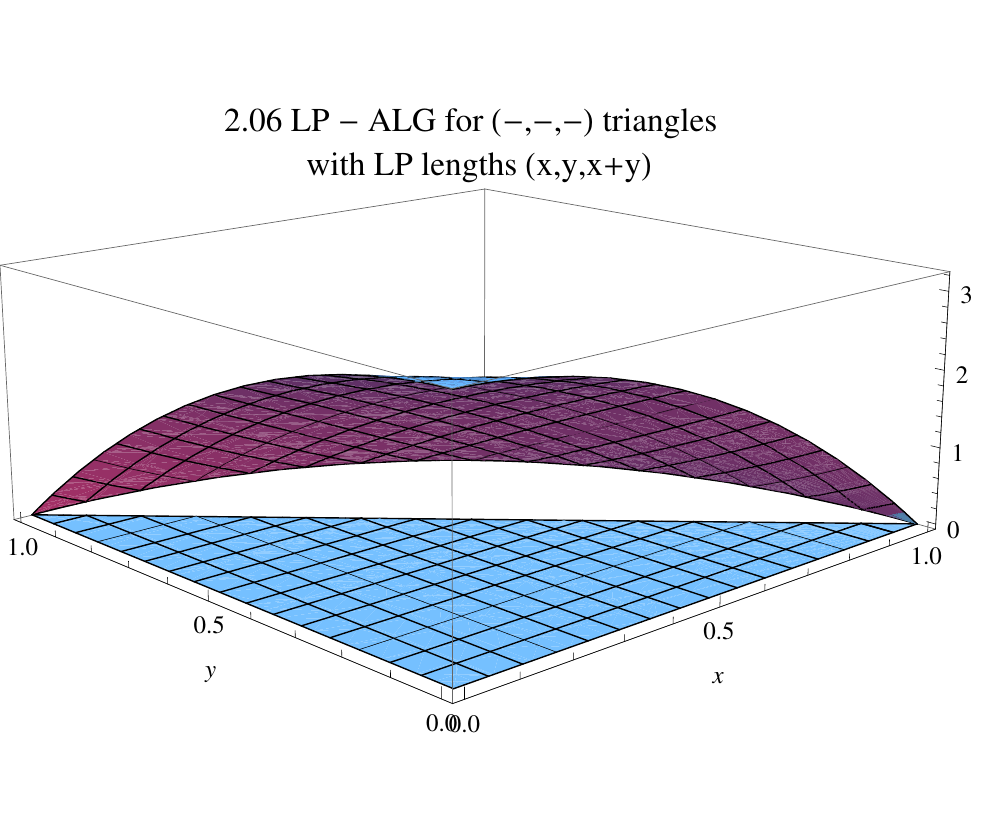}
	\begin{flushright}
	\captionof{figure}{The difference between $\alpha\cdot LP(uvw) - ALG(uvw)$ for \mmm-triangles with tight triangle inequality constraints.}
	\label{fig:mmm_ratio}
	\end{flushright}
	\end{minipage}
\end{figure}
\pagebreak

\section{Lower Bounds}\label{sec:gaps}
\subsection{Integrality Gap for Bipartite Graphs}
\begin{theorem}\label{thm:integrality-gap-bipartite}
For every constant $\delta > 0$, there exists an instance of bipartite correlation
clustering with an integrality gap of $3 - \delta$.
\end{theorem}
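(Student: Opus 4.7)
The plan is to use the probabilistic method: I would construct a random bipartite instance and show that it has LP value about $n^2/9$ but integer optimum about $n^2/3$. Fix $\delta>0$, let $n$ be large (depending on $\delta$), and take disjoint vertex sets $L,R$ each of size $n$. Independently for each pair $(l,r) \in L\times R$, label it positive with probability $p=1/3$ and negative with probability $2/3$. I claim that, with high probability, an instance drawn this way exhibits a gap of $3-o(1)$; taking $n$ large enough as a function of $\delta$ then yields the theorem.

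For the LP upper bound, I would exhibit the explicit feasible solution $x_{uv}=1/3$ on positive edges, $x_{uv}=1$ on negative edges, and $x_{uv}=2/3$ on every same-side pair. Because the graph is bipartite, triangle inequality only needs to be verified for triples with at most two real edges: for a same-side pair $\{l_1,l_2\}$ and a third vertex $r$, the constraint $x_{l_1l_2}\le x_{l_1r}+x_{l_2r}$ becomes the tight inequality $2/3\le 2/3$ when both edges are positive and is slack otherwise, while $x_{l_1r}\le x_{l_1l_2}+x_{l_2r}$ reduces to $1\le 2/3+1/3$ in the worst case; for same-side triples, $2/3\le 2/3+2/3$. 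Thus the LP solution is feasible deterministically, and its objective equals $|E^+|/3$, which by a Chernoff bound concentrates at $(1+o(1))\,n^2/9$.

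For the integer lower bound, I would argue that \emph{every} clustering has cost at least $n^2/3-o(n^2)$ with high probability. Fix a clustering $\mathcal{C}$ of $L\cup R$, set $a_i=|C_i\cap L|$, $b_i=|C_i\cap R|$, and $S=\sum_i a_ib_i$. Then
\[
\Exp[\mathrm{cost}(\mathcal{C})] \;=\; p(n^2-S)+(1-p)S \;=\; \tfrac{n^2}{3}+\tfrac{S}{3} \;\ge\; \tfrac{n^2}{3}.
\]
Since $\mathrm{cost}(\mathcal{C})$ is a sum of $n^2$ independent indicators, Chernoff gives exponential concentration, and a union bound over all clusterings of $L\cup R$ (Bell number $B_{2n}\le(2n)^{2n}$, so $\log$-count $O(n\log n)$) shows that with probability $1-o(1)$ every clustering has cost at least $n^2/3-O(n^{3/2}\sqrt{\log n})$. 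Dividing by the LP value yields a ratio of $3-o(1)$.

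The main technical difficulty is the integer lower bound, because a naive union bound is over exponentially many clusterings; what makes the argument go through is that the per-clustering Chernoff deviation $O(n^{3/2}\sqrt{\log n})$ is asymptotically dwarfed by the $\Theta(n^2)$ expected cost, leaving comfortable room. An explicit derandomization via a pseudorandom biregular bipartite graph of edge density $1/3$ (together with a counting argument showing that any clustering separates roughly a $2/3$-fraction of the positive edges) would produce a deterministic family with the same gap, but the probabilistic construction is cleaner.
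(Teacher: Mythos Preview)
Your argument is correct and takes a genuinely different route from the paper. The paper builds the instance \emph{explicitly}: it takes a $d$-regular bipartite graph of girth $\Omega_d(\log n)$ (as in \cite{LPS88}), declares its edges positive and all remaining cross-pairs negative, and uses the same fractional point $x\in\{1/3,2/3,1\}$ that you do. Its integer lower bound is structural rather than probabilistic: first it shows that any optimal cluster has size at most $8d$ (otherwise splitting it strictly improves the cost), and then, because the girth exceeds $8d$, the positive edges inside each cluster form a forest, so at most $\sum_i(|C_i|-1)\le 2n$ positive edges are uncut and the cost is at least $(1-2/d)|E^+|$. Your approach replaces the expander and the girth/forest counting by a random dense instance and a Chernoff-plus-union-bound over all partitions, exploiting that the per-partition deviation $O(n^{3/2}\sqrt{\log n})$ is negligible against the $\Theta(n^2)$ mean. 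The trade-off is what one would expect: the paper's construction is explicit and sparse (and the lower bound is exact, not ``with high probability''), while your argument is shorter, avoids any appeal to expander existence, and makes the $3-o(1)$ ratio drop out of a one-line expectation computation. Your closing remark about derandomizing via a pseudorandom biregular graph is exactly in the spirit of the paper's construction.
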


\begin{proof}
We use the following family of bipartite expanders.
\begin{theorem}[see e.g.,~\cite{LPS88}]\label{thm:bipartite-expander}
For every $d = p+1$ (where $p$ is prime), and for sufficiently large $n$ there exists a regular bipartite graph of
degree $d$ with girth $\Omega_d(\log n)$.
\end{theorem}

Let $G(V_1, V_2, E)$ be a bipartite expander as in
Theorem~\ref{thm:bipartite-expander}.  Let $V = V_1 \cup V_2$.  We define a
weight function $w \colon V_1 \times V_2 \rightarrow \{0,1\}$ as follows:
\begin{align*}
w(u,v) =
\begin{cases}
0, \text { if } (u,v) \in E \\
1, \text { otherwise}.
\end{cases}
\end{align*}
We define three sets of pairs:
\begin{align*}
E^+ & = E \\
E^- & = \{(u,v) | (u,v) \in V_1 \times V_2, (u,v) \notin E\}\\
E^{\emptyset} &= \{(u,v) | (u,v) \notin V_1 \times V_2\} .
\end{align*}
Consider the following solution to the linear program:
\begin{align*}
x_{uv} =
\begin{cases}
1/3, &\text{ if } e \in E^+  \\
1, &\text{ if }  e \in E^- \\
2/3, &\text{ if } e \in E^{\emptyset}.\\
\end{cases}
\end{align*}
Note that all triples of LP values satisfy triangle inequality except $(1/3, 1/3, 1)$-triples.
These latter triples don't exist by the bipartiteness of $G$.
Thus, we have a feasible LP solution of cost $|E|/3$.
Using Lemma~\ref{lem:clustering-cost-bipartite} below we can conclude that the ratio between the cost of the optimum solution and the best solution for the linear program is at least $3 (1 - 2/d)$. Taking large enough $d$ the proof or the theorem follows.
\end{proof}
It remains to show the following lemma:
\begin{lemma}\label{lem:clustering-cost-bipartite}
The cost of the optimum clustering for the weight function $w$ is at least $(1 - 2/d) |E|$.
\end{lemma}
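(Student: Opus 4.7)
The plan is to lower-bound the cost of any clustering by tracking, for each cluster, the number of edges of $G$ versus the number of bipartite non-edges it contains, and then exploiting the fact that the girth assumption forces every induced subgraph of $G$ to be $K_{2,2}$-free.

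Fix a clustering $\mathcal{C}$ of $V = V_1 \cup V_2$, and for each cluster $C$ let $a_C = |C \cap V_1|$, $b_C = |C \cap V_2|$, and $e_C = |E(G[C])|$. A positive edge (edge of $G$) costs $1$ when cut, contributing $|E| - \sum_C e_C$ in total, while a negative edge (non-edge across the bipartition) costs $1$ when placed within a single cluster, contributing $\sum_C (a_C b_C - e_C)$. Summing,
\[
\mathrm{cost}(\mathcal{C}) \;=\; |E| \;-\; \sum_C \bigl(2 e_C - a_C b_C\bigr),
\]
so the lemma reduces to the inequality $\sum_C (2 e_C - a_C b_C) \leq 2 |E|/d$.

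To bound each cluster's contribution I use the girth. For $n$ large, Theorem~\ref{thm:bipartite-expander} gives girth at least $5$, so every induced subgraph of $G$ is $K_{2,2}$-free. Counting paths of length two in $G[C]$ with middle vertex in $V_1 \cap C$ and endpoints in $V_2 \cap C$, and noting that every pair of $V_2$-vertices has at most one common $V_1$-neighbor in $G[C]$, gives $\sum_{u \in V_1 \cap C} d_u(d_u - 1) \leq b_C (b_C - 1)$, where $d_u$ denotes the degree in $G[C]$. Combined with Cauchy--Schwarz applied to $\sum_u d_u = e_C$, this yields the Kovari--Sos--Turan-style inequality
\[
e_C (e_C - b_C) \;\leq\; a_C (a_C - 1) \, b_C.
\]
The main technical step, and the one I expect to be the most delicate, is to upgrade this quadratic to the clean linear per-cluster bound $2 e_C - a_C b_C \leq a_C + b_C$. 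After handling the trivial cases $a_C = 0$, $b_C = 0$, and $e_C \leq b_C$ directly, I will argue by contradiction: assuming $2 e_C > a_C b_C + a_C + b_C$, substituting the lower bound on $e_C(e_C - b_C)$ into the Kovari--Sos--Turan inequality reduces the contradiction to the algebraic statement $a_C^2 (b_C - 1)^2 + 4 a_C b_C - b_C^2 > 0$ for integers $a_C, b_C \geq 1$. For $a_C = 1$ this collapses to $2 b_C + 1 > 0$; for $a_C \geq 2$ it is a quadratic in $b_C$ with positive leading coefficient $a_C^2 - 1$ whose minimum over real $b_C$ evaluates to $a_C^2 (4 a_C - 5)/(a_C^2 - 1) > 0$.

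Summing the per-cluster inequality over all clusters and using that $\mathcal{C}$ is a partition, so $\sum_C a_C = |V_1|$ and $\sum_C b_C = |V_2|$, together with $d$-regularity of $G$ giving $|V_1| = |V_2| = |E|/d$, yields $\sum_C (2 e_C - a_C b_C) \leq |V_1| + |V_2| = 2|E|/d$. Combined with the cost identity above, this completes the proof of the lemma.
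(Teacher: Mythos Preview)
Your proof is correct and takes a genuinely different route from the paper's. The paper first argues that in an \emph{optimal} clustering no cluster can have more than $8d$ vertices (by a direct splitting argument: a large cluster has more internal negative pairs than internal edges of $G$, so cutting it in half lowers the cost), and then uses that the girth of $G$ exceeds $8d$ to conclude that $G$ restricted to each cluster is acyclic; hence $\sum_C e_C \le \sum_C(|C|-1) \le n = 2|E|/d$, and already the positive-edge cost alone is at least $|E| - n = (1-2/d)|E|$.

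Your argument instead works for \emph{every} clustering, not just optimal ones, and needs only girth at least $5$ rather than girth exceeding $8d$: $K_{2,2}$-freeness plus a K\H{o}v\'ari--S\'os--Tur\'an count gives the per-cluster inequality $2e_C - a_Cb_C \le a_C + b_C$, which sums cleanly. The tradeoff is that your route requires the somewhat delicate algebraic verification you flag, whereas the paper's forest bound is immediate once cluster sizes are controlled. One minor slip in the write-up: counting length-two paths with middle vertex in $V_1\cap C$ actually yields $e_C(e_C-a_C)\le a_Cb_C(b_C-1)$; the inequality $e_C(e_C-b_C)\le a_C(a_C-1)b_C$ you wrote corresponds to centers in $V_2\cap C$. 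Since the roles of $a_C$ and $b_C$ are symmetric, this does not affect the argument.
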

\begin{proof}
Consider the optimum clustering of $G$ denoted $C_1, \dots, C_k$.
\begin{claim}\label{clm:cluster-size}
No cluster $C_i$ has size greater than $8d$.
\end{claim}
\begin{proof}
We give a proof by contradiction.
Assume there is a cluster $C_i$ of size greater than $8d$.
Let $s_1 = |C_i \cap V_1|$ and $s_2 = |C_i \cap V_2|$. W.l.o.g we can assume that $s_1 \ge s_2$.
We will show that splitting $C_i$ into two new clusters $C^1_i$ and $C^2_i$ gives a solution of smaller cost.
Let $C^1_i$ consist of $s_1 / 2$ arbitrary vertices from $C_i \cap V_1$ and $s_2 / 2$ arbitrary vertices from $C_i \cap V_2$. Let $C^2_i = C_i \setminus C^1_i$. The difference between the cost of the clustering $(C_1, \dots, C_i, \dots, C_k)$ and $(C_1, \dots, C^1_i, C^2_i, \dots, C_k)$ is equal to $|E^+ \cap \left(C^1_i \times C^2_i\right)| - |E^- \cap \left(C^1_i \times C^2_i\right)|$ or in other words the difference between the number of positive and negative edges between $C^1_i$ and $C^2_i$.
Note that:
\begin{align*}
|E^+ \cap \left(C^1_i \times C^2_i\right)| &\le s_2 d \\
 |E^- \cap \left(C^1_i \times C^2_i\right)| &\ge \frac{s_1 s_2}{2} - s_2 d.
\end{align*}
This implies that if $s_1 > 4d$ then the difference is negative, a contradiction.
\end{proof}
Using Claim~\ref{clm:cluster-size} and the fact that the girth of $G$ is $\Omega(\log |V|) > 8d$ we conclude that the subset of $E^+$ contained inside each $C_i$ forms a forest.
Thus, the total number of edges from $E^+$ which are not contained inside any cluster is at least $|E^+| - \sum_{i = 1}^k (|C_i| - 1) \ge |E^+| - n = (1 - 2/d) |E^+|$.
\end{proof}

\subsection{Integrality Gap for Complete Graphs with Triangle Inequalities}
See Section~\ref{sec:triangle-inequalities} for a formal definition of correlation clustering
in weighted complete graphs with triangle inequalities.
\begin{theorem}\label{thm:integrality-gap-triangle-inequalities}
For every $\delta > 0$ there exists an instance of correlation clustering in complete graphs, which satisfies triangle inequalities and has integrality gap $6/5 - \delta$.
\end{theorem}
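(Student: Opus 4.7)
The plan is to mirror the bipartite integrality gap of Theorem~\ref{thm:integrality-gap-bipartite}: exhibit a family of weighted complete-graph instances whose ratio of integer optimum to LP optimum approaches $6/5$ as a size parameter grows. As in the bipartite construction, the instance will be built from a structured base graph so that the LP has a particularly clean feasible solution, while the integer optimum is forced to be strictly worse by a structural argument about cluster sizes.

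Concretely, I would base the construction on a $d$-regular (non-bipartite) graph $G(V,E)$ of large girth $g=\Omega_d(\log |V|)$, and set
\[
\lambda^-_{uv}=\min\{d_G(u,v)/L,\,1\},\qquad \lambda^+_{uv}=1-\lambda^-_{uv},
\]
where $d_G$ denotes shortest-path distance in $G$ and $L$ is a threshold to be tuned. Triangle inequality on $\lambda^-$ is inherited from the metric $d_G$ since capping at $1$ preserves the triangle inequality. For the LP upper bound, take $x_{uv}=\lambda^-_{uv}$, which is feasible because it is a metric; the per-edge contribution evaluates to $\lambda^+_{uv}x_{uv}+\lambda^-_{uv}(1-x_{uv})=2\lambda^-_{uv}(1-\lambda^-_{uv})$, so summing over distance shells in $G$ (whose sizes are controlled by the regularity and the girth) yields a clean closed form for an upper bound on the LP cost.

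For the matching integer lower bound, I would adapt Claim~\ref{clm:cluster-size} and Lemma~\ref{lem:clustering-cost-bipartite}: first, a local exchange argument shows that every optimum cluster has $O_d(1)$ vertices, because once a cluster is large enough the internal $\lambda^-$ weight dominates the $\lambda^+$ weight saved by not separating it, so splitting strictly improves cost. The girth assumption then forces the positive-edge subgraph inside each small cluster to be essentially a tree, which allows us to count the pairs contributing to the integer cost at each distance level and obtain a lower bound tight up to lower-order terms.

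The main obstacle is calibrating the parameters $d$ and $L$ so that the limiting ratio is \emph{exactly} $6/5$ rather than some other constant. This reduces to a one-variable analytic optimization balancing the distance-shell LP sum against the integer-cost lower bound. If the expander-based family fails to hit $6/5$ precisely, the backup plan is to identify a small explicit tight configuration --- guided by the worst-case triangles arising in the analysis of the $1.5$-approximation in Section~\ref{sec:triangle-inequalities} --- and then boost it to arbitrary $n$ by taking many nearly disjoint copies, setting heavy $\lambda^-$ weights on inter-copy pairs so that triangle inequality and completeness are preserved without materially affecting the LP or integer costs.
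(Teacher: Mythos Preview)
Your proposal has a genuine gap: it is a plan rather than a proof, and the plan does not actually produce the constant $6/5$. You explicitly flag that ``calibrating the parameters $d$ and $L$ so that the limiting ratio is exactly $6/5$'' is the main obstacle, and nothing in the sketch resolves it. There is no reason to expect that an expander-plus-girth construction, with its parameters $d$ and $L$, will land on the specific ratio $6/5$; that number is not an artifact of any tree-counting or shell-summing identity you describe. The backup plan (find a small tight configuration and replicate it) is the right instinct, but it is also left unexecuted.

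More importantly, the structural intuition you import from the bipartite gap --- that optimal clusters must be small because internal negative weight eventually dominates --- points in exactly the wrong direction here. The paper's construction is the following two-line instance: take $V=V_1\cup V_2$ with $|V_1|=|V_2|=n$, set $\lambda^-_{uv}=1/3$ for $(u,v)\in V_1\times V_2$ and $\lambda^-_{uv}=2/3$ otherwise (triangle inequality is immediate). The LP solution $x_{uv}=1/2$ across parts and $x_{uv}=1$ within parts costs $\tfrac{5}{6}n^2+O(n)$. For the integer side, a merging argument (Lemma~\ref{lem:clustering-cost-triangle-inequalities}) shows that the optimum is a \emph{single} cluster of cost $n^2-O(n)$: whenever two clusters $C_i,C_j$ satisfy $|C_i\cap V_1|\ge |C_i\cap V_2|$ and $|C_j\cap V_1|\le |C_j\cap V_2|$, merging them does not increase cost. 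So the hard instance has one giant cluster, not many tiny ones, and no girth or expander machinery is needed at all.
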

\begin{proof}
Let $V = V_1 \cup V_2$ where $|V_1| = |V| / 2 = n$.
We define a weight function $w \colon V \times V \rightarrow [0,1]$ as follows:
\begin{align*}
w(u,v) =
\begin{cases}
1/3, \text{ if } (u,v) \in V_1 \times V_2 \\
2/3, \text{ otherwise}.
\end{cases}
\end{align*}
Consider the following solution to the linear program:
\begin{align*}
x_{uv} =
\begin{cases}
1/2,  \text{ if } (u,v) \in V_1 \times V_2  \\
1, \text{ otherwise.}
\end{cases}
\end{align*}
This solution satisfies triangle inequalities. The cost of each edge $(u,v) \in V_1 \times V_2$ is $1/2$ and the cost of each edge $(u,v) \notin V_1 \times V_2$ is $1/3$. Overall, the cost is $\frac12 n^2 + \frac13 \cdot 2 \binom{n}{2} = \frac56 n^2 + O(n)$.
As we will show below in Lemma~\ref{lem:clustering-cost-triangle-inequalities}, the cost of the optimum clustering is at least $n^2 - O(n)$ concluding the proof.
\end{proof}

\begin{lemma}\label{lem:clustering-cost-triangle-inequalities}
The cost of the optimum clustering for the weight function $w$ is at least $n^2 - O(n)$.
\end{lemma}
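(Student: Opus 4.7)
The plan is to directly compute the cost of an arbitrary clustering in terms of its intersection sizes with $V_1$ and $V_2$, and show the result equals $n^2 - O(n)$ plus a nonnegative sum of squares.

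First, I would fix an arbitrary clustering $\{C_1, \dots, C_k\}$ and set $a_i := |C_i \cap V_1|$, $b_i := |C_i \cap V_2|$, so that $\sum_i a_i = \sum_i b_i = n$. Let $A := \sum_i \binom{a_i}{2}$, $B := \sum_i \binom{b_i}{2}$, and $C := \sum_i a_i b_i$ count the same-cluster pairs within $V_1$, within $V_2$, and across the bipartition, respectively. Under the cost convention implicit in the LP (cost $w(u,v)$ for a same-cluster pair and $1 - w(u,v)$ for a separated pair), I would assemble the total cost by pair type:
\[
\mathrm{cost}(\mathcal{C}) = \tfrac{1}{3}C + \tfrac{2}{3}(A+B) + \tfrac{2}{3}(n^2 - C) + \tfrac{1}{3}\bigl(2\tbinom{n}{2} - A - B\bigr).
\]
Using $2\binom{n}{2} = n^2 - n$, this simplifies to $\mathrm{cost}(\mathcal{C}) = n^2 - \tfrac{n}{3} + \tfrac{1}{3}(A + B - C)$.

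The main step is then the elementary identity $\binom{a}{2} + \binom{b}{2} - ab = \tfrac{1}{2}(a-b)^2 - \tfrac{1}{2}(a+b)$. Summing over clusters, and using $\sum_i (a_i + b_i) = 2n$, this yields $A + B - C = \tfrac{1}{2}\sum_i (a_i - b_i)^2 - n$. Substituting back gives
\[
\mathrm{cost}(\mathcal{C}) = n^2 - \tfrac{2n}{3} + \tfrac{1}{6}\sum_i (a_i - b_i)^2 \ \ge\ n^2 - \tfrac{2n}{3},
\]
which is the desired bound. There is no real obstacle; the argument is pure algebraic bookkeeping, the only care being to not double-count pairs and to apply $w$ according to pair type. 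As a sanity check, a clustering that pairs one vertex of $V_1$ with one of $V_2$ in each cluster has $(a_i - b_i)^2 = 0$ throughout, so the bound is tight up to the $O(n)$ additive term, confirming the integrality gap of $6/5$ claimed in Theorem~\ref{thm:integrality-gap-triangle-inequalities}.
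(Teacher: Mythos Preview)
Your proof is correct and takes a genuinely different route from the paper's. The paper argues by an exchange argument: starting from any clustering with at least two clusters, it finds clusters $C_i,C_j$ with $s_i\ge t_i$ and $s_j\le t_j$ (using $|V_1|=|V_2|$) and shows that merging them changes the cost by $\tfrac{1}{3}(s_i-t_i)(s_j-t_j)\le 0$; iterating, the single cluster is optimal, and its cost is then computed directly. Your approach instead derives a closed-form expression for the cost of \emph{every} clustering, $n^2-\tfrac{2n}{3}+\tfrac{1}{6}\sum_i(a_i-b_i)^2$, from which the lower bound is immediate. This is more elementary (no exchange step, just bookkeeping plus one binomial identity) and in fact yields a bit more: it characterizes all optimal clusterings as exactly the balanced ones ($a_i=b_i$ for every $i$), so the single cluster is not unique, which the paper's merging argument does not make visible.
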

\begin{proof}
We will show that the optimum clustering is a single cluster $V$. Then the lemma follows since the cost of such clustering is $\frac13 n^2$ for the edges in $V_1 \times V_2$ plus $\frac23 \cdot 2 \binom{n}{2}$ for the edges not in $V_1 \times V_2$, giving the overall cost of $n^2 - O(n)$.

We will give a proof by contradiction, assuming that the optimum clustering has multiple clusters $C_1, \dots, C_k$.
For every cluster $C_i$ let $s_i = |V_1 \cap C_i|$ and $t_i = V_2 \cap C_i$.
Because $|V_1| = |V_2|$ there exists a pair of clusters $C_i$ and $C_j$ such that $s_i \ge t_i$ while $s_j \le t_j$.
Consider a new clustering where $C_i$ and $C_j$ are replaced with $C_i \cup C_j$, while all other clusters are unchanged.
The difference between the cost of the new clustering and $C_1, \dots, C_k$ is equal to:
\begin{align*}
\frac13(s_i s_j + t_i t_j) - \frac13 (s_i t_j + s_j t_i) = \frac13 (s_j - t_j)(s_i - t_i) \le 0.
\end{align*}

\end{proof}

\subsection{\texorpdfstring{Lower Bound of $2.025$ for the Complete Case}{Lower Bound of 2.025 for the Complete Case}}\label{sec:lbd}
We will now argue that within the algorithm and analysis framework described in Section~\ref{sec:alg}, no choice of the rounding functions $\fp$ and $\fm$ can give an approximation factor better than $2.025$. More specifically, no functions $f^+$ and $f^-$
satisfy the conditions of Lemma~\ref{lem:alpha} with $\alpha = 2.025$.

\begin{theorem}\label{thm:limitation-complete}
For any functions $\fp$ and $\fm$ that satisfy the conditions of
Lemma~\ref{lem:alpha} on all triangle types, the ratio $\alpha$ must be larger than $2.025$.
\end{theorem}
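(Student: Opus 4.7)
The plan is to derive a small collection of pointwise inequalities that the rounding functions $f^+$ and $f^-$ must satisfy in order to fulfill the hypothesis of Lemma~\ref{lem:alpha} with ratio $\alpha$, and then to show by a direct (computer-assisted) calculation that no pair of values can satisfy them simultaneously when $\alpha \leq 2.025$. The key idea is that several constraints already implicit in Section~\ref{sec:choosing_f} become jointly infeasible well above $\alpha = 2$.

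I would collect constraints from three tight-triangle configurations. Corollary~\ref{cor:fmbound} applied to the $\pmm$ triangle $(0,y,y)$ gives $f^-(y) \geq \sqrt{1-\alpha(1-y)}$, and Lemma~\ref{lem:fp_upper} applied to the $\ppp$ triangle $(x,x,0)$ gives $f^+(x) \leq 1 - \sqrt{1-\alpha x}$. I would then redo the analysis of Lemma~\ref{lem:fp_lower} for the $\ppm$ triangle $(x,x,2x)$ \emph{without} the simplifying assumption $f^-(y)=y$. A direct computation of $\alpha LP(uvw) - ALG(uvw)$ using~(\ref{eq:ecost}) and~(\ref{eq:elp}) yields an inequality that is quadratic in $f^+(x)$ and linear in $f^-(2x)$; setting $y=2x$ in the first bound and substituting into this third constraint produces, after some algebra, a pure scalar lower bound on $f^+(x)$ in terms of $\alpha$ and $x$ alone. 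Combining with the upper bound from Lemma~\ref{lem:fp_upper} pins $f^+(x)$ both from above and below, and feasibility requires the former to be no larger than the latter for every $x \in (0,\tfrac{1}{2}]$.

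The theorem then follows by locating an $x^\star$ at which the lower bound on $f^+(x^\star)$ strictly exceeds $1 - \sqrt{1-\alpha x^\star}$ whenever $\alpha \leq 2.025$, and verifying the sign of the resulting explicit polynomial in $\alpha$ at a rational $x^\star$; a short numerical search supplies a suitable $x^\star$. The main technical hazard is ensuring that the substitution step is genuinely tight: one must check that the coefficient of $f^-(2x)$ in the $\ppm$ inequality has the sign that makes pushing $f^-(2x)$ to the extremal value permitted by Corollary~\ref{cor:fmbound} actually strengthen (rather than weaken) the induced lower bound on $f^+(x)$, and that the range $2x \geq 1 - 1/\alpha$ on which the $f^-$ bound is non-trivial overlaps the regime where the $\ppp$ upper bound is binding. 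Once those sign checks are done, the remaining work is a finite, rational calculation, and a computer-assisted verification of a single polynomial inequality at $\alpha = 2.025$ closes the argument.
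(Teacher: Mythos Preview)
Your proposal is correct and follows essentially the same route as the paper: combine the upper bound $f^+(x)\le 1-\sqrt{1-\alpha x}$ from the $\ppp$ triangle $(x,x,0)$ with the lower bound $f^-(2x)\ge\sqrt{1-\alpha(1-2x)}$ from the $\pmm$ triangle, substitute the latter into the $\ppm$ inequality for $(x,x,2x)$ after verifying the sign of the $f^-(2x)$ coefficient, and obtain a quadratic constraint on $f^+(x)$ whose feasible interval lies strictly above the $\ppp$ upper bound at a single test point (the paper uses $x=0.48$). The sign-check and range-overlap hazards you flag are exactly the ones the paper handles explicitly.
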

\begin{proof}
Our lower bound argument follows the same kind of approach we used in Section~\ref{sec:choosing_f} to identify good rounding functions for the upper bound, however, presents a better analysis of the \ppm-triangles with edge lengths $(x,x,2x)$.

Fix $\fp$ and $\fm$ satisfying the conditions of Lemma~\ref{lem:alpha} on all triangle types
for some $\alpha$. From \prettyref{lem:fp_upper} and \prettyref{cor:fmbound}, we know that
\begin{align}
f^+(x)&\le 1 - \sqrt{1 - \alpha x};\label{eq:fp-bound}\\
f^-(2x)&\ge \sqrt{1-\alpha (1-2x)}.\label{eq:fn-bound}
\end{align}
Using equations \prettyref{eq:ecost} and  \prettyref{eq:elp}
for \ppm-triangles with edge lengths $(x,x,2x)$ (where $x\in[0,1/\alpha]$),
we write $\alpha LP -ALG \geq 0$ as follows:
$$\alpha \bigl(\underbrace{1-f^+(x)^2 +2x f^+(x)^2 - 2xf^+(x)f^-(2x)}_{LP}\bigr) -
\bigl(\underbrace{1 + f^+(x)^2 +2f^-(2x)  -4f^+(x)f^-(2x)}_{ALG}\bigr) \geq 0.$$
Rearranging terms, we get
\begin{equation}\label{eq:sq2}
(1 + \alpha - 2\alpha x) f^+(x)^2 + 2(1 - f^+(x) (2- \alpha x))f^-(2x) \leq \alpha  - 1.
\end{equation}
We claim that the coefficient $2(1 - f^+(x) (2- \alpha x))$ of $f^-(2x)$ is always
positive: From~(\ref{eq:fp-bound}) using $(2-\alpha x)\geq 0$, we get
$$1 - f^+(x) (2- \alpha x) \geq 1 - (1 -\sqrt{1 - \alpha x}) (2- \alpha x)=
\sqrt{1 - \alpha x} \cdot ((2 - \alpha x) - \sqrt{1-\alpha x})\geq 0.$$
The last inequality holds, since $(2 - \alpha x)\geq 1 \geq \sqrt{1-\alpha x}$.
We now replace $f^-(2x)$ in (\ref{eq:sq2}) with $\sqrt{1-\alpha (1-2x)}$.
By inequality~(\ref{eq:fn-bound}), we have
$$(1 + \alpha - 2\alpha x) f^+(x)^2 + (2 - 2f^+(x) (2- \alpha x)) \cdot \sqrt{1-\alpha (1-2x)} \leq \alpha  - 1.$$
Once again rearranging terms we get
\begin{align}
	\label{eq:root_bounds}
	(1 + \alpha - 2\alpha x) f^+(x)^2 - 2\bigl((2- \alpha x) \cdot \sqrt{1-\alpha (1-2x)}\bigr) f^+(x)  + 2\sqrt{1-\alpha (1-2x)} - \alpha  + 1 &\leq 0.
\end{align}
This is a quadratic equation with a positive leading coefficient. The left hand side
is negative if $f^+(x)$ lies between the roots of the equation.  For $x = 0.48$ and $\alpha = 2.025$, we get $f^+(x)\in (0.836, 0.987)$ (the first root is rounded down; the second root is rounded up). This contradicts to inequality~(\ref{eq:fp-bound}), since $1 - \sqrt{1 - \alpha x} < 0.833$ (See \prettyref{fig:lb}).
We conclude that $\alpha = 2.025$ does not satisfy the conditions of Lemma~\ref{lem:alpha}.
\end{proof}

\begin{figure}
\centering
\includegraphics[width=0.5\textwidth,natwidth=288,natheight=212]{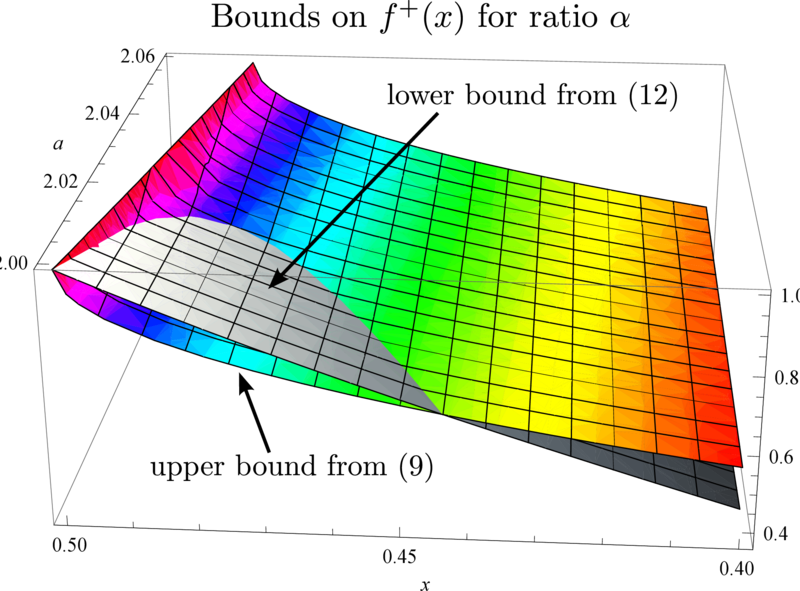}
\caption{Plot of the bounds on $f^+(x)$ as a function of the
  approximation ration $\alpha$; the upper bound from \prettyref{eq:fp-bound} is shown in rainbow, and the lower bound from \prettyref{eq:root_bounds} is shown in gray. The region of intersection is where we obtain the contradiction.}
\label{fig:lb}
\end{figure}

\subsection{\texorpdfstring{Lower bound of $1.5$ for Weighted Triangle Inequalities}{Lower bound of 1.5 for Weighted Triangle Inequalities}}
A similar bound can be obtained for the weighted triangle inequalities
case.

\begin{theorem}\label{thm:limitation-wti}
	For any monotone function $f(x_{uv},\lambda_{uv})$, the approximation ratio $\alpha$ must be at least $1.5$.
\end{theorem}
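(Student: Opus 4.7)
The plan is to mirror the strategy of Theorem~\ref{thm:limitation-complete}: pin down constraints on $f(x,\lambda)$ forced by simple worst-case triangles in the weighted triangle-inequality setting, then show these constraints are mutually unsatisfiable whenever $\alpha<1.5$. The presence of a second argument $\lambda$ gives the rounding rule extra flexibility, so the key is to identify a single well-chosen triangle configuration whose edges already lock down $f$ at incompatible values.

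First, I would derive a pair of analytic constraints on $f$ in analogy with \prettyref{lem:fp_upper} and \prettyref{cor:fmbound}. Considering a weighted analog of a \ppp-triangle with edge lengths $(x,x,0)$ and uniform weight $\lambda$, expanding $\alpha LP - ALG \ge 0$ yields a quadratic in $f(x,\lambda)$, giving an upper bound of the form $f(x,\lambda) \le 1 - \sqrt{1-\alpha\, \varphi(x,\lambda)}$ for an explicit $\varphi$. Similarly, a \pmm-analog with edge lengths $(0,x,x)$ under a compatible weight assignment yields a lower bound $f(x,\lambda) \ge \sqrt{1-\alpha\,\psi(x,\lambda)}$. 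Because $f$ is monotone in $\lambda$, these bounds can be pushed to their extremal weight configurations without loss of generality.

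Next, I would examine the critical \ppm-analog with edge lengths $(x,x,2x)$, choosing weights on the three edges that (i) satisfy the weight triangle inequality $\lambda^-_{uv}\le\lambda^-_{uw}+\lambda^-_{wv}$, and (ii) maximize the tightness of the resulting constraint. Expanding $\alpha LP - ALG \ge 0$ gives an expression bilinear in $f(x,\lambda)$ and $f(2x,\lambda')$, exactly analogous to the inequality preceding \prettyref{eq:sq2}. Substituting the \pmm-lower bound for $f(2x,\lambda')$ produces a quadratic inequality in $f(x,\lambda)$ alone, defining a \emph{forbidden interval} between its two roots in which $f(x,\lambda)$ cannot lie.

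Finally, I would pick specific values $x^\star$ and $\lambda^\star$ and verify numerically that at $\alpha=1.5$ the upper bound on $f(x^\star,\lambda^\star)$ falls strictly inside the forbidden interval coming from the \ppm-analysis, exactly as in the proof of \prettyref{thm:limitation-complete}. This yields the desired contradiction. The main obstacle will be ensuring that the chosen weighted triangle is actually feasible under the weight triangle inequalities: the naive $(x,x,2x)$ configuration with uniform weights is the obvious candidate, but if the weight triangle inequality forces a less favorable weight profile, one must take a worst-case over an additional one-parameter family of weight assignments and verify the forbidden-interval condition over that family. Since the resulting polynomial inequalities are low-degree, this additional optimization can be done analytically or via a short computer-assisted check, mirroring the approach used for the matching upper bound in Section~\ref{sec:triangle-inequalities}.
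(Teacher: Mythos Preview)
Your plan ports the complete-case lower bound almost verbatim, but the paper takes a rather different route. It first restricts to triangles in which the LP values equal the edge weights, collapsing the two-argument rounding function to a single-variable one, $g(x):=f(x,x)$. It then analyzes only the family with weights $(0,x,x)$ (and matching LP values). The resulting constraint $\alpha\,LP-ALG\ge 0$ is a quadratic in $g(x)$ with \emph{positive} leading coefficient, so $g(x)$ must lie \emph{outside} the interval between its two roots---either below $1-x-\sqrt{1-(1+2\alpha)x+(1+2\alpha)x^2}$ or above $1-x+\sqrt{\cdots}$. For $\alpha<3/2$ these two allowed branches are separated by a gap over a sub-interval of $x$, and since $g(0)=0$ the function starts in the lower branch; hence $g$ must jump. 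That jump is then ruled out by a second triangle $(2\epsilon,\,y-\epsilon,\,y+\epsilon)$ straddling the discontinuity point $y$ as $\epsilon\to 0$.

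So the mechanisms differ: you seek a numerical contradiction between an upper bound, a lower bound, and a forbidden interval coming from three separate triangle families (the $2.025$ template), whereas the paper extracts a two-branch constraint from a \emph{single} family and argues topologically via a forced discontinuity. The reduction $\ell=w$ is the key simplification you are missing---without it, your ``\ppp-analog'' $(x,x,0)$ with uniform weight $\lambda$ also involves the unknown value $f(0,\lambda)$, which need not vanish when $\lambda\ne 0$, so you do not obtain a clean quadratic in $f(x,\lambda)$ alone. More seriously, there is no reason to expect your template to hit exactly $1.5$: in the complete case the same three-triangle scheme yielded $2.025$, strictly above the gap of $2$. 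Here $1.5$ is actually achieved by a rounding (Theorem~\ref{thm:weighted-triang-ineq}), so at $\alpha=1.5$ your inequalities must be exactly tight, not strictly violated as you write; the contradiction must be established for every $\alpha<1.5$. The paper's discontinuity mechanism is naturally tight because the gap between the two branches closes precisely at $\alpha=3/2$.
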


The proof proceeds in a manner similar to that of
\prettyref{thm:limitation-complete}; we omit some of the details, but give a
proof sketch below.
\begin{proof}(Sketch)
Let $f(w,\ell)$ be the probability of cutting an edge with edge weight $w$ and LP
value $\ell$.

Let $\mathcal{C}(w_1,w_2,w_3,\ell_1,\ell_2,\ell_3)$ be the polynomial given by
$\alpha\cdot LP - ALG$ for a triangle with edge weights $w_1,w_2,w_3$ and
corresponding LP lengths $\ell_1,\ell_2,\ell_3$ (note that the costs are
calculated differently in the case of weighted edges, as compared to the
integral edge constraints case). We prove our lower bound by showing that there
is no $f(x,x)$ that gets an approximation $\alpha < 1.5$; that is, for the
purpose of this proof we will only take $\ell_1 = w_1, \ell_2 = w_2$, and
$\ell_3 = w_3$, and so we abbreviate $\mathcal{C}(w_1,w_2,w_3)$ for ease of
notation.

We first consider a triangle with weights $(0,x,x)$. Clearly, it must be that
$f(0,0) = 0$. Thus, we can solve for the roots of $f(x,x)$ in
$\mathcal{C}(0,x,x)$, which has leading positive coefficient, and so we
have that to get approximation $\alpha$,
\begin{align*}
	f(x,x) &\le 1 - x - \sqrt{1- x(1+2\alpha) + x^2(1 + 2\alpha)},\\
\intertext{\begin{center}or\end{center}}
	f(x,x) &\ge 1 - x + \sqrt{1- x(1+2\alpha) + x^2(1 + 2\alpha)}.
\end{align*}
for $\alpha < \frac{3}{2}$ there are values of $x \in [0,1]$ for which these roots are complex, and in particular this means that there must be a discontinuity in $f(x,x)$; the figure below shows the area in which $f(x,x)$ may be to achieve $\alpha = 1.499$.
\begin{center}
	\includegraphics[width=0.4\textwidth]{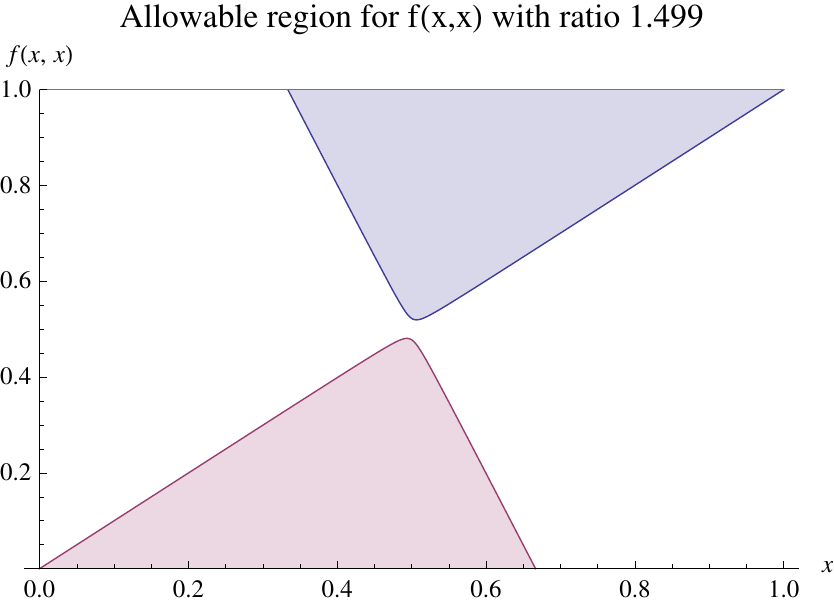}
\end{center}

It thus remains to rule out a discontinuous $f(x,x)$. Thus, suppose to the
contrary that there is a discontinuity in $f(x,x)$ at $x = y$. By considering a
triangle that has one edge just barely longer than $y$ and another edge just
barely shorter than $y$, we will obtain our contradiction.  The minimum value
of the upper root is larger than $\tfrac{1}{2}$, and the maximum value of the
lower root is less than $\tfrac{1}{2}$, therefore as $\epsilon \to 0$,
$f(y+\epsilon, y+\epsilon) = \tfrac{1}{2} + c_1$ for some fixed positive $c_1$,
and similarly $f(y-\epsilon, y-\epsilon) = \tfrac{1}{2} - c_2$ for some fixed
positive $c_2$. By examining the roots of $y$ in $\mathcal{C}(2\epsilon,
y-\epsilon, y +\epsilon)$ as $\epsilon \to 0$, we see that there are no values
of $c_1,c_2$ for which $f(x,x)$ gives an approximation better than $\alpha =
\frac{3}{2}$, and we have our conclusion.
\end{proof}

\section{Weighted Case}\label{sec:weighted}
In this section, we consider weighted instances of the problem. We assume that every
edge is partially positive and partially negative. Specifically, every edge $(u,v)$
has a positive weight $\lambda^+_{uv}$ and a negative weight $\lambda^-_{uv}$. The goal now
is to minimize the weight of unsatisfied constraints defined as follows:
$$\sum_{(u,v)\in E} \lambda^+_{uv}\I((u,v) \text{ is cut}) + \lambda^-_{uv}\I((u,v) \text{ is not cut}).$$
We study complete graphs with $\lambda^+_{uv} + \lambda^-_{uv} = 1$.

This problem has been studied by Charikar et. al.~\cite{CGW05}, Ailon et.
al.~\cite{ACN08}, and others. The edge weights represent our confidence level
that the edge is positive or negative.

A slight modification of our algorithm gives $2.06$ for this problem: For every edge $(u,v)$ the algorithm flips a coin and assigns $p_{uv} = f^+(x_{uv})$ with probability $\lambda^+_{uv}$ and $p_{uv} = f^-(x_{uv})$
with probability $\lambda^-_{uv} = 1 - \lambda^+_{uv}$. The analysis of the
algorithm are essentially the same as for unweighted case, since in the
weighted case, the expression $\calC = \alpha LP - ALG$ can be represented
as a convex combination of unweighted expressions $\calC = \alpha LP - ALG$.

Here we give a black box reduction of the weighted problem to the unweighted problem.
The reduction works with any approximation algorithm.

\begin{theorem}
If there exists an $\alpha$-approximation algorithm for the unweighted Correlation Clustering Problem on complete graphs, then there exists an $\alpha$-approximation algorithm
for the Weighted Correlation Clustering problem on complete graphs.
\end{theorem}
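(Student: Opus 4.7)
The plan is a blow-up reduction that lifts the weighted instance to an unweighted one. Assume without loss of generality that the weights $\lambda^{\pm}_{uv}$ are rational with common denominator $N$. Construct an unweighted complete graph $G'$ on $N \cdot |V(G)|$ vertices, where each $v \in V(G)$ is replaced by a block $B_v$ of $N$ copies; all pairs inside a block are labeled positive, while between blocks $B_u$ and $B_v$ exactly $\lambda^+_{uv} N^2$ of the $N^2$ cross-pairs are labeled positive (arranged as a $\lambda^+_{uv} N$-regular bipartite graph) and the rest negative. We then apply the black-box $\alpha$-approximation to $G'$ to get a clustering $C'$, and extract from $C'$ a clustering $C$ of $G$.

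For the easy direction, I would show $\mathrm{OPT}(G') \leq N^2 \cdot \mathrm{OPT}_{\text{weighted}}(G)$ by lifting the optimal weighted clustering $C^{\ast}$ to a block-respecting clustering of $G'$: place all of $B_v$ in the cluster of $v$. The internal positive pairs contribute zero (same cluster), and each between-block pair $(u,v)$ contributes exactly $\lambda^+_{uv} N^2$ if $C^{\ast}$ separates $u,v$ and $\lambda^-_{uv} N^2$ otherwise, summing to $N^2 \cdot \text{weighted-cost}(C^{\ast})$. Together with the approximation guarantee, $\mathrm{cost}_{G'}(C') \leq \alpha N^2 \cdot \mathrm{OPT}_{\text{weighted}}(G)$. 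For the extraction, I would iteratively transform $C'$ into a block-respecting clustering $C''$: while some block $B_v$ is split across clusters of $C'$, find a cluster $\mu^{\ast}$ such that moving all of $B_v$ into $\mu^{\ast}$ does not increase cost, and perform the merge. Since the edge counts in $G'$ are exact, any block-respecting $C''$ induces a weighted clustering $C$ of $G$ with $\mathrm{cost}_{G'}(C'') = N^2 \cdot \text{weighted-cost}(C)$; chaining the inequalities yields $\text{weighted-cost}(C) \leq \alpha \cdot \mathrm{OPT}_{\text{weighted}}(G)$.

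The hard part is proving that a cost-non-increasing merge target $\mu^{\ast}$ always exists for a split block $B_v$. I would attempt an averaging argument over the candidates $\mu_i := C'(v_i)$ for $i \in [N]$: the expected cost (over uniform $i$) of merging $B_v$ into $\mu_i$ combines the strict savings from eliminating the internal positive cost of the split block with a bounded external change that can be controlled using the regularity of the placement, since each external copy $u_k$ sees a representative distribution of positive neighbors in $B_v$. Making the resulting inequality hold pointwise for the specific $C'$ returned by the algorithm---rather than only in expectation over an auxiliary random placement of edges---is the delicate technical step; if unavoidable, one can absorb small residual errors by letting $N$ grow as a function of $|V(G)|$.
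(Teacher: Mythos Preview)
Your blow-up framework matches the paper's, but your deterministic regular placement of cross-labels together with the merge-based extraction is where the argument breaks. Concretely: take $u,v$ with $\lambda^+_{uv}=\lambda^-_{uv}=\tfrac12$ and $N=2$, with positive cross-edges $(u_1,v_1),(u_2,v_2)$ and negative cross-edges $(u_1,v_2),(u_2,v_1)$ (a $1$-regular positive matching, exactly as you specify). The clustering $C'=\{\{u_1,v_1\},\{u_2,v_2\}\}$ is optimal for $G'$ with cost~$2$, yet moving all of $B_u$ into either candidate cluster $\mu_i$ (or into a fresh cluster) raises the cost to~$3$: one internal positive edge is saved, but two cross-edges are newly violated. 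So no cost-non-increasing merge exists, and your averaging over the $\mu_i$ also gives expected cost $3>2$. The phenomenon does not vanish with $N$: a regular label pattern is precisely the kind of structure an adversarial $\alpha$-approximation can align its clusters with, so ``absorb residual errors by growing $N$'' does not go through for this construction.

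The paper handles both the placement and the pullback with randomness. Each cross-pair $(u_i,v_j)$ is labeled ``$+$'' independently with probability $\lambda^+_{uv}$; a Chernoff bound plus a union bound over all $\exp(O(N\log N))$ partitions of $B_u\cup B_v$ then shows that with high probability, for \emph{every} clustering simultaneously, the unweighted cost on that block-pair is within $\varepsilon N^2/n^2$ of $\lambda^+_{uv}\cdot\#\{\text{separated pairs}\}+\lambda^-_{uv}\cdot\#\{\text{together pairs}\}$. With this uniform guarantee the extraction is just random sampling: for each $u$ choose a uniform index $i_u$ and put $u$ in the cluster of $u_{i_u}$; the expected weighted cost of the resulting clustering equals $\mathrm{cost}_{G'}(C')/N^2$ up to the additive~$\varepsilon$. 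Your instinct to let $N$ grow is correct, but it only pays off once the labels are random (or at least pseudorandom against all partitions), not merely regular.
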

\begin{proof}[Sketch of the proof]
Given a weighted instance of the problem, we construct an unweighted instance:
We pick a sufficiently large integer number $N$ e.g. $N=100n^2$. Then, we replace
every vertex $u$ with $N$ vertices $u_i$, $i\in\{1,\dots, N\}$. We add a positive edge between
$u_i$ and $v_j$ with probability $\lambda^+_{uv}$ and a negative edge with
probability $\lambda^-_{uv}$. Then, we add positive edges between all $u_i$ and $u_j$.
Thus, we get a complete unweighted graph $G'$. We run the $\alpha$-approximation algorithm
on $G'$ and obtain a clustering of vertices of $G'$. Now, for every $u\in V$,
we pick a random $i\in\{1,\dots, N\}$ and assign $u$ to the cluster of $u_i$.

It is easy to see that the algorithm gives $\alpha$ approximation, since
(1) the cost of the optimal solution for $G'$ is at most $N^2$ times the cost of the optimal
solution for $G$; (2) the expected cost of clustering for $G$  equals to the cost of
the solution for $G'$ scaled by $1/N^2$ up to
an additive error term of $\varepsilon = 1/10$. The latter statement follows from the following lemma.

\begin{lemma}
With high probability over choice of random labels $s_{ij}\in \{\pm\}$ on the edges $(u_i,v_j)$,
for all clusterings of vertices $u_i$ and $v_j$ the following inequality holds:
\begin{multline}\label{eq:weighted-chernoff}
w^+_{uv}\Pr_{i,j}((u_i,v_j)\text{ is cut}) + w^-_{uv}\Pr_{i,j}((u_i,v_j)\text{ is not cut})
\leq \\ \leq
\frac{1}{N^2} \sum_{i,j: s_{ij}=``+"} \I ((u_i,v_j)\text{ is cut})
+
\frac{1}{N^2} \sum_{i,j: s_{ij}=``-"} \I ((u_i,v_j)\text{ is not cut})
+\varepsilon/n^2.
\end{multline}
Here, probability is taken over random $i,j\in \{1,\dots, N^2\}$.
\end{lemma}
\begin{proof}
The proof is very standard: There are at most $\exp(2N\ln (2N))$ ways to partition
vertices $u_i$ and $v_j$. For a every fixed partition, the probability that
(\ref{eq:weighted-chernoff}) is violated is at most $\exp(-\varepsilon^2 N^2/2)$ (by the
Chernoff bound; note, that the number of pairs $(i,j)$ is $N^2$). Hence, by the union bound,
(\ref{eq:weighted-chernoff}) holds for all partitions with high probability.
\end{proof}
\end{proof}

\subsection{Weights Satisfying the Triangle Inequality}\label{sec:triangle-inequalities}
Gionis, Mannila, and Tsaparas~\cite{GMT07} proposed a variant of the Weighted
Correlation Clustering problems with negative weights satisfying the triangle
inequality i.e., $$\lambda^{-}_{uw} \leq \lambda^-_{uv} + \lambda^{-}_{vw},$$
for all $u,v,w\in V$. They gave a 3-approximation algorithm for the problem.
Ailon, Charikar, and Newman~\cite{ACN08} improved the approximation factor to $2$.
We show that for the rounding functions $f^+(x) = x^2$ and $f^-(x)= \sqrt{x}$,
the approximation ratio is $1.53$. Our proof is computer assisted.  Note that
$x^2$ is a convex function; and $\sqrt{x}$ is a concave function, so we can
apply~Lemma~\ref{lem:tight-triangle} and limit the case analysis only to
triangles with tight triangle inequality constraints. For an appropriate $c$,
if we pick $f^+(x) = \min((4 - 2 \sqrt{2})x^2, 1)$ and $f^-(x)= \sqrt{x}$, we
get a $1.5$ approximation.
\begin{theorem}\label{thm:weighted-triang-ineq}
There exists a deterministic polynomial-time algorithm for the Correlation
Clustering Problem that gives a $1.5$-approximation for weighted complete
graphs satisfying the triangle inequalities on $\lambda^{-}_{uv}$.
\end{theorem}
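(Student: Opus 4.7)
\textbf{Proof Proposal for Theorem~\ref{thm:weighted-triang-ineq}.}

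The plan is to instantiate the pivot-based algorithm from Section~\ref{sec:alg}, adapted to the weighted setting as described at the start of Section~\ref{sec:weighted}: for each edge $(u,v)$, independently flip a coin and set the cut probability to $f^+(x_{uv})$ with probability $\lambda^+_{uv}$ and to $f^-(x_{uv})$ with probability $\lambda^-_{uv}=1-\lambda^+_{uv}$, with
\[
f^+(x) = \min\bigl((4-2\sqrt{2})\,x^2,\,1\bigr),\qquad f^-(x) = \sqrt{x}.
\]
Since $f^+$ is (piecewise) convex and $f^-$ is concave, Lemma~\ref{lem:tight-triangle} applies and I only need to verify the local inequality $ALG(uvw)\le 1.5\cdot LP(uvw)$ on triples $(u,v,w)$ whose LP-values saturate the triangle inequality, plus a small number of corner cases coming from the breakpoint $x=1/\sqrt{4-2\sqrt{2}}$ of $f^+$.

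First I would write down the weighted analogues of $\ecost_w$ and $\elp_w$. Because every edge now contributes both a positive and a negative term, the per-triple polynomial $\calC$ becomes a convex combination, over the eight sign patterns $(s_{uv},s_{vw},s_{uw})\in\{+,-\}^3$, of the unweighted $\calC$-polynomials analyzed in Section~\ref{sec:performing_analysis}, with weights $\prod \lambda^{s_e}_e$. Consequently it suffices to prove $1.5\cdot LP(uvw)-ALG(uvw)\ge 0$ for each of the four triangle types \ppp, \ppm, \pmm, \mmm\ \emph{individually}, but now with the additional constraint that the negative weights satisfy the triangle inequality $\lambda^-_{uw}\le \lambda^-_{uv}+\lambda^-_{vw}$ for every ordering of the vertices. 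This extra constraint is the source of leverage: without it the best ratio achievable by any such rounding is strictly larger than $1.5$ (indeed, the lower bound in Theorem~\ref{thm:limitation-wti} shows $1.5$ is tight already along $\ell_{uv}=w_{uv}$).

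Next, for each of the four sign patterns I would substitute the tight triangle $(x,y,x+y)$ (and its permutations) with $x,y,x+y\in[0,1]$, substitute $f^\pm$, and reduce $1.5\cdot LP-ALG$ to a polynomial $P(x,y;\lambda^-_{uv},\lambda^-_{vw},\lambda^-_{uw})$ whose positivity I need to certify on the region
\[
\bigl\{(x,y)\in[0,1]^2:x+y\le 1\bigr\}\;\times\;\bigl\{(\lambda^-_{uv},\lambda^-_{vw},\lambda^-_{uw})\in[0,1]^3\text{ satisfying the three triangle inequalities}\bigr\}.
\]
The strategy is a two-level reduction: for fixed $(x,y)$, $P$ is multilinear in the three $\lambda^-$ variables, so its minimum over the polytope of triangle-inequality-feasible weight triples is attained at a vertex (there are finitely many such vertices, each expressible in closed form), reducing to finitely many bivariate polynomials in $(x,y)$. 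Each such polynomial I would certify nonnegative by (i) checking the boundary $x=0$, $y=0$, $x+y=1$, and the breakpoint $x$ or $y$ equal to $1/\sqrt{4-2\sqrt{2}}$ in closed form, and (ii) verifying that no interior critical point of $P$ (obtained by solving $\partial_x P=\partial_y P=0$) gives a negative value. Because $\sqrt{x}$ appears in $f^-$, before differentiation I would substitute $u=\sqrt{x}$, $v=\sqrt{y}$, $w=\sqrt{x+y}$ (with the algebraic relation $w^2=u^2+v^2$) to keep everything polynomial; this is the point at which the verification becomes computer assisted, matching the remark in the paper.

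The main obstacle is the \ppm\ case, which already drove both the upper bound $2.06$ and the lower bound $2.025$ for the unweighted complete case. Here the triangle inequality on $\lambda^-$ is critical: in a \ppm\ triple with LP-lengths $(x,x,2x)$ the pivot being the $+/-$ endpoint or the $-/-$ endpoint gives rise to very different terms, and only by exploiting $\lambda^-_{uw}\le \lambda^-_{uv}+\lambda^-_{vw}$ (where $(u,w)$ is the ``long'' negative side) can one absorb the cross-terms into the LP charge. Once all four sign patterns are certified, Lemma~\ref{lem:alpha} yields $\Exp[ALG]\le 1.5\cdot LP$; derandomization via the method of Section~\ref{sec:derand} converts this into the required deterministic polynomial-time algorithm.
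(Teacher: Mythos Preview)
Your overall plan uses the same ingredients the paper invokes (the coin--flip algorithm of Section~\ref{sec:weighted}, the specific functions $f^+(x)=\min((4-2\sqrt 2)x^2,1)$ and $f^-(x)=\sqrt x$, Lemma~\ref{lem:tight-triangle}, and a computer-assisted positivity check), so at the coarsest level you are aligned with the paper, whose own ``proof'' is just the line ``the proof is also computer assisted.'' However, two of the steps you spell out are not correct as written.

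\medskip

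\textbf{The ``individually over sign patterns'' reduction is false.} After writing the weighted $\calC$ as the convex combination $\sum_{s\in\{+,-\}^3}\bigl(\prod_e\lambda^{s_e}_e\bigr)\,\calC_s(x,y,z)$, you assert that it suffices to show each $\calC_s\ge 0$ separately, ``with the additional constraint that the negative weights satisfy the triangle inequality.'' But the unweighted summand $\calC_s$ depends only on the LP lengths $(x,y,z)$, not on the $\lambda$'s; the weight constraint gives no leverage on any individual $\calC_s$. In fact, for $\alpha=1.5$ some $\calC_s$ are \emph{necessarily} negative for any choice of $f^\pm$: Theorem~\ref{thm:limitation-complete} shows that no rounding functions make all four unweighted $\calC_s\ge 0$ below $\alpha=2.025$. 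The triangle inequality on $\lambda^-$ helps precisely because it constrains the \emph{coefficients} of the bad summands in the combination; you must therefore analyze the single polynomial $P(x,y;\lambda^-_{uv},\lambda^-_{vw},\lambda^-_{uw})$ directly, not the pieces. (Your later paragraph does introduce this $P$, but the preceding reduction to ``each of the four triangle types individually'' is not a valid step, and the phrase ``for each of the four sign patterns I would substitute'' is likewise off --- there is only one weighted $\calC$ per triple.)

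\medskip

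\textbf{Multilinear functions need not attain their minimum at vertices of a non-box polytope.} You propose to minimize $P$ over the $\lambda^-$-polytope (the cube cut by the three triangle inequalities) by checking vertices, appealing to multilinearity. Multilinearity only guarantees extrema at vertices of a \emph{box}. On a general polytope this fails: e.g.\ $-\lambda_1\lambda_2$ on $\{\lambda_1,\lambda_2\ge 0,\ \lambda_1+\lambda_2\le 2\}$ has minimum $-1$ at the midpoint of the hypotenuse, not at a vertex. Here, as soon as a facet $\lambda^-_i=\lambda^-_j+\lambda^-_k$ is active, substituting makes $P$ quadratic in $\lambda^-_j,\lambda^-_k$, so the ``finitely many vertices'' reduction does not go through. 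One can instead recurse on facets and handle the resulting (no longer multilinear) restrictions, but that is a genuinely larger computer-assisted search than the one you described.
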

The proof is also computer assisted.

\section{Derandomization}\label{sec:derand}
We now derandomize the approximation algorithm presented in the previous section. The deterministic algorithm
will always return a solution of cost at most $\alpha LP$, where $\alpha$ is the parameter from Lemma~\ref{lem:alpha}.

\begin{theorem}
Suppose that the conditions of Lemma~\ref{lem:alpha} are satisfied for a set of functions $\{\fpos, \fneg, \fno\}$
and a parameter $\alpha$. Then there exists a polynomial-time deterministic algorithm with approximation ratio
$\alpha$.
\end{theorem}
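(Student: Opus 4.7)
The plan is to apply the method of conditional expectations, using Lemma~\ref{lem:alpha} as the engine that guarantees the relevant conditional expectation is non-positive at every step. Concretely, at each iteration $t$ the randomized algorithm makes two kinds of random choices: the pivot $w_t$, and then the independent Bernoulli coins $z_u = \I(u\in S_t)$ for $u\in V_t$. The plan is to derandomize these choices so that the inequality $ALG_t \leq \alpha LP_t$, which Lemma~\ref{lem:alpha} only gives in expectation, holds pointwise at every iteration; summing over $t$ then yields $ALG\leq\alpha LP$ deterministically.

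First I would derandomize the pivot. For each candidate $w\in V_t$, evaluate
\[
\Phi(w) := \Exp[ALG_t - \alpha LP_t \given V_t,\, w_t = w]
= \frac{1}{2}\sum_{\substack{u,v \in V_t\\ u\neq v}}\bigl(\ecost_w(u,v) - \alpha\,\elp_w(u,v)\bigr),
\]
which uses only the closed-form quantities from (\ref{eq:ecost}) and (\ref{eq:elp}) and hence costs $O(|V_t|^2)$ per $w$. Since $w_t$ is uniform on $V_t$, the average $\tfrac{1}{|V_t|}\sum_{w}\Phi(w)$ equals $\Exp[ALG_t-\alpha LP_t\given V_t]$, which is at most $0$ by Lemma~\ref{lem:alpha}. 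So some $w^*\in V_t$ satisfies $\Phi(w^*)\leq 0$; select it.

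Next, fix $w^*$ and derandomize the Bernoulli variables $z_u$ one at a time in any order $u_1,u_2,\dots$. After a partial assignment $z_{u_1}=c_1,\dots,z_{u_i}=c_i$, the unfixed $z$'s remain mutually independent with known marginals. The key structural point is that $ALG_t - \alpha LP_t$ decomposes as a sum over pairs $(u,v)$ in $V_t$, each summand a degree-two polynomial in $(z_u,z_v)$ that follows directly from (\ref{eq:ecost})--(\ref{eq:elp}) (e.g.\ for $(u,v)\in E^+$ the summand is $z_u(1-z_v)+(1-z_u)z_v - \alpha x_{uv}(1-(1-z_u)(1-z_v))$). By independence, its conditional expectation thus factors pair-by-pair and is computable in $O(n^2)$ time. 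Choosing $z_{u_{i+1}}\in\{0,1\}$ to minimize the conditional expectation never increases it, so after all coins are fixed the value $ALG_t - \alpha LP_t$ is at most $\Phi(w^*)\leq 0$.

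Combining the two steps at every iteration gives a deterministic algorithm with $ALG_t \leq \alpha LP_t$ for each $t$; summing and using $\sum_t LP_t = LP$ yields $ALG\leq \alpha LP$. The total running time is at most $O(n^4)$, since there are $O(n)$ iterations, each doing $O(n)$ pivot trials and $O(n)$ vertex decisions, with each evaluation costing $O(n^2)$. The main potential obstacle is purely bookkeeping: verifying that the conditional expectations truly decompose into closed-form pairwise sums and that the self-loop terms $\ecost_w(u,u)$ we dropped in the inequality preceding (\ref{def:ALGuvw}) do not spoil the bound. Both are immediate from the definitions and the assumption $f(0)=0$, which forces the pivot $w^*$ to deterministically join $S_t$.
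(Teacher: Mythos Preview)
Your proposal is correct and follows the standard method of conditional expectations, but the paper's derandomization proceeds somewhat differently. You fix the pivot first (by averaging over $w$) and then derandomize the Bernoulli inclusion variables $z_u=\I(u\in S_t)$ one by one, exploiting the pairwise decomposition of $ALG_t-\alpha LP_t$. The paper instead works with the \emph{triangle} decomposition~(\ref{eq:sum-uvw}): before choosing any pivot, it rounds every cut probability $p_{uv}$ to $\{0,1\}$ greedily, justified by the observation that $\sum_{u,v,w}(\alpha\,LP(uvw)-ALG(uvw))$ is convex in each $p_{uv}$ (it is in fact linear in each $p_{uv}$ except for the self-loop contributions $\ecost_w(u,u)=2p_{uw}(1-p_{uw})$, which are concave and hence make $-ALG$ convex). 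Only after all $p_{uv}$ are integral does the paper select the pivot that maximizes $\alpha\,\Exp[LP_t\mid V_t,w_t=w]-\Exp[ALG_t\mid V_t,w_t=w]$; the cluster $S_t$ is then determined. So the paper derandomizes the $p$'s and you derandomize the $z$'s, and you invert the order of the pivot choice relative to the coin-rounding step. Your route is the more textbook application of conditional expectations and sidesteps the self-loop bookkeeping entirely (since you never pass through the triangle form once the pivot is fixed); the paper's route packages the argument via a single convexity observation. Both yield $ALG_t\le\alpha LP_t$ at every step and the same polynomial running time.
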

\begin{proof}
We slightly tweak the algorithm: After the algorithm set probabilities $p_{uv}$ using functions $f$,
we change the value of each $p_{uv}$ to either $0$ or $1$ to maximize the function:
\begin{equation}\label{eq:sum-uvw}
\sum_{u,v,w \in V_t} \alpha LP(uvw) - ALG(uvw),
\end{equation}
which is equal to $(\alpha \Exp[LP_t|V_t] - \Exp[ALG_t|V_t])\times 6|V_t|$(see the previous section for details).
For each $p_{uv}$, the algorithm fixes the value of all other variables $p_{u'v'}$ and picks the value for $p_{uv}$ greedily.
Then, the algorithm picks the pivot $w$ for which $\alpha \Exp[LP_t|V_t, w_t=w] - \Exp[ALG_t|V_t,w_t =w]$
is maximized. Then, the algorithm as before adds every $u$ to $S_t$ with probability $(1-p_{uw})$.
Now, however, all probabilities $p_{uv}$ are equal to $0$ or $1$. So the algorithm assigns each $u$ to $S_t$ deterministically.

The main observation is that function~(\ref{eq:sum-uvw}) is convex in every variable $p_{uv}$. Indeed, all terms
$LP(uvw)$ and $ALG(uvw)$ are quadratic functions of $p_{uv}$, $p_{vw}$, and $p_{uw}$. Moreover,
all terms $\ecost_w(u,v)$ and $\lp_w(u,v)$ (except for $\ecost_w(u,u)$) are linear in each variable $p_{uv}$. The only nonlinear terms are
$\ecost_w(u,u)=2(1-p_{uw})p_{uw}$ for $u$ having a self-loop. But it is concave as a function of $p_{uw}$. So $\alpha LP(uvw) - ALG(uvw)$ is a convex function in every variable
$p_{uv}$.

Therefore, after rounding each $p_{uv}$ to 0 or 1 the value of~(\ref{eq:sum-uvw}) may only increase. Initially, after setting $p$'s using functions $f$, it is nonnegative. So after replacing all $p$'s with $0$'s and $1$'s, it is also nonnegative. Hence, $\alpha \Exp[LP_t|V_t] - \Exp[ALG_t|V_t]\geq 0$, and,
consequently, for some pivot $w$, $\alpha \Exp[LP_t|V_t, w_t=w] - \Exp[ALG_t|V_t,w_t=w]\geq 0$. We showed that for some pivot $w$, the number
of violated constraints at step $t$ is upper bounded by $\alpha$ times the LP volume removed at this step. Hence, the total number of violated constraints
is upper bounded by $\alpha LP$.
\end{proof}

\section{\texorpdfstring{Analysis of the $k$-partite case}{Analysis of the k-partite case}}\label{sec:kpart}
\begin{proof}[Proof of Theorem~\ref{thm:kpart}]
By Lemma~\ref{lem:alpha}, it suffices to show that $ALG (uvw)\leq LP (uvw)$ for
every triangle~$uvw$ in the graph. There are $7$ possible types of triangles in a complete $k$-partite graph:
\ppp-triangles, \ppm-triangles, \pmm-triangles, and \mmm-triangles may exist
when each vertex is a member of different partitions;
\ppn-triangles, \pmn-triangles, and \mmn-triangles can occur when two vertices
are members of the same partition and the third vertex is a member of a
different partition. When all three vertices belong to one partition, all edges
are neutral, and so none contribute to the cost.

We adopt the convention that the triangles have LP lengths $(a,b,c)$, with vertex
$u$ opposite edge $b$, vertex $v$ opposite edge $c$, and vertex $w$ opposite
edge $a$ (see figure below). Costs incurred when vertex $x \in \{u,v,w\}$ is a pivot will be enclosed in square brackets with subscript $x$, as in $[cost]_x$.
Instead of writing, $ALG(uvw)$ and $LP(uvw)$, we write $ALG$ and $LP$.
		\begin{center}
			\tikz\draw (1,1.3) node[above]{u} -- node[right]{a}
			(2,0) node[right]{v} -- node[below]{b}
			(0,0) node[left]{w} -- node[left]{c} (1,1.3);
		\end{center}

\medskip

\noindent\textbf{\pmn-Triangles:}
We first analyze the most interesting case: what happens if our triangle has edge labels \pmn. Consider a \pmn-triangle with side lengths
$(a,b,c)$ respectively. We have
\begin{align*}
ALG &= [(1-p_{uv})(1-p_{uw})]_u + [(1-p_{uw})p_{vw} + (1-p_{vw})p_{uw}]_w\\
    &= 1 - f^+_3(a) + f^-_3(b) + f^+_3(a)\fnk(c) - 2f^-_3(b)\fnk(c);\\[5pt]
LP  &= [(1-b)\cdot(1-p_{uw}p_{uv})]_u + [a\cdot(1-p_{uw}p_{vw})]_w\\
&= (1-b)(1-f^+_3(a)\fnk(c)) + a(1 - f^-_3(b)\fnk(c)).
\end{align*}
Since $f^+_3$ and $\fnk$ are piecewise functions, we consider four cases.

\medskip

\noindent 1. If $a < \nicefrac{1}{3}$, $c < \nicefrac{2}{3}$, then $LP = 1-b + a - \tfrac{3}{2}abc$, and $ALG = 1 + b - 3bc$. So,
\begin{align*}
3\cdot LP - ALG &= 2 + 3a + b(3c - 4 - \frac{9}{2}ac)\ge 2 + 3a + (a+c)(3c - 4 - \frac{9}{2}ac)\\
				&= (2 - a) + ac(3 - \frac{9}{2}a - \frac{9}{2}c) + (3c^2 - 4c).
\end{align*}
The inequality above follows from the triangle inequality constraints $b\leq a + c$. We now bound each term separately using the
assumptions $a<1/3$ and $c<2/3$: We have $(2-a) > 5/3$, $(3 - 9a/2 - 9c/2)\geq -3/2$, and $(3c^2-4c)\geq -4/3$. Thus,
$$3\cdot LP - ALG \ge \tfrac{1}{3} - \tfrac{3}{2}ac \ge 0.$$
as desired.

\medskip

\noindent 2. If $a < \tfrac{1}{3}$, $c \ge \tfrac{2}{3}$, then $LP = 1 - b + a - ab$, and
$ALG = 1 - b$. Because $a - ab > 0$, $ALG \le LP$, as desired.

\medskip

\noindent 3. If $a \ge \tfrac{1}{3}$, $c < \tfrac{2}{3}$, then $LP = 1-b -\tfrac{3}{2}c + \tfrac{3}{2}bc + a -\tfrac{3}{2}abc$ and $ALG = b + \tfrac{3}{2}c - 3bc$. We have
\begin{align*}
3\cdot LP - ALG &= 3 + 3a - 4(b+c) + \tfrac{15}{2}bc - \tfrac{9}{2}abc
				\ge 4 - 4(b+c) + \tfrac{15}{2}bc - \tfrac{3}{2}bc\\
				&= 4(1-b)(1-c),
\end{align*}
where in the second inequality we applied $a \ge 1/3$. Thus, $ALG \le 3\cdot LP$.

\medskip

\noindent 4. If $a \ge \tfrac{1}{3}$, $c \ge \tfrac{2}{3}$, then $LP = 0$ and $ALG = 0$ as well, so $ALG \le 3\cdot LP$.

We now bound the expected costs of the $6$ other triangles.

\medskip

\noindent\textbf{\ppp-Triangles:} If $a,b,c <\tfrac{1}{3}$ or $a,b,c \ge \tfrac{1}{3}$, then
either no edges are cut and the algorithm makes no mistakes, or all edges are cut and
no edge is ever charged unsafely, so $ALG = 0$. The remaining cases are
$a,b < \tfrac{1}{3} \le c$ and $a < \tfrac{1}{3} \le b,c$. In both cases,
			\begin{align*}
				ALG &= [f^+_3(c)(1-f^+_3(a)) + f^+_3(a)(1-f^+_3(c))]_u
					 + [f^+_3(a)(1-f^+_3(b)) + f^+_3(b)(1-f^+_3(a))]_v\\
					 &\qquad + [f^+_3(b)(1-f^+_3(c)) + f^+_3(c)(1-f^+_3(b))]_w,\\
				LP &= [b(1-f^+_3(c)f^+_3(a))]_u
					 + [c(1-f^+_3(a)f^+_3(b))]_v
					 + [a(1-f^+_3(b)f^+_3(c))]_w.
				\end{align*}
			When $a,b < \tfrac{1}{3} \le c$, $ALG = 2$, $LP = a+b+c \ge 2c \ge
			\tfrac{2}{3}$, where we apply the LP triangle inequality
			constraints. When $a < \tfrac{1}{3} \le b,c$, $ALG = 2$, $LP = c +
			b \ge \tfrac{2}{3}$. Therefore $3\cdot LP \ge ALG$ as desired.

\medskip

\noindent\textbf{\mmm-Triangles:} In this case, the function $f^-_3$ is the only one participating in the costs; the costs are
		\begin{align*}
			ALG &= [(1-f^-_3(a))(1-f^-_3(c))]_u + [(1-f^-_3(a))(1-f^-_3(b))]_v + [(1-f^-_3(c))(1-f^-_3(b))]_w\\
				&= 3 - 2a - 2b - 2c + ab + bc + ac,\\
			LP &= [(1-b)(1-f^-_3(a)f^-_3(c))]_u + [(1-c)(1-f^-_3(a)f^-_3(b))]_v + [(1-a)(1-f^-_3(c))f^-_3(b))]_w\\
			   &= 3 - a - b - c - ab - ac - bc + 3abc.
		\end{align*}
		We verify that the $3$-approximation holds:
		\begin{align*}
			3\cdot LP - ALG &= 6 - a - b - c - 4ab - 4ac - 4bc + 9abc \\
							&= 6 - a - b - 4ab - c(4a(1-b) + 4b(1-a) + (1 - ab))\\
							&\ge 5 - 5a - 5b + 5ab= 5(1-a)(1-b),
		\end{align*}
		where in the third inequality we have applied $c \le 1$. Thus, $3\cdot LP \ge ALG$ as desired.

\medskip

\noindent\textbf{\ppm-Triangles} Here, we must verify the cases $a,b < \tfrac{1}{3}$,
		$a < \tfrac{1}{3} \le b$, and $\tfrac{1}{3} \le a,b$. In all cases,
	\begin{align*}
		ALG &= [\fpk(a)(1-\fmk(c)) + \fmk(c)(1-\fpk(a))]_u
		+ [(1-\fpk(a))(1-\fpk(b))]_v
			\\&\qquad  + [f^+_3(b)(1-f^-_3(c)) + f^-_3(c)(1-f^+_3(b))]_w,\\
		LP &= [b(1 - \fpk(a)\fmk(c))]_u + [(1-c)(1-\fpk(a)\fpk(b))]_v + [a(1-\fpk(b)\fmk(c))]_w.
	\end{align*}
	When $a,b < \tfrac{1}{3}$, we have $ALG = 1 + 2c$, $LP = 1 - c + b + a\ge 1$, where we have applied the triangle inequality constraint to bound the LP. Since $c \le 1$, we have $3\cdot LP \ge ALG$.

	When $a < \tfrac{1}{3} \le b$, we have $ALG = 1$, $LP = 1 + a + b -c -ac\ge
	1 - ac \ge \frac{2}{3}$, where we have applied the triangle inequality
	constraint and the fact that $ac \le \tfrac{1}{3}$.

	When $\tfrac{1}{3} \le a,b$, we have $ALG = 2(1 -c)$, $LP = b(1-c) + a(1-c)
	\ge \tfrac{2}{3}(1-c)$, where we have applied the fact that $a+b \ge
	\tfrac{2}{3}$.  Thus, we have $3\cdot LP \ge ALG$ in this case as well.

\medskip

\noindent\textbf{\pmm-Triangles:} Here we must verify two cases: $a < \tfrac{1}{3}$ and $a \ge \tfrac{1}{3}$. In both cases, the costs are
	\begin{align*}
		ALG &= [(1-\fpk(a))(1-\fmk(c))]_u
		      + [(1-\fpk(a))(1-\fmk(b))]_v\\
			  &\qquad + [\fmk(c)(1-\fmk(b)) + \fmk(b)(1-\fmk(c))]_w,\\
		LP &= [(1-b)(1-\fpk(a)\fmk(c))]_u +[(1-c)(1-\fpk(a)\fmk(b))]_v + [a(1-\fmk(b)\fmk(c))]_w.
	\end{align*}
	When $a < \tfrac{1}{3}$, $ALG = 2 - 2bc$, $LP = 2 -b -c + a - abc$. We calculate,
	\begin{align*}
		3\cdot LP - ALG &= 4 - 3b - 3c + 3a + 2bc - 3abc
		= 4(1-b)(1-c) + b + c - 2bc + 3a(1-bc)\\
		&= (1 - bc) + 3(1-b)(1-c) + 3a(1-bc)
		\ge 0,
	\end{align*}
	as desired.

	When $a \ge \tfrac{1}{3}$, $ALG = c + b -2bc$, $LP = 2 - 2b - 2c + 2bc + a(1-bc)$. Again we verify,
	\begin{align*}
		3\cdot LP - ALG &= 6 - 7b - 7c + 8bc + 3a(1-bc) \ge 7 - 7b - 7c + 7bc \\
			   &= 7(1-b)(1-c) \ge 0,
	\end{align*}
	where to get the inequality we applied the assumption $a \ge \frac{1}{3}$.

\medskip

\noindent\textbf{\ppn-Triangles:} Here, when $c \ge \tfrac{2}{3}$, the rounding is
	deterministic: if $a < \tfrac{1}{3} \le b$, $ALG = 1$, $LP = b \ge
	\tfrac{1}{3}$; if $\tfrac{1}{3} \le a,b$, the cost to the algorithm is
	zero. Crucially, it is not possible to have both $a,b < \tfrac{1}{3}$,
	because $a+b \ge c \ge \tfrac{2}{3}$.

	Now we deal with the cases where $c < \frac{2}{3}$. The costs are
	\begin{align*}
		ALG &= [\fnk(c)(1-\fpk(a)) + \fpk(a)(1-\fnk(c))]_u
			  +[\fnk(c)(1-\fpk(b)) + \fpk(b)(1-\fnk(c))]_w\\
		LP &=[b(1-\fnk(c)\fpk(a))]_u
			  +[a(1-\fnk(c)\fpk(b))]_w
	\end{align*}
	When $a,b < \tfrac{1}{3}$, $ALG = 3c$, and $LP = a+b \ge c$, where we applied the triangle inequality.

	When $a < \tfrac{1}{3} \le b$, we have $ALG = 1$, $LP = b + a(1-c) \ge \tfrac{1}{3}$, by our assumption on $b$.

	For $\tfrac{1}{3}\le a,b$, $ALG = 2(1-c)$ and $LP = (b+a)(1-c) \ge \tfrac{2}{3}(1-c)$. Therefore, $3\cdot LP \ge ALG$ holds in all cases.

\medskip

\noindent\textbf{\mmn-Triangles:}
		Here, when $c \ge \tfrac{2}{3}$, the algorithm never incurs cost on
		unsafe edges. Therefore, we must simply calculate the costs when
		$c < \tfrac{2}{3}$:
	\begin{align*}
		ALG &= [(1-\fnk(c))(1-\fmk(a))]_u
		+[(1-\fnk(c))(1-\fmk(b))]_w\\
		&= (1-\tfrac{3}{2}c)(2-a-b),\\
		LP &=[(1-b)(1-\fnk(c)\fmk(a))]_u
		+[(1-a)(1-\fnk(c)\fmk(b))]_w\\
		&= (1-b)(1-\tfrac{3}{2}ac) + (1-a)(1-\tfrac{3}{2}bc)
	\end{align*}
		We verify that $3\cdot LP \ge ALG$:
		\begin{align*}
			LP - ALG= 3c(1-b)(1-a)\ge 0.
		\end{align*}
		Thus the approximation holds.

	This concludes the case analysis for $k$-partite complete graphs.
\end{proof}

\section*{Acknowledgements}
Part of this work was done when the
first, third, and fourth authors were visiting Microsoft
Research. Shuchi Chawla is supported in part by NSF Award CCF-1320854. Tselil Schramm is supported by an NSF
Graduate Research Fellowship, Grant No. DGE 1106400. Grigory
Yaroslavtsev is supported by an ICERM Institute Postdoctoral Fellowship at Brown and a Warren Center Postdoctoral Fellowship at Penn.

\bibliographystyle{abbrv}
\bibliography{corclustering}
\appendix
\section{Analysis of the Complete Case}
\label{app:complete}
In this section, we prove Theorem~\ref{thm:complete}. We follow the approach
outlined in Section~\ref{sec:performing_analysis}: We consider triangles $uvw$
of different type and for every type show that $\calC\equiv\alpha LP(uvw) -
ALG(uvw)\geq 0$ for all possible edge lengths $(x,y,z)$.  In this section, we
shall drop the argument $(uvw)$ of $LP$ and $ALG$; we let $\alpha = 2.06$. In
each of the cases below, the approximation ratio $\alpha = 2.06$ will not give
a completely tight analysis, giving us the $2.06 - \varepsilon$
approximation.\footnote{Some \ppp\ triangles and \ppm\ triangles with specific
edge lengths show that $\varepsilon < 0.01$.}

\subsection{\texorpdfstring{\mmm\ Triangles}{(-,-,-) Triangles}}
For triangles \mmm\ and \pmm, we obtain better approximations than $2.06$
\begin{lemma}
Fix $\fm(x) = x$. Then $ LP \ge ALG$ for all \mmm\ triangles.
\end{lemma}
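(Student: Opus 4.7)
The plan is to substitute $f^-(x) = x$ directly into the expressions for $ALG$ and $LP$ on a $(-,-,-)$-triangle, yielding explicit cubic polynomials in the edge lengths $(a,b,c)$, and then show the resulting polynomial inequality holds over the constrained domain. Concretely, from \eqref{eq:ecost} and \eqref{eq:elp} with all three edges negative and each $p = x$, I would compute
\[
ALG \;=\; 3 - 2(a+b+c) + (ab+bc+ca), \qquad LP \;=\; 3 - (a+b+c) - (ab+bc+ca) + 3abc,
\]
so that the goal reduces to
\[
LP - ALG \;=\; (a+b+c) - 2(ab+bc+ca) + 3abc \;\ge\; 0
\]
for all $(a,b,c) \in [0,1]^3$ satisfying the triangle inequalities. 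Note that this is strictly stronger than what \prettyref{lem:alpha} demands (it corresponds to $\alpha = 1$, not $\alpha = 2.06$), which reflects the fact that $(-,-,-)$-triangles are not the tight case.

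Next, I would invoke \prettyref{lem:tight-triangle}: the function $f^-(x) = x$ is both monotone and (trivially piecewise) concave with $A^- = \{0,1\}$, so it suffices to verify the inequality in two regimes: (i) corner cases with $a,b,c \in \{0,1\}$, and (ii) triangles whose inequality is tight. The corner cases are a quick finite check, noting that feasibility forces $(1,1,0)$-type configurations in which $LP - ALG = 0$ and $(0,0,0),(1,1,1)$ in which both sides vanish.

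For the tight case, assume without loss of generality that $c$ is the longest side and substitute $c = a+b$ with $a+b \le 1$. After collecting terms one obtains the clean factorization
\[
LP - ALG \;=\; 2(a+b)\bigl(1-(a+b)\bigr) \;+\; ab\bigl(3(a+b) - 2\bigr),
\]
which I would verify by direct expansion. Writing $s = a+b$, if $s \ge 2/3$ both summands are visibly nonnegative, while if $s < 2/3$ I would bound the (possibly negative) second term using AM-GM, $ab \le s^2/4$, to get a lower bound of $(s/4)(3s-4)(s-2)$, which is nonnegative on $s \in [0,1]$ since both $(3s-4)$ and $(s-2)$ are negative there.

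I do not expect any real obstacle; the entire argument is a short polynomial manipulation. The only mild subtlety is recognizing the right factorization so that the $3abc$ term partially cancels the $-2ab$ contributions under the substitution $c = a+b$. Once that factorization is in hand, the proof is three lines of case analysis on whether $a+b \gtrless 2/3$.
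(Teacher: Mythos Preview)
Your proof is correct, but the paper takes a shorter and more direct route. After computing the same difference
\[
LP - ALG = x + y + z - 2xy - 2yz - 2xz + 3xyz,
\]
the paper simply observes the symmetric factorization
\[
LP - ALG = x(1-y)(1-z) + y(1-x)(1-z) + z(1-x)(1-y),
\]
which is manifestly nonnegative for all $x,y,z \in [0,1]$. This avoids \prettyref{lem:tight-triangle} entirely, needs no case analysis, and in fact proves the inequality on the whole cube, not just the feasible region cut out by the triangle inequalities. Your approach works but is strictly more laborious: you reduce to the tight face $c=a+b$, then split on $a+b \gtrless 2/3$ and invoke AM--GM, whereas a single symmetric decomposition handles everything at once. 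The advantage of your method is that it illustrates how \prettyref{lem:tight-triangle} is meant to be used in the harder cases (\ppp, \ppm); the advantage of the paper's method is that it shows the $\mmm$ case is actually trivial and does not require that machinery.
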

\begin{proof}
Letting $\fm(x) = x$, the respective costs of the $ALG$ and $LP$ are
\begin{align*}
ALG
&= [(1-\fm(y))(1-\fm(z))]_u
+ [(1-\fm(x))(1-\fm(z))]_v
+ [(1-\fm(y))(1-\fm(x))]_w, \\
&= 3 - 2x - 2y - 2z + xy + xz + yz,\\
LP
&=
[(1-x)(1-\fm(y)\fm(z))]_u
[(1-y)(1-\fm(x)\fm(z))]_v
[(1-z)(1-\fm(x)\fm(y))]_w,\\
&= 3 - x - y - z - xy - xz - yz + 3xyz.
\end{align*}
Taking the difference,
\begin{align*}
LP - ALG
&= x + y + z - 2xy - 2yz - 2xz + 3xyz\\
&= x(1-y)(1-z) + y(1-x)(1-z) + z(1-x)(1-y)\ge 0.
\end{align*}
Hence, $LP \ge ALG$ for $x,y,z \in [0,1]$ as desired.
\end{proof}

\subsection{\texorpdfstring{\pmm\ Triangles}{(+,-,-) Triangles}}

For \pmm\ triangles, a simple condition on $\fp$ suffices to prove a 2-approximation.

\begin{lemma}
Fix $\fm(x) = x$. If $\fp(x) \le 2x$ for all $x \in [0,1]$, then $2\cdot LP \ge ALG$ for all $\pmm$-triangles.
\end{lemma}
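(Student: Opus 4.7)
The plan is to follow the exact pattern of the preceding \mmm\ case: set up the $ALG$ and $LP$ polynomials for a $\pmm$-triangle using \prettyref{eq:ecost} and \prettyref{eq:elp}, then exploit the fact that under the assumption $f^-(x)=x$ the expression $2\cdot LP - ALG$ is \emph{linear} in $f^+(x)$. So the hypothesis $f^+(x)\le 2x$ reduces the problem to checking two endpoint values.

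Concretely, I would label the triangle so that the positive edge is $(v,w)$ of length $x$ and the two negative edges are $(u,v)$ and $(u,w)$ of lengths $z$ and $y$. Writing $F = f^+(x)$ and using $f^-(y) = y$, $f^-(z) = z$, a direct substitution into \prettyref{eq:ecost} gives
\begin{align*}
ALG &= \bigl[z(1-y)+(1-z)y\bigr]_u + \bigl[(1-F)(1-z)\bigr]_v + \bigl[(1-y)(1-F)\bigr]_w\\
&= 2 - 2yz - F\,(2 - y - z),
\end{align*}
and \prettyref{eq:elp} yields
\[
LP = (1-yz)x + (1-Fz)(1-y) + (1-Fy)(1-z) = x - xyz + 2 - y - z - F\,(y+z-2yz).
\]
The quantity $2\cdot LP - ALG$ is then an affine function of $F$, and by the hypothesis $F\in[0,2x]$, its minimum is attained at one of the two endpoints.

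The main (and only) remaining step is to verify non-negativity at these two endpoints, which I expect to reduce to a clean factorization. Substituting $F=0$ gives
\[
2\cdot LP - ALG = 2x(1-yz) + 2(1-y)(1-z)\ge 0,
\]
and substituting $F=2x$ yields, after collecting terms,
\[
2\cdot LP - ALG = (6x+2)(1-y)(1-z)\ge 0,
\]
both of which are manifestly non-negative on $[0,1]^3$. Since $2\cdot LP - ALG$ is linear in $F$ on $[0,2x]$, non-negativity at both endpoints implies non-negativity throughout, completing the proof. I do not anticipate any real obstacle here: the only mild subtlety is ensuring that the coefficient of $F$ in the final expression may change sign (one can check $2 - 3y - 3z + 4yz$ takes both signs), which is precisely why it is important to check both endpoints rather than just one.
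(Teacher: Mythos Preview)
Your proof is correct and follows essentially the same route as the paper: both compute $2\cdot LP - ALG = 2(1-y)(1-z) + 2x(1-yz) + f^+(x)(2-3y-3z+4yz)$ and then use the bound $f^+(x)\le 2x$ to finish. The only cosmetic difference is in the last step: the paper substitutes $2x(1-yz)\ge f^+(x)(1-yz)$ directly and factors to get $2(1-y)(1-z)+3f^+(x)(1-y)(1-z)\ge 0$, whereas you check the two endpoints $F=0$ and $F=2x$ of the affine-in-$F$ expression; both arguments are equally short and yield the same factorizations.
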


\begin{proof}
In a $\pmm$-triangle with LP lengths $(x,y,z)$, we have costs
\begin{align*}
ALG &= [\fm(y)(1-\fm(z)) + \fm(z)(1-\fm(y))]_u
     + [(1-\fp(x))(1-\fm(z))]_v\\
     & \qquad + [(1-\fp(x))(1-\fm(y))]_w,\\
LP &= [x(1-\fm(z)\fm(y))]_u
     +[(1-y)(1-\fp(x)\fm(z))]_v
     +[(1-z)(1-\fp(x)\fm(y))]_w.
\end{align*}
Fixing $\fm(y) = y$, we have
\begin{align*}
2\cdot LP - ALG &= 2(2 + x -y - z -xyz - \fp(x)z - \fp(x) y + 2y\fp(x)z
\\ & \qquad- (2 - 2yz  -  2\fp(x) + \fp(x)z + \fp(x)y)\\
&= 2(1-y)(1-z) + 2x(1 -yz) + \fp(x)(2 - 3z - 3y + 4yz) \\
&\ge 2(1-y)(1-z) + 3\fp(x)(1 - z - y + yz) \\
&= 2(1-y)(1-z) + 3\fp(x)(1 - z)(1 - y) \\
&\ge 0,
\end{align*}
where to obtain the inequality in the third line we have applied the assumption
$2x \ge \fp(x)$. The statement follows.
\end{proof}

\begin{corollary}
For our choice of $\fp$, $2\cdot LP \ge ALG$ for all \pmm\ triangles.
\end{corollary}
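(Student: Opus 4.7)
The plan is to invoke the preceding lemma, which says that $2\cdot LP \geq ALG$ holds on every \pmm-triangle as long as $\fp(x) \leq 2x$ for all $x\in[0,1]$ (with $\fm(x)=x$). So the task reduces to verifying that the particular piecewise function $\fp$ from Theorem~\ref{thm:complete} satisfies this pointwise inequality on $[0,1]$, with $a=0.19$ and $b=0.5095$.

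I would split into the three pieces of $\fp$. For $x<a$ the inequality $0\leq 2x$ is immediate. For $x\geq b$, the constraint $1\leq 2x$ holds because $b>1/2$ (indeed $b=0.5095$), and hence $2x\geq 2b>1$ throughout this piece. The only nontrivial piece is $x\in[a,b]$, where one must show that the quadratic $\bigl(\tfrac{x-a}{b-a}\bigr)^2$ lies below the line $2x$.

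For the middle piece, the clean way is to introduce $g(x)=2x-\bigl(\tfrac{x-a}{b-a}\bigr)^2$ and observe that $g''(x)=-\tfrac{2}{(b-a)^2}<0$, so $g$ is strictly concave on $[a,b]$. Consequently the minimum of $g$ on this closed interval is attained at an endpoint. A direct evaluation gives $g(a)=2a=0.38>0$ and $g(b)=2b-1=0.019>0$, so $g\geq 0$ throughout $[a,b]$, which is exactly $\fp(x)\leq 2x$.

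I don't foresee any real obstacle here: everything is reduced by the previous lemma to a one-variable pointwise inequality, and the concavity argument cleanly handles the only nonobvious piece. The only mild subtlety worth flagging is the numerical slack at $x=b$, where $2b-1=0.019$ is quite small; this margin reflects that the choice of $b$ is essentially pushed up to the threshold needed to keep the $x\geq b$ piece satisfied, but the inequality still holds with our specific choice of parameters.
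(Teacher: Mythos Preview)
Your proposal is correct and matches the paper's proof essentially line for line: both invoke the preceding lemma to reduce to showing $\fp(x)\le 2x$ on $[0,1]$, dispatch the pieces $x<a$ and $x>b$ immediately (the latter via $b>\tfrac12$), and handle the middle piece by the convexity of $\fp$ (equivalently, the concavity of $2x-\fp(x)$), which reduces the check to the two endpoints. Your version is just slightly more explicit in writing out $g$ and its second derivative, but the underlying argument is the same.
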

\begin{proof}
We note that the function
\[
\fp(x) = \begin{cases} 0 & x < a \\
\tfrac{(x-a)^2}{(b-a)^2}  & a \le x \le b \\
1 & b < x\end{cases},
\]
is bounded above by $2x$ on the entire interval $[0,1]$; for $x < a$ and $x > b$ this is clear immediately, and for $x \in [a,b]$ we note that $\fp$ is convex and obtains value $1$ only at $b > \tfrac{1}{2}$.
\end{proof}

\subsection{\texorpdfstring{\ppm\ Triangles}{(+,+,-) Triangles}}
The expressions for \ppm\ and \ppp\ triangles are more complicated than
the expressions we dealt with above. We begin with the \ppm\ triangle.
Recall, that we defined $\calC = \alpha\cdot LP_{\ppm} - ALG_{\ppm}$.
For $\ppm$-triangles, using $\fm(z)=z$, we have
	\begin{align}
		\Cppm(x,y,z) &=
		\alpha\left(1 + x + y - z
 - yz\fp(x)
 - xz\fp(y)
 - (1-z)\fp(x)\fp(y)\right)\label{eq:def-Cppm}\\
 &\qquad	-\left(1 + 2z - 2z\fp(x) -2z\fp(y) + \fp(y)\fp(x)\right)\notag
 \end{align}

By Lemma~\ref{lem:tight-triangle}, it is sufficient to verify that $\Cppm(x,y,z)\geq 0$
only for all triangles with $x+y=z$ or $x+z=y$,
and the four triangles  $(b,b,1)$, $(a,1,1)$, $(b,1,1)$ and
$(1,1,1)$. For the latter four triangles we compute $\Cppm$ directly,
and check that $\Cppm\geq 0$.
From now on, we only consider $(x,y, x+y)$ and $(x,x+z,z)$ triangles.

We first focus on the case where the positive edge lengths both lie in
the range $[a,b]$. The following three lemmas argue that the worst
case for such triangles (i.e. a local minimum for $\Cppm$) is achieved
when the two positive edge lengths are equal.

\begin{lemma}
\label{lem:c_zero}
Let $2x \ge y$, so that the \ppm\ triangle $(x+c, x-c, y)$ obeys the triangle
inequality constraints. Then for $x \pm c \in [a,b]$,
\[
\tfrac{\partial}{\partial c} \Cppm(x+c, x-c, y) = 0 \ \  \text{ at }\ \ c = 0.
\]
For $x-c \in [0,a]$ and $x+c \in [a,b]$,
\[
\tfrac{\partial}{\partial c} \Cppm(x+c, x-c, y) \ge 0.
\]
\end{lemma}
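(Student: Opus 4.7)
The plan is to exploit the manifest symmetry of $\Cppm$ in its first two arguments, together with the piecewise definition of $\fp$, to compute $\partial_c \Cppm$ essentially by inspection.

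For the first assertion, observe from equation~\eqref{eq:def-Cppm} that $\Cppm(x, y, z)$ is symmetric under swapping $x \leftrightarrow y$ (the two positive edges play interchangeable roles, as one can check term by term). Writing $X = x+c$, $Y = x-c$, $Z = y$, the chain rule gives
\[
\tfrac{\partial}{\partial c} \Cppm(X, Y, Z) \;=\; \partial_X \Cppm - \partial_Y \Cppm.
\]
At $c=0$ we have $X = Y = x$, and symmetry forces $\partial_X \Cppm|_{X=Y} = \partial_Y \Cppm|_{X=Y}$, so the derivative vanishes. Since $X, Y \in [a, b]$ and $\fp$ is a smooth quadratic on this interval, there is no issue with differentiability.

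For the second assertion, the key simplification is that $x-c \in [0, a]$ forces $\fp(Y) = 0$. Substituting into \eqref{eq:def-Cppm} and using $\fm(Z) = Z$ collapses the expression to
\[
\Cppm(X, Y, Z) \;=\; (\alpha - 1) + \alpha(X + Y) - (\alpha + 2)Z + (2 - \alpha Y)\,Z\,\fp(X).
\]
Differentiating with respect to $c$ (using $\partial_c X = 1$, $\partial_c Y = -1$, and that the $\alpha(X+Y)$ term contributes $0$), I get
\[
\tfrac{\partial}{\partial c} \Cppm \;=\; \alpha\, Z\, \fp(X) \;+\; (2 - \alpha Y)\, Z\, \fp'(X).
\]
I would then argue that both summands are non-negative: the first since $\alpha, Z, \fp(X) \ge 0$; the second since $\fp'(X) \ge 0$ on $[a,b]$ (the quadratic piece is monotone), $Z \ge 0$, and $2 - \alpha Y \ge 2 - \alpha a = 2 - 2.06 \cdot 0.19 > 0$ because $Y \le a = 0.19$.

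The only subtleties are bookkeeping: one should interpret $\fp'(X)$ as a one-sided derivative if $X = a$ (which is still non-negative), and verify the numerical inequality $\alpha a < 2$ at the specific parameters $a = 0.19$, $\alpha = 2.06$. These are trivial checks, so no step presents a real obstacle; the whole proof rides on the symmetry observation and the vanishing of $\fp$ on $[0,a]$.
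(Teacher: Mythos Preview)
Your proof is correct and follows essentially the same route as the paper. For the first assertion you invoke the $X \leftrightarrow Y$ symmetry of $\Cppm$ directly, which is a cleaner way of saying what the paper obtains by expanding the derivative and observing that $\partial_c \fp(x+c)$ and $\partial_c \fp(x-c)$ cancel at $c=0$; for the second assertion you substitute $\fp(Y)=0$ before differentiating rather than after, arriving at the identical expression $\alpha Z\fp(X) + (2-\alpha Y)Z\fp'(X)$ and the same sign check $\alpha Y \le \alpha a < 2$.
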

\begin{proof}
	Consider \ppm triangle with LP lengths $(x+c, x - c, y)$. From (\ref{eq:def-Cppm}),
\begin{align*}
\alpha \cdot LP - ALG
&= \alpha(1 + 2x - y) - 1\\
& \qquad + y(2 + \alpha c - \alpha x)\fp(x+c) \ + y(2 - \alpha c - \alpha x)\fp(x-c)\\
&\qquad - (1+\alpha-\alpha y)\fp(x+c)\fp(x-c),
\end{align*}
We take a derivative with respect to $c$,
\begin{align*}
\tfrac{\partial}{\partial c}(\alpha \cdot LP - ALG)
&= \alpha y\fp(x+c) - \alpha y\fp(x-c)\\
&\qquad + y(2 + \alpha c - \alpha x)\tfrac{\partial}{\partial c}\fp(x+c) \ +
y(2 - \alpha c - \alpha x)\tfrac{\partial}{\partial c}\fp(x-c)\\
&\qquad- (1+\alpha - \alpha y)\fp(x+c)\tfrac{\partial}{\partial c}\fp(x-c)\ - (1+\alpha - \alpha y)\fp(x-c)\tfrac{\partial}{\partial c}\fp(x+c).
\end{align*}
Since $\fp$ is a quadratic in the range $[a,b]$, when $c = 0$ we have
$\tfrac{\partial}{\partial c} \fp(x + c) = - \tfrac{\partial}{\partial c}
\fp(x-c)$ when $c = 0$, and so at $c = 0$ this derivative is 0.
When $x - c \in [0,a]$ and $x + c \in [a,b]$, this derivative is
\begin{align*}
\tfrac{\partial}{\partial c}(\alpha \cdot LP - ALG)
&= \alpha y\fp(x+c) + y(2 + \alpha c - \alpha x)\tfrac{\partial}{\partial c}\fp(x+c),
\end{align*}
because $\fp(x-c) = 0$ and $\tfrac{\partial}{\partial c} \fp(x-c) = 0$. Since $\fp\ge 0$ and
$\alpha x < 2$, this quantity is positive, as desired.
\end{proof}

\begin{lemma}
\label{lem:c_real_roots}
Let $2x \ge y$, so that the \ppm\ triangle $(x+c, x-c, y)$ obeys the
triangle inequality constraints. Then for $x+c,x-c \in [a,b]$, the
degree-3 polynomial $\tfrac{\partial}{\partial c} \Cppm(x+c,x-c,y)$ as
a function of $c$ has three real roots: one positive, one negative,
and one zero.
\end{lemma}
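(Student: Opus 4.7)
The plan is to exploit a parity symmetry of $\Cppm(x+c,x-c,y)$ viewed as a polynomial in $c$: I will show that it is an \emph{even} function of $c$, so that $\tfrac{\partial}{\partial c}\Cppm$ is odd. Since $\Cppm$ has degree at most $4$ in $c$ (the maximum degree coming from the product $\fp(x+c)\fp(x-c)$), its derivative must have the form $\beta c + \gamma c^3$ and factor as $c(\beta + \gamma c^2)$. The three advertised roots will then follow immediately from checking that $\beta$ and $\gamma$ have opposite signs, which gives roots $0$ and $\pm\sqrt{-\beta/\gamma}$.

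To verify evenness, I would rewrite the $\fp$-dependent portion of $\Cppm$ from the expansion in Lemma~\ref{lem:c_zero} as
\[
y(2-\alpha x)\bigl[\fp(x+c)+\fp(x-c)\bigr] + \alpha y\,c\bigl[\fp(x+c)-\fp(x-c)\bigr] - (1+\alpha-\alpha y)\fp(x+c)\fp(x-c).
\]
The first bracket is invariant under $c \mapsto -c$; the second summand is the product of $c$ (odd) with $\fp(x+c)-\fp(x-c)$ (odd), hence even; and $\fp(x+c)\fp(x-c)$ is manifestly symmetric. The remaining terms of $\Cppm$ do not involve $c$, so $\Cppm(x+c,x-c,y)$ has no odd powers of $c$, and the derivative has the form $\beta c + \gamma c^3$.

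Next I would compute $\beta$ and $\gamma$ explicitly using $\fp(t) = ((t-a)/(b-a))^2$ on $[a,b]$. The useful identities are
\[
\fp(x+c)+\fp(x-c) = \tfrac{2((x-a)^2+c^2)}{(b-a)^2},\quad \fp(x+c)-\fp(x-c) = \tfrac{4(x-a)c}{(b-a)^2},\quad \fp(x+c)\fp(x-c) = \tfrac{((x-a)^2-c^2)^2}{(b-a)^4}.
\]
The $c^4$ coefficient of $\Cppm$ is then $-(1+\alpha-\alpha y)/(b-a)^4$, giving $\gamma = -4(1+\alpha-\alpha y)/(b-a)^4 < 0$ since $1+\alpha-\alpha y \ge 1 > 0$. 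Collecting $c^2$ contributions produces $2y(2+\alpha(x-2a))/(b-a)^2 + 2(1+\alpha-\alpha y)(x-a)^2/(b-a)^4$, which is strictly positive: the second summand visibly, and the first because $2+\alpha(x-2a) \ge 2 - 2\alpha a > 0$ with $\alpha = 2.06$ and $a = 0.19$. Hence $\beta > 0$.

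There is no deep obstacle here; once the evenness is noticed the rest is bookkeeping. The only thing to watch is that the positivity of the $c^2$ coefficient does not degenerate inside the region $x\pm c \in [a,b]$ and $y\in[0,1]$, but this is immediate for the concrete $(\alpha,a,b)$ in Theorem~\ref{thm:complete}. With $\beta > 0 > \gamma$, the cubic $c(\beta + \gamma c^2)$ has precisely the roots $0$ and $\pm\sqrt{-\beta/\gamma} \in \mathbb{R}$, yielding one positive, one negative, and one zero root as claimed.
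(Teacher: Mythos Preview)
Your proof is correct and amounts to the same computation as the paper's: both establish that the derivative factors as $c$ times a quadratic in $c^2$ with positive constant term and negative leading term, hence has roots $0$ and $\pm\sqrt{-\beta/\gamma}$; your radicand $-\beta/\gamma$ simplifies to exactly the expression the paper displays. The one difference is packaging: the paper invokes Lemma~\ref{lem:c_zero} for the root at $c=0$ and then directly solves for the remaining two, whereas your observation that $\Cppm(x+c,x-c,y)$ is \emph{even} in $c$ explains in one stroke both why $0$ is a root and why the other two are symmetric about the origin---a cleaner way to see the structure, though not a substantively different argument.
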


\begin{proof}
	We compute the derivative $\tfrac{\partial}{\partial c} \Cppm(x+c,x-c,y)$
	and solve for its roots. One of the roots is zero, as shown in
	\prettyref{lem:c_zero}. The other two roots we compute to be
	\begin{align*}
		\pm\sqrt{
		\frac{
			(a-x)^2 +2y(1-\alpha \cdot a)(a-b)^2 + \alpha \left(
				(1-y)(x-a)^2
				+xy(b-a)^2\right)
	}{\alpha(1-y) + 1}},
	\end{align*}
	hence for $y \in [0,1]$, all roots of the partial derivative with respect to $c$ are real.
\end{proof}

\begin{lemma}
\label{lem:c_positive}
Let $2x \ge y$, so that the \ppm\ triangle $(x+c, x-c, y)$ obeys the triangle
inequality constraints. Then for $x+c,x-c \in [a,b]$, $\Cppm(x+c,x-c,y)$ is a
degree-4 polynomial in $c$ with a negative leading coefficient.
\end{lemma}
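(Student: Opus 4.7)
The plan is a direct computation that isolates only the degree-$4$ contribution in $c$. The key structural observation is that the expression $\Cppm(x,y,z)$ in equation~(\ref{eq:def-Cppm}) is \emph{bilinear} in $\fp(x)$ and $\fp(y)$: inspecting the six places where these quantities appear, we see only the monomials $1$, $\fp(x)$, $\fp(y)$, $\fp(x)\fp(y)$, with no squared terms. Under the substitution $x\mapsto x+c$, $y\mapsto x-c$, $z\mapsto y$, the hypothesis that $x\pm c\in [a,b]$ forces
\[
\fp(x+c) = \tfrac{(x+c-a)^2}{(b-a)^2}, \qquad \fp(x-c) = \tfrac{(x-c-a)^2}{(b-a)^2},
\]
each of which is degree exactly $2$ in $c$. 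Hence $\Cppm(x+c,x-c,y)$ is a polynomial in $c$ of degree at most $4$.

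Next, I locate the source of the degree-$4$ term. Every contribution to $\Cppm(x+c,x-c,y)$ other than those involving the product $\fp(x+c)\fp(x-c)$ is at most quadratic in $c$, because each of $\fp(x\pm c)$ is individually quadratic. Using the difference-of-squares identity,
\[
\fp(x+c)\fp(x-c) \;=\; \tfrac{\bigl((x-a)^2-c^2\bigr)^2}{(b-a)^4} \;=\; \tfrac{(x-a)^4 - 2(x-a)^2 c^2 + c^4}{(b-a)^4},
\]
so the coefficient of $c^4$ in $\fp(x+c)\fp(x-c)$ is $\tfrac{1}{(b-a)^4} > 0$.

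Finally, I read off the coefficient of the monomial $\fp(x+c)\fp(x-c)$ in $\Cppm(x+c,x-c,y)$ directly from~(\ref{eq:def-Cppm}). The $\alpha\cdot LP$ contribution comes from the term $-\alpha(1-z)\fp(x)\fp(y)$, giving $-\alpha(1-y)$, and the $-ALG$ contribution comes from the term $+\fp(x)\fp(y)$, giving $-1$. Combining, the coefficient of $\fp(x+c)\fp(x-c)$ is $-\alpha(1-y)-1 = \alpha y - \alpha - 1$. Multiplying by the $c^4$-coefficient computed above, the coefficient of $c^4$ in $\Cppm(x+c,x-c,y)$ is
\[
\tfrac{\alpha y - \alpha - 1}{(b-a)^4}.
\]
For $\alpha = 2.06$ and $y\in [0,1]$, the numerator is at most $-1$, so this coefficient is strictly negative. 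This establishes both that the polynomial has degree exactly $4$ in $c$ and that the leading coefficient is negative. There is no real obstacle; the only care needed is in correctly tracking which edge of the triangle corresponds to which variable in the substitution into~(\ref{eq:def-Cppm}), and in observing the bilinearity that rules out $\fp^2$ contributions to the leading term.
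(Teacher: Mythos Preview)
Your proof is correct and follows essentially the same approach as the paper: isolate the $c^4$ contribution as coming solely from the $\fp(x+c)\fp(x-c)$ term, compute its coefficient as $-\alpha(1-y)-1$ times $(b-a)^{-4}$, and observe negativity for $y\in[0,1]$. One small slip: the non-product terms are not all ``at most quadratic'' in $c$ --- for instance $-\alpha y z\fp(x)$ becomes $-\alpha(x-c)y\fp(x+c)$, which is cubic --- but this does not affect the argument, since all that matters is that no other term reaches degree~$4$.
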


\begin{proof}
	We begin by examining \prettyref{eq:def-Cppm}, and
	noting that as $\fp(x+c)$ and $\fp(x-c)$ are quadratic in $c$,
	$\Cppm(x+c,x-c,y)$ is a degree-4 polynomial in $c$. The degree-4 terms come
	only from the product $\fp(x+c)\fp(x-c)$, so we compute the coefficients
	of this product in $\calC$ using \prettyref{eq:def-Cppm} to get the coefficients of
	$c^4$:
	\begin{align*}
		\frac{-(1 + \alpha(1-y))}{(b-a)^4} c^4,
	\end{align*}
	which is always negative for $y \in [0,1]$.
\end{proof}

\begin{corollary}[Lemmas \ref{lem:c_zero}, \ref{lem:c_positive}, \ref{lem:c_real_roots}]
	\label{cor:equal_sides}
	For $x,y \in [a,b]$, the function $\Cppm(x,y,z)$ has its only local minimum
	when $x = y$; that is, the most expensive \ppm triangles occur when the
	plus edges have equal length, or when $x = a$ or $y = b$.
\end{corollary}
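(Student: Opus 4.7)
The plan is to combine the three preceding lemmas into a one-variable critical-point analysis in the ``difference'' direction of the two positive edges. First I would reparameterize an arbitrary \ppm-triangle with positive edge lengths $x_1, x_2 \in [a, b]$ and negative edge $z$ as $(x+c, x-c, z)$, where $x = (x_1 + x_2)/2$ and $c = (x_1 - x_2)/2$. The diagonal $\{x_1 = x_2\}$ corresponds to $\{c = 0\}$, so it suffices to show that, for each fixed admissible pair $(x, z)$, the one-variable function $c \mapsto \Cppm(x+c, x-c, z)$ has no interior local minimum other than $c = 0$ within the admissible range of $c$.

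Next I would stitch the three lemmas together to pin down the shape of this function. By \prettyref{lem:c_positive} it is a quartic in $c$ with strictly negative leading coefficient, so its derivative in $c$ is a cubic with strictly negative leading coefficient. \prettyref{lem:c_real_roots} guarantees this cubic has three real roots, and \prettyref{lem:c_zero} identifies $c = 0$ as one of them; combined with the ``one positive, one negative'' clause of \prettyref{lem:c_real_roots}, the three roots must be $c_- < 0 < c_+$. A cubic with negative leading coefficient and roots $c_- < 0 < c_+$ has sign pattern $+,-,+,-$ across these roots, so the quartic itself is increasing on $(c_-, 0)$ and decreasing on $(0, c_+)$. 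Hence $c = 0$ is a strict local minimum of the quartic in the $c$-direction, while $c_-$ and $c_+$ are local maxima.

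Finally I would lift this one-dimensional conclusion to the two-dimensional region $(x_1, x_2) \in [a, b]^2$. At any interior local minimum of $\Cppm(\cdot, \cdot, z)$ the derivative in the difference direction must vanish, forcing $c \in \{c_-, 0, c_+\}$; however, at $c_-$ and $c_+$ the function strictly decreases in one direction of the $c$-axis, so these points cannot be 2D local minima. Thus any interior local minimum must satisfy $c = 0$, i.e.\ $x_1 = x_2$. All remaining minimizers must lie on the boundary of $[a,b]^2$, and the symmetry $\Cppm(x_1, x_2, z) = \Cppm(x_2, x_1, z)$ collapses the four boundary faces into the two representatives $x = a$ and $y = b$, exactly matching the statement of the corollary. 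The whole argument is essentially bookkeeping on top of the three lemmas; the only step that needs care is the sign analysis of the cubic derivative, which follows directly from the negative leading coefficient in \prettyref{lem:c_positive} together with the root information in \prettyref{lem:c_zero} and \prettyref{lem:c_real_roots}, so I do not expect any serious obstacle.
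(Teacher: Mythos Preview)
Your approach is exactly what the paper intends: the corollary is stated with no proof beyond citing the three lemmas, and your argument is the natural way to stitch them together. There is, however, a sign slip in your exposition. With the sign pattern $+,-,+,-$ for the cubic derivative on the four intervals determined by $c_-<0<c_+$, the derivative is \emph{negative} on $(c_-,0)$ and \emph{positive} on $(0,c_+)$, so the quartic is \emph{decreasing} on $(c_-,0)$ and \emph{increasing} on $(0,c_+)$---the opposite of what you wrote. As written, your intermediate sentence would make $c=0$ a local maximum, contradicting the very next clause. Your conclusion that $c=0$ is a strict local minimum (and $c_\pm$ are local maxima) is correct; just swap ``increasing'' and ``decreasing'' in that sentence.
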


\begin{lemma}
For our choices of $\fp, \fm$, $\Cppm(x,y,z) \ge 0$ for all $x,y,z \in [0,1]$.
\end{lemma}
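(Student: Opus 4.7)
The plan is to reduce the two-variable polynomial inequality $\Cppm(x,y,z)\ge 0$ to a manageable collection of low-dimensional cases by combining \prettyref{lem:tight-triangle} with \prettyref{cor:equal_sides}. Since $\fp$ is piecewise convex with breakpoints $A^+=\{0,a,b,1\}$ and $\fm(x)=x$ is linear (so $A^-=\{0,1\}$), \prettyref{lem:tight-triangle} implies that it suffices to check either (i) triangles where the triangle inequality is tight, or (ii) the finitely many ``corner'' triangles whose positive-edge lengths lie in $A^+$ and whose negative-edge length lies in $\{0,1\}$. Case (ii) reduces to a small enumeration that I would verify by direct substitution into \eqref{eq:def-Cppm}. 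For (i), the symmetry of $\Cppm$ in its two positive-edge arguments $x$ and $y$ collapses the three tight alternatives $z=x+y$, $y=x+z$, $x=y+z$ into just two families: $(x,y,x+y)$ and $(x,x+z,z)$.

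For the family $(x,y,x+y)$, I would partition the domain into nine rectangles according to which of the pieces $[0,a]$, $[a,b]$, $[b,1]$ contains $x$ and $y$ respectively. In any rectangle where at least one of $x,y$ lies outside the quadratic piece $(a,b)$, we have $\fp(\cdot)\in\{0,1\}$ on that coordinate and $\Cppm(x,y,x+y)$ collapses to a polynomial of low total degree in at most one nontrivial variable; for each such rectangle I expect a clean factorization as a sum of products of terms of the form $(\text{positive constant})\cdot(1-\text{edge})\cdot(\text{edge})$, in the style of the $k$-partite arguments of \prettyref{sec:kpart}. The central rectangle $x,y\in[a,b]$ is the delicate one: \prettyref{cor:equal_sides} says that the minimum of $\Cppm(x,y,x+y)$ over this rectangle is attained either on the diagonal $x=y$ or on the boundary $\{a,b\}^2$. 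The boundary reduces to cases already handled by the neighboring rectangles and the corner checks, while the diagonal gives a one-variable inequality $\Cppm(x,x,2x)\ge 0$ for $x\in[a,b]$, which is a univariate polynomial of degree at most four whose sign I would verify by evaluating at the endpoints and locating any interior critical points explicitly, using the tuned values $a=0.19$, $b=0.5095$, $\alpha=2.06$.

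For the family $(x,x+z,z)$ the same nine-rectangle decomposition applies, but now the negative-edge length $z$ remains free subject to $x+z\le 1$, so each rectangle yields a bivariate polynomial inequality. The rectangles where $x$ or $x+z$ lies outside $(a,b)$ again reduce, after substituting $\fp\in\{0,1\}$, to low-degree inequalities admitting a nonnegative-sum factorization. For the central rectangle with $x,x+z\in[a,b]$, I would mimic the perturbation argument behind Lemmas~\ref{lem:c_zero}--\ref{lem:c_positive} and \prettyref{cor:equal_sides}, writing the two positive-edge lengths as $x+c$ and $x-c$ and showing that the worst case arises at $c=0$ or on the boundary of the quadratic piece, reducing again to a univariate polynomial verification.

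The hard part is the central rectangle of the $(x,y,x+y)$ family together with the diagonal reduction $\Cppm(x,x,2x)\ge 0$. The parameters $a$ and $b$ were chosen precisely so that this quartic only barely stays above zero; the slack is of order $\varepsilon<0.01$, so no clean sum-of-squares style identity is expected, and the verification will have to proceed by careful numerical root-location (for instance by interval arithmetic or a Sturm-sequence analysis) to certify non-negativity uniformly over $[a,b]$. Everything else reduces either to a direct factorization or to corner evaluation, so this final univariate check is where the real work lies.
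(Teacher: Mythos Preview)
Your proposal is correct and follows essentially the same approach as the paper: reduce via \prettyref{lem:tight-triangle} to tight triangles and corner evaluations, split the $(x,y,x+y)$ family according to which piece of $\fp$ contains each positive edge, invoke \prettyref{cor:equal_sides} to collapse the central rectangle to the diagonal $(x,x,2x)$, and treat the $(x,x+z,z)$ family by an analogous perturbation toward equal positive edges. Two small corrections: the diagonal polynomial $\Cppm(x,x,2x)$ has degree five in $x$ (the term $(1-2x)\fp(x)^2$ contributes degree $5$), and its effective domain is $x\in[a,\tfrac12]$ rather than $[a,b]$ since $z=2x\le 1$ forces $x\le\tfrac12<b$; the paper certifies positivity there exactly as you anticipate, by locating the first root numerically at $x\approx 0.5004$.
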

\begin{proof}

  We begin with an analysis of $(x,y,x+y)$ triangles and consider
  several cases for the values of $x$ and $y$. Note that $x,y$ cannot
  both be larger than $b$, because that would make $z$ larger than
  $1$.

\smallskip
\noindent I. For $x,y \le a$, we have
	\begin{align*}
		\Cppm(x,y,z)
		&= \left(\alpha(x+y) - 2z\right) + \left(\alpha(1-z) - 1\right)\\
		&\ge (\alpha-2)z + \left(\alpha(1 -2a) - 1 \right)\ge 0,
	\end{align*}
	where in the second line we have twice applied the triangle inequality $2a \ge x+y \ge z$.

\medskip
\noindent II. When $x \le a, b \le y$, we have
	$$
		\Cppm(x,y,z) = \alpha(1+ x+y - z - xz) - 1
		\ge \alpha - \alpha xz - 1\ge 0.
	$$
	Here we applied the inequalities $x+y\ge z$ and $\alpha x\le \alpha a < 1$.

\medskip
\noindent III. When $a < x,y < b$, by \prettyref{lem:tight-triangle} and \prettyref{cor:equal_sides}, we may
	restrict our attention to \ppm-triangles with edge lengths $(x,x,2x)$; the other triangles
	from \prettyref{cor:equal_sides} are handled in other cases.
	For these triangles,
	\begin{align*}
		\alpha\cdot LP - ALG
		&= (\alpha - 1)  - (\alpha + 1)\fp(x)^2 + 4x(2\fp(x)-1) - 2\alpha x\fp(x)(2x-\fp(x)).
	\end{align*}
	Since $\fp$ is a degree-2 polynomial in $x$, we can solve for the roots of
	$\mathcal{C}$. The first root of $\mathcal{C}$ in the interval
	$[a,1]$ is  $0.500366$. Note that $x$ is at most $1/2$, so
        this root falls outside the interval of interest. Furthermore, at $x =
	a$, the polynomial is positive. Hence, we conclude that
	$\Cppm(x,y,z) \ge 0$ for all $x,y \in [a,b]$, $z = x+y$,
	as desired.

\medskip
\noindent IV. In the case $x < a < y < b$, \prettyref{lem:c_zero} allows us to conclude
	that $\Cppm(x,y,z)$ is minimized when $x = 0$, $y = z$.
	Calculating the roots of $\Cppm(0,y,y)$, the only real root is
	negative, and at the point $y = a$ the polynomial is positive, so
	$\Cppm(x,y,z) \ge 0$ for all $x \in [0,a]$, $y \in [a,b]$.

\medskip
\noindent V.
	In the final case of $a < x < b < y$, the function
	$\Cppm$ is
	strictly larger than in the case $a < x,y < b$, since increasing the length
	of positive edges past $b$ increases only the expected cost of the LP (see
	(\ref{eq:def-Cppm})).

\medskip
\noindent Next we analyze \ppm\ triangles with edge lengths
$(x,y,z)$ where $z=y-x$. First we note that if $y>b$, then decreasing this
length decreases $\Cppm$, and continues to satisfy the triangle
inequality. Next, if $x<a$ and $y\le b$, we can write:
\[\Cppm(x,y,z) = x^2\fp(y) + x(\alpha(2-y\fp(y)) + 2(1-\fp(y))) + (\alpha-1-2y+2y\fp(y)) \]
The first two terms in this expression are clearly non-negative. The
last term is non-negative because $\alpha-1\ge 2b\ge 2y$.

Finally, if $x,y\in [a,b]$, then we can decrease $y$ by an amount $c$
and increase $x$ by $c$, keeping $z$ the same, and apply
\prettyref{lem:c_zero} to argue that $\Cppm$ decreases. Therefore, it
remains to consider edge lengths $(x,x,0)$. In this case,
\begin{align*}
	\Cppm &= \alpha(1+2x-\fp(x)^2) - (1+\fp(x)^2)\\
				   &= (\alpha - 2) + (1 - \fp(x)^2) + \alpha(2x-\fp(x)^2),\\
\end{align*}
and this expression is positive, since $2x-\fp(x)^2\ge 2x-\fp(x)\ge 0$,
$1-\fp(x)^2 \ge 0$, and $\alpha > 2$.
\end{proof}

\subsection{\texorpdfstring{\ppp\ Triangles}{(+,+,+) Triangles}}
As in the previous cases, we write $\calC$ as a function of edge lengths $(x,y,z)$:
 \begin{align*}
\Cppp(x,y,z) &= \alpha\left(
x + y + z - y\fp(z)\fp(x)
 - x\fp(z)\fp(y)
 -z\fp(x)\fp(y)
	\right) \label{eq:def-Cppp}\\
	&\qquad - 2\left(\fp(x) + \fp(y) + \fp(z) - \fp(y)\fp(x) - \fp(y)\fp(z) - \fp(z)\fp(x)\right)\notag
 \end{align*}

By Lemma~\ref{lem:tight-triangle}, it suffices to check that $\Cppp(x,y,z)\geq
0$ only for triangles with $z=x+y$. As before, we can verify $\Cppp\geq 0$ for
several corner cases, given in  Lemma~\ref{lem:tight-triangle}, by direct
computation.

We focus first on the case where all $x,y,z$ lie in the range $[a,b]$.

\begin{lemma}
	\label{lem:neg_c}
	Let $y,x \pm c \in [a,b]$. The polynomial $\Cppp(x+c,x-c,y)$ is a degree-4
	polynomial in $c$ with positive leading coefficient.
\end{lemma}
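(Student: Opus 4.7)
The plan is to mirror the proof of Lemma~\ref{lem:c_positive} for \ppm-triangles: identify the monomial of highest degree in $c$, check its sign, and verify it is genuinely the leading term. Since $\fp$ agrees with the quadratic $\bigl((t-a)/(b-a)\bigr)^2$ on $[a,b]$, both $\fp(x+c)$ and $\fp(x-c)$ are exact quadratics in $c$, and $c$ appears nowhere else in the expression for $\Cppp(x+c,x-c,y)$. Hence $\Cppp(x+c,x-c,y)$ is a polynomial of degree at most $4$ in $c$, and every $c^4$ contribution must come from the single cross-product $\fp(x+c)\fp(x-c)$.

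I would compute this product explicitly. Setting $u = x-a$, we have $\fp(x\pm c) = (u\pm c)^2/(b-a)^2$, so
\[
\fp(x+c)\fp(x-c) = \frac{(u^2-c^2)^2}{(b-a)^4} = \frac{u^4 - 2u^2 c^2 + c^4}{(b-a)^4},
\]
which contributes $1/(b-a)^4$ in front of $c^4$. Next I would scan the definition of $\Cppp$ under the role assignment $(x_1,x_2,x_3)=(x+c,x-c,y)$ and collect every occurrence of $\fp(x+c)\fp(x-c)$. It appears exactly twice: once with weight $-\alpha y$ in $-\alpha x_3 \fp(x_1)\fp(x_2)$, and once with weight $+2$ in $+2\fp(x_1)\fp(x_2)$. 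Every other product of $\fp$'s involves $\fp(y)$, which is constant in $c$, and therefore contributes to $c^2$ at most. So the coefficient of $c^4$ in $\Cppp(x+c,x-c,y)$ equals
\[
\frac{2 - \alpha y}{(b-a)^4}.
\]

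To finish, I would verify positivity using the assumption $y\in[a,b]$. For $\alpha = 2.06$ and $b = 0.5095$ we have $\alpha y \le \alpha b \approx 1.05 < 2$, so the coefficient is strictly positive; in particular it is nonzero, confirming that $\Cppp(x+c,x-c,y)$ has degree exactly $4$ in $c$, as claimed.

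I do not expect a real obstacle — the computation is routine — but the one place to be careful is bookkeeping: I must account for both occurrences of $\fp(x+c)\fp(x-c)$ (one from the $\alpha\,LP$ piece, one from the $ALG$ piece) and confirm that the asymmetric terms $\fp(y)\fp(x\pm c)$ and linear $\alpha(x_1+x_2+x_3)$ contributions cannot raise the degree past~$4$. The sign flip relative to Lemma~\ref{lem:c_positive} is precisely because in the \ppp\ case the third side enters with weight $-\alpha y$ rather than $-\alpha(1-z)-1$, and $\alpha y$ is small enough on $[a,b]$ to leave the leading coefficient positive.
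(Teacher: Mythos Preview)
Your proposal is correct and follows essentially the same approach as the paper: both observe that $\fp(x\pm c)$ are quadratics in $c$, that the only $c^4$ contribution comes from the cross-product $\fp(x+c)\fp(x-c)$, and that its total coefficient in $\Cppp$ is $(2-\alpha y)/(b-a)^4$. Your write-up is in fact more explicit than the paper's --- you compute the product $\fp(x+c)\fp(x-c)$ directly and verify $\alpha b\approx 1.05<2$ numerically, whereas the paper simply asserts positivity for $y\in[a,b]$.
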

\begin{proof}
	We begin by examining the above expression for $\Cppp$, and
	noting that as $\fp(x+c)$ and $\fp(x-c)$ are quadratic in $c$,
	$\Cppm(x+c,x-c,y)$ is a degree-4 polynomial in $c$. The degree-4 terms come
	only from the product $\fp(x+c)\fp(x-c)$, so we compute the coefficients
	of this product in $\calC$ using \prettyref{eq:elp} and
	\prettyref{eq:ecost} to get the coefficients of
	$c^4$:
	\begin{align*}
		\frac{(2 - \alpha y)}{(b-a)^4} c^4,
	\end{align*}
	which is always positive for $y \in [a,b]$.
\end{proof}

\begin{lemma}
\label{lem:ppp_real_roots}
Let $2x \ge y$, so that the \ppp\ triangle $(x+c, x-c, y)$ obeys the triangle
inequality constraints. Then for $x\pm c, 2x \in [a,b]$,
$\tfrac{\partial}{\partial c} \Cppm(x+c,x-c,y)$ has all real roots in $c$.
\end{lemma}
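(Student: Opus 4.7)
The plan is to exploit the symmetry of $\Cppp$ to reduce the problem to checking the sign of a single coefficient. Since $\Cppp(x,y,z)$ is symmetric in its three arguments (the triangle cost is invariant under permutations of its edges), the polynomial $\Cppp(x+c,x-c,y)$ is invariant under $c\mapsto -c$, hence is an even polynomial in $c$. Combined with \prettyref{lem:neg_c}, which shows it has degree exactly $4$, we may write
\begin{equation*}
\Cppp(x+c,x-c,y) \;=\; A(x,y)\,c^{4} + D(x,y)\,c^{2} + F(x,y),
\end{equation*}
so that the derivative factors as $\tfrac{\partial}{\partial c}\Cppp(x+c,x-c,y) = 2c\bigl(2A(x,y)c^{2} + D(x,y)\bigr)$, with roots $c=0$ and $c = \pm\sqrt{-D(x,y)/(2A(x,y))}$. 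From \prettyref{lem:neg_c}, $A(x,y)=(2-\alpha y)/(b-a)^{4} > 0$ in our regime since $\alpha b < 2$. Therefore all three roots are real precisely when $D(x,y)\le 0$.

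Next, I would compute $D$ by substituting $\fp(x\pm c) = ((x-a\pm c)/(b-a))^{2}$ into the definition of $\Cppp$. Writing $\xi = x-a$ and $s = 1/(b-a)$, the useful identities are
\begin{equation*}
\fp(x{+}c)+\fp(x{-}c) = 2s^{2}(\xi^{2}+c^{2}), \quad \fp(x{+}c)\fp(x{-}c) = s^{4}(\xi^{2}-c^{2})^{2},
\end{equation*}
and $(x-c)\fp(x{+}c)+(x+c)\fp(x{-}c) = 2s^{2}\bigl(x\xi^{2} + (2a-x)c^{2}\bigr)$. Collecting the $c^{2}$ terms in $\Cppp(x+c,x-c,y)$ yields, after simplification,
\begin{equation*}
D(x,y) \;=\; 2s^{2}\Bigl[\,\alpha \fp(y)\bigl(x-2a\bigr) \;+\; 2\bigl(\fp(y)-1\bigr) \;+\; s^{2}\xi^{2}\bigl(\alpha y - 2\bigr)\,\Bigr].
\end{equation*}
Each summand inside the bracket is non-positive in the regime of the lemma: $\fp(y)-1\le 0$ trivially; $\alpha y - 2 < 0$ since $y\le b$ and $\alpha b < 2$; and crucially, the hypothesis $2x\in[a,b]$ forces $x\le b/2$, while the chosen parameters satisfy $b/2 \approx 0.255 < 0.38 = 2a$, so $x-2a \le 0$. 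Therefore $D(x,y) \le 0$, and $\tfrac{\partial}{\partial c}\Cppp(x+c,x-c,y)$ has three real roots (one negative, one zero, one positive), as claimed.

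The main obstacle is the algebraic bookkeeping needed to produce the closed-form expression for $D$: several $c^{2}$-contributions (from $\fp(x{+}c)\fp(x{-}c)$, from $\fp(x{+}c)+\fp(x{-}c)$, and from the cross terms $(x\mp c)\fp(y)\fp(x\pm c)$) carry different signs and must be collected correctly before any of them manifestly cancels. Once the expression for $D$ is in hand, the verification is transparent and hinges on the specific parameter relation $b < 4a$ that our choice of $a=0.19$, $b=0.5095$ conveniently satisfies.
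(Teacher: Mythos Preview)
Your proposal is correct and follows essentially the same approach as the paper: both reduce to showing that the nonzero roots $\pm\sqrt{-D/(2A)}$ are real, i.e.\ that the quantity under the radical is non-negative. Your explicit use of the symmetry $\Cppp(x+c,x-c,y)=\Cppp(x-c,x+c,y)$ to see immediately that the polynomial is even in $c$ is a clean way to organize the computation, and your formula for $D$ agrees (after regrouping) with the paper's expression for the squared root, namely $\bigl[2-\fp(y)(2-2\alpha a+\alpha x)\bigr](b-a)^{2}+(2-\alpha y)(x-a)^{2}$.

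The one genuine difference is in the final sign check. The paper bounds the term $2-\fp(y)(2-2\alpha a+\alpha x)$ by using $y\le 2x\le b$ to push $\fp(y)$ up to $1$ and $x$ up to $b/2$, obtaining a numerical lower bound $\approx 0.25$. You instead verify $D\le 0$ term-by-term, the key point being that $2x\le b$ forces $x\le b/2<2a$ (equivalently $b<4a$), so $\alpha\fp(y)(x-2a)\le 0$ directly. Your route is a bit more transparent and isolates the specific parameter inequality $b<4a$ that makes the argument work; the paper's route is more hands-on but reaches the same conclusion. One small point worth making explicit in your write-up: your claim ``$y\le b$'' follows from the hypotheses via $y\le 2x\le b$, which you use but do not state.
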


\begin{proof}
	We compute the derivative $\tfrac{\partial}{\partial c} \Cppm(x+c,x-c,y)$
	and solve for its roots. One of the roots is zero, by a similar calculation
	to the \ppm\ case shown in \prettyref{lem:c_zero}. The other two roots are
	\begin{align*}
		\pm\sqrt{
		\frac{
			(2 - \fp(y) ( 2  - \alpha 2a +\alpha x) )(a-b)^2 + (2-\alpha y)(a-x)^2
	}{2-\alpha y}}.
	\end{align*}
	To argue that these roots are real for $x\pm c,y \in [a,b]$, it now suffices
	to check that $(2 - \fp(y) ( 2  - \alpha 2a +\alpha x) ) \ge 0$;
	\begin{align*}
		(2 - \fp(y) ( 2  - \alpha 2a +\alpha x) )
		&\ge 2 - \fp(2x)(2 - \alpha 2a + \alpha x)\\
		&\ge 2 - \fp(b)(2 - \alpha 2a + \alpha \tfrac{b}{2})\\
		&\approx 0.25
	\end{align*}
	where in the first line we have applied the triangle inequality on
	$x+c,x-c,y$, and in the second line we apply the restriction of $y \in
	[a,b]$. Hence, we conclude that all three roots of
	$\tfrac{\partial}{\partial c}\Cppp(x+c,x-c,y)$ are real for $y,x\pm c \in
	[a,b]$.
\end{proof}

\begin{corollary}[Lemmas \ref{lem:tight-triangle}, \ref{lem:neg_c}, and \ref{lem:ppp_real_roots}] \label{cor:ppp_worst}
	The local minima of the polynomial $\Cppp(x_{uv},x_{vw},x_{uw})$ over
	$x_{uv},x_{vw},x_{uw} \in [a,b]$ when
	\begin{align*}
		x_{uv} &= x + r(x),\\
		x_{vw} &= x - r(x),\\
		x_{uw} &= 2x,
	\end{align*}
	where $r(x)$ is a nonzero root of $\tfrac{\partial}{\partial c} \Cppp(x+c,x-c,2x)$,
	\[
		r(x) =
		\sqrt{
		\frac{
			2(1-\fp(2x))(a-b)^2 + (2- 2 \alpha x)(a-x)^2
			+\alpha \fp(2x)( 2 a - x)(a-b)^2
	}{2-2 \alpha x }}.
	\]
\end{corollary}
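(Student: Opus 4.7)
The plan is to reduce the three-dimensional optimization of $\Cppp$ over $(x_{uv},x_{vw},x_{uw})\in[a,b]^3$ to a family of one-dimensional problems in a single parameter $c$, then chain the three cited lemmas.

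First I would invoke Lemma \ref{lem:tight-triangle} to restrict attention to triangles where the triangle inequality is tight. By the symmetry of $\Cppp$ in its three positive-edge arguments, I may assume $x_{uw}$ is the longest edge and parametrize $x_{uw}=2x$, $x_{uv}=x+c$, $x_{vw}=x-c$, with $2x\in[a,b]$ and $x\pm c\in[a,b]$. A necessary condition for an interior local minimum is then $\partial_c \Cppp = 0$ at the point.

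Next I would fix $x$ and study $h_x(c):=\Cppp(x+c,x-c,2x)$ as a function of $c$. Since swapping the two positive edges $x_{uv}$ and $x_{vw}$ (equivalently, sending $c\mapsto -c$) leaves $\Cppp$ unchanged, $h_x$ is an \emph{even} polynomial in $c$. By Lemma \ref{lem:neg_c}, $h_x$ has degree $4$ in $c$ with positive leading coefficient, so it can be written $h_x(c)=\beta_0+\beta_2 c^2+\beta_4 c^4$ with $\beta_4>0$. By Lemma \ref{lem:ppp_real_roots}, the three roots of $h_x'(c)=2c(\beta_2+2\beta_4 c^2)$ are all real, namely $c=0$ and $c=\pm r(x)$, where $r(x)=\sqrt{-\beta_2/(2\beta_4)}$.

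A short shape analysis finishes the job: an even degree-$4$ polynomial with positive leading coefficient and three distinct real critical points is $W$-shaped, so $c=0$ is a local maximum while $c=\pm r(x)$ are local minima. Since $c\mapsto -c$ merely swaps two interchangeable positive edges, $c=+r(x)$ and $c=-r(x)$ describe the same unordered triangle, so every interior local minimum of $\Cppp$ has the displayed form. The closed-form expression for $r(x)$ is then read off from the explicit computation of the nonzero roots already carried out in the proof of Lemma \ref{lem:ppp_real_roots}. The main subtlety is confirming that $r(x)$ is strictly positive throughout the relevant range of $x$; if $r(x)$ degenerates to $0$ for some $x$, the three critical points of $h_x$ collapse into one and the unique minimum is at $c=0$, consistent with the symmetric triangle $(x,x,2x)$ being the extremal configuration for that particular $x$.
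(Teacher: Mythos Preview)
Your proposal is correct and matches the paper's (implicit) reasoning: the corollary is stated without proof, simply citing the three lemmas, and your chaining of them via the evenness of $h_x(c)$ and the $W$-shape analysis of an even quartic with positive leading coefficient is exactly the intended argument. The only step you leave to the reader---that the displayed formula for $r(x)$ is the specialization $y=2x$ of the nonzero root computed in Lemma~\ref{lem:ppp_real_roots}---is a routine algebraic check.
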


We are now in a position to prove that triangles with all side lengths in the
range $[a,b]$ give an $\alpha = 2.06$ approximation.

\begin{lemma}
	\label{lem:ppp_all_in}
	Consider a \ppp\ triangle with edge lengths $(x_{uv},x_{vw},x_{uw})$ such
	that $x_{uv},x_{vw},x_{uw} \in [a,b]$ satisfy the triangle inequality. Then
	$\Cppp(x_{uv},x_{vw},x_{uw})\ge 0$ in this range.
\end{lemma}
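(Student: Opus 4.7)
The plan is to use Lemmas~\ref{lem:tight-triangle}, \ref{lem:neg_c}, \ref{lem:ppp_real_roots}, and Corollary~\ref{cor:ppp_worst} to reduce the three-dimensional verification of $\Cppp \ge 0$ on $[a,b]^3$ to a one-parameter inequality. By Lemma~\ref{lem:tight-triangle}, any minimizer of $\Cppp$ over the feasible region either lies on a tight triangle-inequality face (up to relabeling, $x_{uw} = x_{uv} + x_{vw}$) or is one of the finitely many corner triangles whose edges all lie in $\{a,b\}$. The corner triangles can be dispatched by direct substitution, so the interesting case is tight triangles of the form $(x_{uv}, x_{vw}, x_{uv} + x_{vw})$.

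I would reparametrize tight triangles by their average and half-difference, writing
\[
(x_{uv}, x_{vw}, x_{uw}) = (x+c,\, x-c,\, 2x),
\]
with $x \in [a, b/2]$ and $|c| \le \min(x-a, b-x)$ in order to keep all three edges in $[a,b]$. Fixing $y = 2x$, Lemma~\ref{lem:neg_c} shows that $\Cppp(x+c, x-c, 2x)$ is a degree-$4$ polynomial in $c$ with positive leading coefficient (since $2 - 2\alpha x > 0$ for $x \in [a, b/2]$), and Lemma~\ref{lem:ppp_real_roots} identifies the three real roots of its $c$-derivative as $0$ and $\pm r(x)$. Because $\Cppp$ is even in $c$ (by the symmetry $x_{uv} \leftrightarrow x_{vw}$), its minimum over the feasible $c$-interval is attained either at $c = \pm r(x)$, when these lie in the feasible range, or at $|c| = \min(x-a, b-x)$, where one of $x \pm c$ equals $a$ or $b$ and we fall back onto a boundary face of $[a,b]^3$ handled by the same reduction (with $\fp$ possibly pinned to $0$ or $1$).

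Thus the proof reduces to establishing
\[
g(x) \;:=\; \Cppp\bigl(x + r(x),\, x - r(x),\, 2x\bigr) \;\ge\; 0
\]
for all $x \in [a, b/2]$ with $|r(x)| \le \min(x-a, b-x)$, together with the boundary-face checks just described. Substituting the explicit formula for $r(x)$ from Corollary~\ref{cor:ppp_worst} and $\fp(x \pm r(x)) = ((x \pm r(x) - a)/(b-a))^2$, one sees that $g(x)$ takes the form $P(x) + Q(x)\sqrt{h(x)}$ for polynomials $P, Q, h$ with coefficients in $a, b, \alpha$.

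The main obstacle will be certifying $g(x) \ge 0$ explicitly, since the square root prevents a naive term-by-term argument. I would isolate the radical and square, tracking the sign of $P(x)$, to obtain a polynomial inequality $P(x)^2 - Q(x)^2 h(x) \ge 0$, and then certify this inequality on $[a, b/2]$ either by an explicit sum-of-squares decomposition or by combining Sturm's theorem for root isolation with a sample-point sign evaluation. With the concrete numeric choices $a = 0.19$, $b = 0.5095$, $\alpha = 2.06$, this is a finite, symbolic-numeric computation; the challenge is bookkeeping rather than mathematical insight, and as with the other cases in this appendix the constants leave enough slack that a small buffer of $\varepsilon$ is available.
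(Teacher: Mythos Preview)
Your high-level strategy matches the paper's: reduce via Lemma~\ref{lem:tight-triangle} to tight triangles $(x+c,x-c,2x)$, use Lemmas~\ref{lem:neg_c} and~\ref{lem:ppp_real_roots} (equivalently Corollary~\ref{cor:ppp_worst}) to locate the minimum in $c$ at $\pm r(x)$ or on the boundary, then check those one-parameter families. The paper, however, takes a much shorter route at the crucial step. Rather than verifying $g(x)=\Cppp(x+r(x),x-r(x),2x)\ge 0$, it simply checks feasibility: requiring $2x\in[a,b]$ forces $x\le b/2\approx 0.255$, while solving $x-r(x)\ge a$ gives $x\ge 0.265$. These are incompatible, so the interior critical points $c=\pm r(x)$ never land inside $[a,b]^3$, and the whole analysis collapses to the boundary faces. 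Your planned symbolic verification of $g(x)\ge 0$ is therefore unnecessary.

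Two further issues in your write-up. First, you correctly observe that $\Cppp(x+c,x-c,2x)$ is even in $c$, but then assert $g(x)=P(x)+Q(x)\sqrt{h(x)}$; these are inconsistent. Evenness means $g$ depends only on $r(x)^2$, which is rational in $x$, so $g$ is a rational function with no radical, and no squaring step would be needed in any case. Second, deferring the boundary faces to ``the same reduction (with $\fp$ possibly pinned to $0$ or $1$)'' is too vague: on the face $x-c=a$ the triangle is $(a,2x-a,2x)$, and the paper handles this by computing the quartic in $x$ explicitly and locating its roots outside the admissible interval; on the face $x+c=b$ the paper observes that $2x\le b$ and $2x-b\ge a$ cannot simultaneously hold, so that face is empty. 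You need these concrete checks to close the argument.
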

\begin{proof}
	By \prettyref{cor:ppp_worst}, the minimizing triangle is a triangle with
	lengths $\left(x+r(x),x-r(x),2x\right)$. Furthermore, we need only consider
	$x$ such that $x \pm r(x),2x \in [a,b]$.

	The first bound we get from these restrictions is that $0.095 = \tfrac{a}{2}
	\le x \le \tfrac{b}{2} \approx 0.255 $. We then solve for $x - r(x) \ge a$,
	and we have the restriction $x \ge .265$. Therefore, there are no triangles
	of the minimizing form that adhere to the restrictions $x\pm r(x),2x \in
	[a,b]$.

	Thus, it suffices to check triangles of the form $x\pm c, 2x \in
	[a,b]$ where either $x-c = a$ or $x+c = b$, since
	as a consequence of \prettyref{cor:ppp_worst} the minimizing triangles are
	on the boundary of the interval for $x \pm c$.

	The first boundary case is a triangle with side lenghts $(a,2x-a,2x)$. In
	this case,
	\begin{align*}
		\Cppp(a,2x-a,2x)
		&\approx -2.73172+27.4989 x+133.045 x^2-1407.86 x^3+2469.93 x^4
	\end{align*}
	Solving for the roots in $x$, we get $x \approx -
	0.1325,0.0917,0.2570,0.3537$. Since the leading coefficient is positive,
	$\Cppp$ is positive between $x = 0.0917$ and $x = 0.2570$; from our
	restriction $2x \le b$ we have $x < 0.257$, and from $2x-a \ge a$ we have
	$x \ge a > 0.0917$. Hence, $\Cppp(a,2x-a,2x) > 0$.

	The second boundary case requires that $c = b-x$, so that the edge of
	length $x-c = 2x - b$, and there cannot be a triangle with side lengths
	$(2x-b,b,2x)$ such that both $2x, 2x - b \in [a,b]$.

	Hence, in all of the boundary cases $\Cppp(x_{uv},x_{vw},x_{uw}) \ge 0$, as desired.
\end{proof}

The case in which $x < a$, $y,y+x \in [a,b]$ is also a bit more involved; we analyze it
separately.
\begin{lemma}
	\label{lem:ppp-less-in-in}
	Consider a \ppp\ triangle with edge lengths $(x_{uv},x_{vw},x_{uw})$ such
	that $x_{uv} < a$, $x_{vw},x_{uw} \in [a,b]$  satisfy the triangle inequality. Then
	$\Cppp(x_{uv},x_{vw},x_{uw})\ge 0$ in this range.
\end{lemma}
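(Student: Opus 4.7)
The plan is to reduce this two-parameter optimization to a one-parameter one by exploiting concavity of $\Cppp$ in $x_{uv}$, and then verify two univariate inequalities at the endpoints.

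First, by Lemma \ref{lem:tight-triangle} I restrict to triangles whose triangle inequality is tight. Since $\Cppp$ is symmetric in its three arguments and $x_{uv} < a \le x_{vw}, x_{uw}$, the longest side must be either $x_{vw}$ or $x_{uw}$, and without loss of generality I take $x_{uw} = x_{uv} + x_{vw}$. Writing $x = x_{uv}$, $y = x_{vw}$, and using $\fp(x) = 0$, the expression for $\Cppp$ reduces to
\[
P(x,y) \;=\; 2\alpha(x+y) \,-\, 2\fp(y) \,-\, 2\fp(x+y) \,+\, 2\fp(y)\fp(x+y) \,-\, \alpha x\,\fp(y)\fp(x+y),
\]
over the feasible region $x \in [0, a)$, $y \in [a, b]$, $x + y \le b$.

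Next, I will argue that $P(\cdot, y)$ is concave on its feasible interval. Writing $p = \fp(y)$ and differentiating twice in $x$, a direct calculation gives
\[
\frac{\partial^2 P}{\partial x^2} \;=\; -\,\fp''(x+y)\bigl(2(1-p) + \alpha x p\bigr) \,-\, 2\alpha p\,\fp'(x+y),
\]
which is nonpositive since $\fp$ is convex and nondecreasing on $[a,b]$ and $p \in [0,1]$. Consequently the minimum of $P(\cdot, y)$ over its feasible interval is attained at an endpoint, either $x = 0$ or $x = \min(a, b-y)$. The endpoint $x = a$ (which is relevant when $y \le b-a$) falls within the regime of Lemma \ref{lem:ppp_all_in}; since $\fp(a) = 0$, the function $P$ is continuous across that boundary, so nonnegativity transfers.

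The two remaining inequalities to verify are
\[
P(0, y) \;=\; 2\alpha y - 2\fp(y)\bigl(2 - \fp(y)\bigr) \;\ge\; 0 \qquad \text{for } y \in [a, b],
\]
and
\[
P(b-y, y) \;=\; 2\alpha b - 2 - \alpha(b - y)\fp(y) \;\ge\; 0 \qquad \text{for } y \in (b-a, b].
\]
For the first, substituting $\fp(y) = ((y-a)/(b-a))^2$ yields a quartic in $y$ whose nonnegativity follows by a standard critical-point analysis (the minimizer lies near $y \approx 0.48$). For the second, $\alpha(b-y)\fp(y) = \alpha(b-y)(y-a)^2/(b-a)^2$ is a cubic maximized at $y^\star = (a + 2b)/3$, at which $\fp(y^\star) = 4/9$ and the maximum equals $\tfrac{4\alpha(b-a)}{27}$; the whole inequality thus reduces to the single numerical check $2\alpha b - 2 \ge \tfrac{4\alpha(b-a)}{27}$.

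The main obstacle is the extreme tightness of the second inequality: with $\alpha = 2.06$, $a = 0.19$, $b = 0.5095$, one has $2\alpha b - 2 \approx 0.099$ and $\tfrac{4\alpha(b-a)}{27} \approx 0.097$, leaving a positive margin of only $\approx 0.002$. This near-equality is what constrains both the permissible size of $\varepsilon$ in Theorem \ref{thm:complete} and the prescribed values of $a$ and $b$; the present lemma thus essentially represents the binding constraint of the entire approximation-ratio analysis.
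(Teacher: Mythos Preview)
Your proof is correct and takes a genuinely different route from the paper. The paper parametrizes the tight triangle as $(x-c,\,x+c,\,2x)$ with the short edge $x-c<a$, observes that $\Cppp$ is a cubic in $c$ with positive leading coefficient, explicitly solves for the local-minimum root $r(x)$ of $\partial_c\Cppp$, determines the range of $x$ for which $x-r(x)\ge 0$, and then evaluates $\Cppp(x-r(x),x+r(x),2x)$ as a high-degree rational function and locates its real roots numerically. The remaining boundary cases $(0,y,y)$ and $(a,y,a+y)$ are checked separately, the latter by appeal to Lemma~\ref{lem:ppp_all_in}.

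Your parametrization $(x,y,x+y)$ with $x=x_{uv}<a$ held as the free variable is cleaner: the concavity of $P(\cdot,y)$ (via the sign of $\fp''$ and $\fp'$) immediately pushes the minimum to an endpoint, avoiding any explicit critical-point computation in the interior. The three resulting endpoint families---$x=0$, $x=a$, and $x=b-y$---are exactly the natural boundaries of the feasible region, and two of them ($x=0$ and $x=a$) coincide with cases the paper also checks. The third endpoint, $x=b-y$ (equivalently $x_{uw}=b$), you reduce to the single scalar inequality $2\alpha b - 2 \ge 4\alpha(b-a)/27$; the paper's parametrization does not isolate this boundary so transparently, and the very small slack you find ($\approx 0.002$) is not visible in the paper's presentation. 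Your approach thus buys both a shorter argument and a clearer picture of where the analysis is tight; the paper's approach, by contrast, mechanically handles the interior via root-finding. One minor remark: your closing assertion that this case is \emph{the} binding constraint of the whole analysis is plausible but not established---the \ppm\ case (root at $x\approx 0.5004$) appears comparably tight.
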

\begin{proof}
	We first appeal to \prettyref{lem:tight-triangle}, and consider only
	triangles such that $x_{uw} = x_{uv} + x_{vw}$.
	Again, we set $x_{uv} = x - c$, $x_{vw} = x+c$, and $x_{uw} = 2x$, so that
	we may take derivatives with respect to the difference in lengths of the two shorter edges.

	Consider $\Cppp$,
	\begin{align*}
	\Cppp(x-c,x+c,2x)
	&=
	4 \alpha x - 2\fp(x + c) + \fp(2x)((2 + \alpha c - \alpha x)\fp(x + c)-2)
\end{align*}
Expressed as a polynomial in $c$, $\Cppp$ is a degree-3 polynomial with
positive leading coefficient $\alpha\fp(y)/(b-a)^2$. Thus, the second root of
the derivative with respect to $c$ is a local minimum.  Taking a derivative
with respect to $c$, then solving for the second root, we get
\begin{align*}
	r(x) & = \frac{4b(b-2a) + \alpha a^3 + 16 a x - 3\alpha a^2 x - 16{x}^2 + 4\alpha {x}^3 }{3\alpha(2x-a)^2}.
\end{align*}
We now consider the values of $x$ for which $x - r(x) \ge 0$,
\begin{align*}
	x &\ge r(x),
	3x\alpha(2x - a)^2 -\Big( 4b(b-2a) + \alpha a^3 + 16 a x - 3\alpha a^2 x - 16{x}^2 + 4\alpha {x}^3\Big)  &\ge 0.
\end{align*}
Solving for roots of this cubic in $x$, we have the first non-negative root as
$x \approx 0.2438$. Since the coefficient of ${x}^3$ is $8\alpha > 0$, this
implies that for all $x < 0.2438$, the minimizing value of $c$ is one of the
boundary cases, $x-c = 0$ or $x-c = a$; we will verify these boundary cases
later.

Otherwise, we compute the value of $\Cppp(x-r(x),x+r(x),2x)$. After some simplification,
we get
\begin{align*}
	\Cppp(x-r(x),x+r(x),2x)
	&=-\frac{59980.5}{(0.19-2x)^6} (x-0.254936) (x -0.237264) (x+0.132462)\\
 & \qquad \times \left(0.739503-1.53615 x+{x}^2\right) \left(0.079347-0.452156 x+{x}^2\right)\\
	&\qquad \times \left(0.009025-0.19 x+{x}^2\right) \left(0.000908218+0.0367349 x+{x}^2\right)
\end{align*}
The real non-negative roots of this expression are $x = 0.237264, 0.254936$,
the latter of which is larger than $\tfrac{b}{2}$. For $x \in [0.237264,
0.254936]$, the expression is positive, and this interval contains all $x$
such that $x - r(x) \ge 0$ and $2x \le b$, we conclude that in these cases
$\Cppp$ is positive.

In the case where $x - r(x) < 0$, it remains to check at the boundaries $x =
0,a$ that $\Cppp(0, y, y) \ge 0$ and $\Cppp(a,y,a+y) \ge 0$. The latter case we
have already verified in \prettyref{lem:ppp_all_in}. To check the former case,
we calculate
$\Cppp(0,y,y)$:
\begin{align*}
	\Cppp(0,y,y)
	&= 2(\alpha y - 2\fp(y) + \fp(y)^2)
\end{align*}
Solving for roots in $y$, we have $y = -0.29, y = 0.09$ as roots. The leading
coefficient is $\frac{2}{(a-b)^4}$, so for $y \in [a,b]$, $\Cppp(0,y,y)$ is
positive.
\end{proof}

We now complete the argument by considering all the other cases.

\begin{lemma}
	The quantity $\Cppp(x,y,z) \ge 0$ for all $x,y,z \in [0,1]$.
\end{lemma}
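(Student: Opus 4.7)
The plan is to finish the case analysis by dispatching the configurations not covered by \prettyref{lem:ppp_all_in} and \prettyref{lem:ppp-less-in-in}. I would begin by invoking \prettyref{lem:tight-triangle}: it suffices to verify $\Cppp \ge 0$ on triangles with a tight triangle inequality $z = x + y$ (ordered so $x \le y \le z$) together with the finitely many corner triangles whose three sides all belong to $A^+ = \{0, a, b, 1\}$ and satisfy the triangle inequality. The corner triangles are handled by direct substitution. For the tight-inequality case I would partition according to which of the three pieces $[0, a]$, $[a, b]$, $(b, 1]$ of $\fp$ contains each of $x$, $y$, $z$; since $2a < b$ and $a + b < 1$, only a handful of combinations are feasible, and the two most technical ones, namely ``all three in $[a, b]$'' and ``$x < a$ with $y, z \in [a, b]$,'' are already covered by \prettyref{lem:ppp_all_in} and \prettyref{lem:ppp-less-in-in}.

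Several remaining configurations collapse because $\fp$ is constant on the outer pieces. If $z \le a$ then every $\fp$-term vanishes and $\Cppp = \alpha(x + y + z) \ge 0$. If $x, y \le a$ then $z = x + y \le 2a < b$, so $\fp(x) = \fp(y) = 0$ and $\Cppp$ reduces to $2(\alpha z - \fp(z))$; this is nonnegative on $[a, 2a]$ by direct comparison, since $\fp$ is dominated on this subinterval by $\fp(2a) = (a / (b - a))^2 < \alpha a$. If all three edges lie in $(b, 1]$ then $\fp \equiv 1$ and both $ALG$ and $LP$ vanish. The mixed cases in which exactly one or exactly two edges lie in $[0, a] \cup (b, 1]$ reduce $\Cppp$ to a polynomial with only one or two genuinely varying $\fp$-terms; I would handle each via a partial-derivative argument analogous to \prettyref{lem:ppp-less-in-in}, checking the sign of the leading coefficient in the free variable to locate local extrema and then verifying positivity of the resulting univariate polynomial on its feasible interval by root localization.

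The most delicate remaining subcase is $x, y \in [a, b]$ with $z = x + y \in (b, 1]$, since two edges are truly interior to the quadratic piece of $\fp$. Here $\fp(z) = 1$. I would parametrize $x = m - c$, $y = m + c$, $z = 2m$, study $\Cppp$ as a polynomial in $c$ along the lines of \prettyref{lem:neg_c} and \prettyref{cor:ppp_worst}, identify the sign of the leading $c^4$ coefficient, extract the nonzero critical points of $\partial_c \Cppp$ in closed form, and argue that the feasible rectangle $a \le m - c$, $m + c \le b$, $b < 2m \le 1$ is narrow enough that the minimizing $c$ either lies on its boundary (reducing to $x = a$ or $y = b$, configurations already analyzed) or gives a value of $\Cppp$ that is positive by an explicit univariate bound in $m$. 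The main obstacle is numerical: the approximation ratio $2.06$ is nearly tight (the theorem gives $2.06 - \varepsilon$ for some $\varepsilon < 0.01$), so several of these univariate polynomial checks are close to sharp and require precise control over root locations. The pictorial evidence in \prettyref{fig:ppp_ratio} is consistent with the analysis going through for the chosen parameters $a = 0.19$ and $b = 0.5095$.
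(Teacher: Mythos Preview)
Your plan is sound in outline and would lead to a correct proof, but you are missing the one simplification that the paper actually relies on, and as a result you are proposing to work much harder than necessary on exactly the wrong subcase.

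The paper's proof is extremely short because of the following observation, which replaces everything you label ``most delicate'' and most of your ``mixed cases'': whenever the longest edge satisfies $z > b$, we have $\fp(z) = 1$ identically, so the only dependence of $\Cppp(x,y,z)$ on $z$ is through the LP term $\alpha(1 - \fp(x)\fp(y))\,z$, which is nonnegative and linear in $z$. Hence $\Cppp$ is nondecreasing in $z$ on $(b,1]$, and one may simply slide $z$ down to $b$, reducing to a configuration with all three lengths in $[0,b]$. After this reduction only four tight cases remain: $x,y < a$ (your elementary calculation), $x < a \le y,z \le b$ (\prettyref{lem:ppp-less-in-in}), and $a \le x,y,z \le b$ (\prettyref{lem:ppp_all_in}); the corners are checked directly. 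That is the entire proof.

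By contrast, you single out $x,y \in [a,b]$, $z \in (b,1]$ as the hardest case and propose a full $c$-parametrization, leading-coefficient sign analysis, explicit extraction of the nonzero critical points, and a numerically delicate univariate root-localization in $m$. None of this is needed: the monotonicity-in-$z$ argument disposes of every $z > b$ configuration in one line. Your approach is not incorrect, but it misses the structural reason this case is easy rather than hard, and would leave you doing substantial numerical work for no benefit.
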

\begin{proof}

	We now give an analytical proof that $\Cppp\geq 0$.	By
	\prettyref{lem:tight-triangle}, we may consider only the cases in which
	$x+y = z$. We first deal with several simple cases.

\noindent I. For $z>b$, we note that the cutting
 probability $\fp(z) = \fp(b) = 1$, so $\Cppp$ can be written as a
 term independent of $z$ plus the term $\alpha (1-p_x p_y)z$, which is
 positive. So, as we gradually decrease $z$, $\Cppp$ decreases and in the
 worst case we have $z=b$. This is covered by the remaining cases below.

\smallskip
\noindent II. For $x,y< a$, we get
	$$
		\Cppp(x,y,z)
		= \alpha(x+y+z) - 2\fp(z)
		\ge 2\alpha z - 2\fp(z),
	$$
	where we have applied the triangle inequality, and since $\fp(z) < 2z$,
	this quantity is positive.

\medskip
 \noindent III. 	In the case $a < x \le y,z \le b$,
we simply appeal to \prettyref{lem:ppp_all_in}.

\medskip
\noindent IV. In the case $x < a \le y,z \le b$, we appeal to \prettyref{lem:ppp-less-in-in}.

This completes our case analysis.
\end{proof}

\end{document}